\documentclass[journal,10pt]{IEEEtran}

\usepackage[nocompress]{cite}
\usepackage{url}
\usepackage[pdftex]{graphicx}
\usepackage[cmex10]{amsmath}
\interdisplaylinepenalty=2500
\usepackage[linesnumbered,lined,vlined,ruled,commentsnumbered]{algorithm2e}
\usepackage{array}
\usepackage[caption=false,font=footnotesize,labelfont=sf,textfont=sf]{subfig}
\usepackage{fixltx2e}
\usepackage{stfloats}
\usepackage{multirow}
\usepackage{color}
\usepackage{amssymb}
\usepackage[english]{babel}
\usepackage{amsthm}
\usepackage{bm}

\usepackage{bbm}

\usepackage[T1]{fontenc}
\usepackage[numbers,sort&compress]{natbib}
\usepackage{url}
\usepackage{booktabs}

\newtheorem{theorem}{Theorem}
\newtheorem{lemma}{Lemma}

\newtheorem{definition}{Definition}

\newtheorem{game}{Game}

\def\S{\mathcal{D}}

\def\I{\mathcal{I}}

\hyphenation{op-tical net-works semi-conduc-tor}

\begin{document}
\title{Selfish Caching Games on Directed Graphs}

\author{Qian~Ma,~\IEEEmembership{Member,~IEEE,}
		Edmund~Yeh,~\IEEEmembership{Senior Member,~IEEE,}
        and~Jianwei~Huang,~\IEEEmembership{Fellow,~IEEE}
\thanks{Qian Ma is with the School of Intelligent Systems Engineering, Sun Yat-sen University. E-mail: maqian25@mail.sysu.edu.cn.}
\thanks{Edmund Yeh is with the Department of Electrical and Computer Engineering, Northeastern University, Boston, USA 02215. E-mail: eyeh@ece.neu.edu.}
\thanks{Jianwei Huang is with the Shenzhen Institute of Artificial Intelligence and Robotics for Society, the School of Science and Engineering, The Chinese University of Hong Kong, Shenzhen, Shenzhen 518172, China. E-mail: jianweihuang@cuhk.edu.cn. (\emph{Corresponding author: Jianwei Huang.})}
\thanks{This work is supported by the National Natural Science Foundation of China under Grant 62002399, the Shenzhen Institute of Artificial Intelligence and Robotics for Society, the Presidential Fund from the Chinese University of Hong Kong, Shenzhen, the National Science Foundation grant CNS-1718355 and a grant from Intel Corp.}
\thanks{Part of this work has been published at the Twentieth ACM International Symposium on Mobile Ad Hoc Networking and Computing (Mobihoc '19), July 2--5, 2019, Catania, Italy \cite{QianMobiHoc2019}.}
}

\maketitle

\begin{abstract}
Caching networks can reduce the routing costs of accessing contents by caching contents closer to users. However, cache nodes may belong to different entities and behave selfishly to maximize their own benefits, which often lead to performance degradation for the overall network. While there has been extensive literature on allocating contents to caches to maximize the social welfare, the analysis of selfish caching behaviors remains largely unexplored. 
In this paper, we model the selfish behaviors of cache nodes as selfish caching games on arbitrary directed graphs with heterogeneous content popularity. We study the existence of a pure strategy Nash equilibrium (PSNE) in selfish caching games, and analyze its efficiency in terms of social welfare. We show that a PSNE does not always exist in arbitrary-topology caching networks. However, if the network does not have a mixed request loop, i.e., a directed loop in which each edge is traversed by at least one content request, we show that a PSNE always exists and can be found in polynomial time. Furthermore, we can avoid mixed request loops by properly choosing request forwarding paths. We then show that the efficiency of Nash equilibria, captured by the price of anarchy (PoA), can be arbitrarily poor if we allow arbitrary content request patterns, and adding extra cache nodes can make the PoA worse, i.e., cache paradox happens. However, when cache nodes have homogeneous request patterns, we show that the PoA is bounded even allowing arbitrary topologies. 
We further analyze the selfish caching games for cache nodes with limited computational capabilities, and show that an approximate PSNE exists with bounded PoA in certain cases of interest. Simulation results show that increasing the cache capacity in the network improves the efficiency of Nash equilibria, while adding extra cache nodes can degrade the efficiency of Nash equilibria. 
\end{abstract}

\begin{IEEEkeywords}
Caching networks, selfish caching games, Nash equilibrium, price of anarchy.
\end{IEEEkeywords}

\section{Introduction}

\IEEEPARstart{C}{aching} networks can reduce the routing costs for accessing contents by caching the requested contents as close to the requesting users as possible. 
Prevailing caching networks include content delivery networks (CDN)~\cite{CDN,CDN2}, information-centric networks (ICN)~\cite{YehICN2014}, femtocell networks~\cite{femtocell}, web caching networks~\cite{DSR}, and peer-to-peer networks~\cite{p2p}. 
There has been extensive previous work (e.g., \cite{poularakis2018distributed, YehSigmetrics, ao2015distributed}) on how to optimally allocate contents to available caches. 
However, most existing work assumes that cache nodes are altruistic and cooperate with each other to optimize an overall network performance objective.

In practice, cache nodes may belong to different entities \cite{Milking}. 
For example, in wireless community mesh networks such as Google WiFi \cite{afanasyev2010usage} and Guifi \cite{vega2012topology}, individual users contribute their wireless routers (as caches) to the community. 
On the Internet, different operators and providers deploy their own caching infrastructures and services. 
Examples include AT$\&$T Content Delivery Network Service, Google Global Cache, Netflix Open Connect, and Akamai.\footnote{In this paper, we treat one provider as one cache node. Another example is a caching network where each provider owns multiple cache nodes. In such a case, we can model the interactions among multiple cache nodes belonging to the same provider as a coalititional game.}

In caching networks where different entities operate their own caches, cache nodes may behave selfishly to maximize their own benefits. 
For example, in multi-hop wireless community mesh networks \cite{draves2004routing}, a cache node has an incentive to cache the content items to minimize its own routing cost, which may not always maximize the social welfare. 
This motivates us to study the selfish caching behaviors through a game-theoretic approach.

To our best knowledge, this is the first paper to examine \emph{selfish caching games} on arbitrary directed graphs with heterogeneous content popularity. 
We focus on the pure strategy Nash equilibrium (PSNE),\footnote{The main reason for implementing PSNE in practice is simplicity \cite{ValidUtilityGame}.} and address two fundamental questions. 
\emph{First, is a PSNE guaranteed to exist in any selfish caching game?}
\emph{Second, if a PSNE exists, does it have a guaranteed efficiency in terms of social welfare?}
The short answers to the above two questions are ``No'' and ``No''. 
In other words, the selfish caching game does not always admit a PSNE. Even if a PSNE exist, its efficiency in terms of social welfare can be very poor.

In this paper, we characterize the conditions under which (i) a PSNE exists, and (ii) a PSNE has a guaranteed efficiency. 
We characterize the efficiency of PSNE by the \emph{price of anarchy} (PoA), which is the ratio of the social welfare achieved by the worst PSNE to that achieved by a socially optimal strategy  \cite{CSgame, roughgarden2002bad}. 
The analysis of PSNE and PoA takes into account the asymmetric and node-specific interdependencies among cache nodes, which reflect the network topology and content request patterns. 
Our analysis will help the network designer understand when the network behaves with certain performance guarantees, and how to create these conditions in the network.

We analyze the selfish caching game in two scenarios. 
We first consider a scenario where all contents have equal sizes, which corresponds to practical applications such as video-on-demand services using harmonic broadcasting that divide each video into segments of equal size \cite{juhn1997harmonic}. 
We then consider a scenario where contents have unequal sizes, which corresponds to practical applications such as video streaming services over HTTP (e.g., Netflix and Hulu) that split each video into segments of lengths from 2 to 10 seconds \cite{huang2012confused}.

Our primary contributions are:
\begin{itemize}
\item \emph{Selfish Caching Game}: To the best of our knowledge, this is the \emph{first} work that studies the selfish caching game on directed graphs with arbitrary topologies and heterogeneous content popularity. 
\item \emph{Pure Strategy Nash Equilibrium (PSNE)}: For selfish caching games with equal-sized content items, we first show that a PSNE does not always exist. We then show that a PSNE exists if the network does not have a mixed request loop, i.e., a directed loop in which each edge is traversed by at least one content request. Furthermore, we propose a polynomial-time algorithm to find a PSNE for the selfish caching game with no mixed request loop. 
\item \emph{Price of Anarchy}: We show that the PoA in general can be arbitrarily poor if we allow arbitrary content request patterns. 
Furthermore, adding extra cache nodes can make the PoA worse, a phenomenon which we call the \emph{cache paradox}.
However, when cache nodes have homogeneous request patterns, we show that the selfish caching game is an $\alpha$-scalable valid utility game and the PoA is bounded in arbitrary-topology caching networks. 
\item \emph{Approximate PSNE}: For selfish caching games with unequal-sized content items, each node's payoff maximization problem is NP-hard. When cache nodes have limited computational capability, we show that their selfish caching behaviors lead to an approximate PSNE with bounded PoA in certain cases of interest. 
\end{itemize}

The rest of the paper is organized as follows. 
In Section \ref{sec:literature}, we review related literature.
In Section \ref{sec:model}, we introduce our system model. 
In Section \ref{sec:SCG}, we model the selfish caching game and analyze the PSNE.
In Section \ref{sec:PoA}, we study the PoA. 
In Section \ref{sec:Approx}, we analyze selfish caching games with unequal-sized content items. 
In Section \ref{sec:simu}, we provide simulation results. 
We conclude in Section \ref{sec:conclusion}. 

\section{Related Work}\label{sec:literature}

There has been a rich body of previous work on caching, many of which are summarized in an excellent recent survey \cite{Survey}. 
In the following, we introduce related work regarding caching optimization and selfish caching game, respectively.

\textbf{Caching Optimization.} 
There is considerable recent literature on a variety of caching optimization problems, including proactive caching \cite{shukla2017hold, tadrous2016joint}, optimal caching under queuing models \cite{YuanyuanInfocom2020, KellyCache2019}, optimal caching under unknown content popularities \cite{garetto2015efficient, zhang2018coded}, distributed adaptive algorithms for optimal caching \cite{poularakis2018distributed, YehSigmetrics, ao2015distributed, DrCache2018}, caching at the edges \cite{zhao2018red, li2018hierarchical, zhao2018collaborative, kwak2018hybrid, cao2018optimal}, TTL (time-to-live) caches \cite{FerragutSIGMETRICS2016, DehghanTTLton2019}, optimal caching in evolving networks \cite{qin2018content}, joint caching and routing optimization \cite{dehghan2015complexity, amble2011content, StratisJSAC2018}, optimal cache partitioning \cite{chu2016allocating}, and collaborative caching \cite{gharaibeh2016provably, shin2017t, rahimzadeh2017svc, yu2016enhancing, Lui,  maille2015impact}. 
All the above work assumes that all cache nodes aim to maximize the social welfare.

\textbf{Selfish Caching Game.} 
There are several papers which study selfish caching behaviors in simple settings. 
In \cite{SelfishCaching}, Chun \emph{et al.} study the selfish caching game on undirected graphs with a single content item, assuming homogeneous content popularity across users. 
In \cite{MarketSharing}, Goemans \emph{et al.} study the content market sharing game, where users get rewards for caching content items. 
The paper assumes that any node which caches a requested item can serve the request with same cost, without considering network topology. 
The authors in \cite{DSR} and \cite{DSR2} study a distributed selfish replication game in an undirected complete graph, where the distance between any two nodes is the same. 
In \cite{CSR}, Gopalakrishnan \emph{et al.} study the capacitated selfish replication game in an undirected network, where users are equally interested in a set of content items.

The analysis in the above literature is applicable to undirected graphs, and some are restricted to homogeneous content popularity. 
In this work, we study the selfish caching game on directed graphs with arbitrary topologies and heterogeneous content popularity.

\textbf{\begin{figure}[t]
\centering
\includegraphics[width=0.5\textwidth]{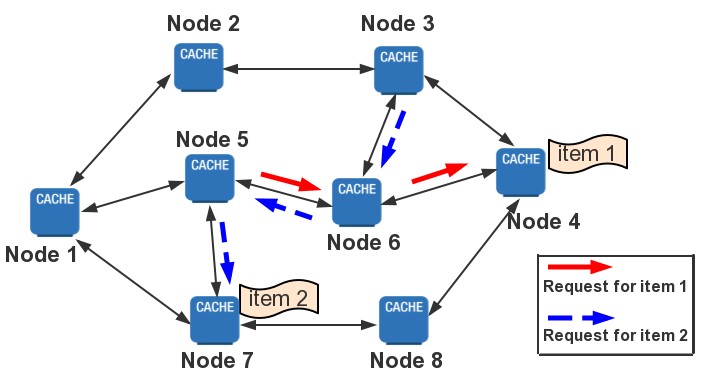}
\caption{A caching network with $|V|=8$ nodes and $|\I|=2$ content items, where node 4 (node 7, respectively) is the designated server of item 1 (item 2, respectively). The request forwarding paths are fixed in our model. For example, the path of node 5 requesting item 1 is $p^{(5,1)}=(5,6,4)$, and the path of node 3 requesting item 2 is $p^{(3,2)}=(3,6,5,7)$. }\label{fig:Model}
\end{figure}}

\section{System Model}\label{sec:model}

We consider a network of selfish caches, represented by a directed caching graph $G(V,E)$ with an \emph{arbitrary topology}, where $V$ is the set of cache nodes and $E$ is the set of bidirectional edges which enable ARQ with asymmetric edge costs (see an example in Figure \ref{fig:Model}). 
Each cache node requests one or more content items (e.g., movies) from the set $\I=\{1,\ldots,|\I|\}$. 
For each content item $i\in \I$, there is a fixed set of \emph{designated server nodes} $\mathcal{D}^i \subseteq V$, $|\mathcal{D}^i|>0$, that store $i$ in their permanent storage (outside of their caches).\footnote{For example, designated server nodes can be content providers' caches.} 
We consider equal-sized content items in Sections \ref{sec:model}--\ref{sec:PoA}, which correspond to applications such as video-on-demand services using harmonic broadcasting that divide each video into segments of equal size \cite{juhn1997harmonic}.\footnote{Without loss of generality, we normalize the size of each item to be one.} 
We will consider the case of unequal-sized items in Section \ref{sec:Approx}.

\subsection{Caching Strategies}
Each node $s\in V$ has a cache of capacity $c_s \in \mathbb{N}$, i.e., node $s$ can store exactly $c_s$ equal-sized content items. 
We denote the caching strategy of node $s\in V$ by $\boldsymbol{x}_s =\{x_{si}: \forall i\in \I\} \in\{0,1\}^{|\I|}$, where 
$$x_{si} \in\{0,1\}, \mbox{ for all } i\in\I,$$ 
indicates whether node $s$ stores content item $i$, and satisfies 
\begin{equation*}
\textstyle \sum_{i\in\I}  x_{si} \leq c_s, \mbox{ for all } s\in V.
\end{equation*}
We let $\boldsymbol{x}_{-s}=\{\boldsymbol{x}_1,\ldots, \boldsymbol{x}_{s-1},\boldsymbol{x}_{s+1},\ldots, \boldsymbol{x}_{|V|}\}$ denote the caching strategy of nodes other than node $s$, and let $\boldsymbol{x}=\{\boldsymbol{x}_s,\boldsymbol{x}_{-s} \}$ denote the global caching strategy. 
Given $\boldsymbol{x}_s$, we let $Z_s=\{i:x_{si}=1,i\in\I\}$ denote the set of items cached by node $s\in V$.

\subsection{Content Requests}
We describe each content request by a pair $(s,i)$, where the request source\footnote{We consider a request source to be a point of aggregation which combines many network users. While a single user may request a given content item only once over a time period, an aggregation point is likely to submit many requests for a given content item over a time period.} $s\in V$ requests content item $i\in\I$. 
We assume that each request $(s,i)$ arrives according to a stationary ergodic process \cite{jiang2018convergence, PanigraphyPoisson2018} with arrival rate $\lambda _{(s,i)} \geq 0$ for all $s \in V$ and $i \in \I$, which reflects \emph{heterogeneous content popularity} across items and request nodes.\footnote{We consider selfish caching behaviors under complete information, where cache nodes know all other nodes' content request patterns \cite{Milking,SelfishCaching,MarketSharing}. 
Specifically, cache nodes can estimate content request patterns through historical information or long-term learning \cite{shukla2017hold}.}

Request $(s,i)$ is forwarded over a pre-determined fixed \emph{request forwarding path}\footnote{Similar as in the named data networks, we assume that the request forwarding path is determined in a longer timescale compared with caching. And we consider selfish caching behaviors under complete information where cache nodes know the request forwarding paths \cite{Milking,SelfishCaching,MarketSharing}.} $p^{(s,i)}$, from request source $s$ to one of content item $i$'s designated server nodes in $\mathcal{D}^i$.
Specifically, the path $p^{(s,i)}$ of length $K \leq |V|$ is a sequence $(p_1,\ldots,p_K)$ of nodes $p_k\in V$ such that $p_1=s$, $p_K\in\mathcal{D}^i$, and $(p_k,p_{k+1})\in E$ for all $k\in\{1,\ldots,K-1\}$. 
We require that $p^{(s,i)}$ contains no loops ($p_k\neq p_l$ for all $1\leq k< l\leq K$) and no node other than the terminal node on $p^{(s,i)}$ is a designated server for content item $i$ ($p_k\not\in\mathcal{D}^i$ for all $1\leq k<K$). 
For request $(s,i)$, we let $V_{(s,i)}=\{v:v\in p^{(s,i)}, v\neq s, v\notin \S^i \}$ denote the set of intermediate nodes on path $p^{(s,i)}$. 
We denote $V_s=\cup_{i\in\I}V_{(s,i)}$ as the set of intermediate nodes on all the request forwarding paths of node $s$.\footnote{Note that each cache node can play some or all of the following roles: a designated server of content items, a source of requests, and an intermediate node on request forwarding paths.}

Request $(s,i)$ travels along path $p^{(s,i)}$ until either (i) the request reaches a node $v\in p^{(s,i)}$ such that node $v$ caches content item $i$, i.e., $x_{vi}=1$ or, (ii) if $x_{vi}=0$ for all $v\in p^{(s,i)}\setminus \{p_K\}$, the request reaches $p_K\in \mathcal{D}^i$.
Having found the closest copy of content item $i$, the network generates a \emph{response message} carrying the requested content item $i$.
The response message is propagated in the reverse direction along the request forwarding path, i.e., from the closest node with content item $i$ back to the request source node $s$.\footnote{In this paper, we assume that forwarding and transmission follow standard network protocols. In some settings, forwarding and transmission incur a service cost to the cache node due to the consumption of the transmit power and communication resource. We will consider such costs in the future work.}

\begin{table}[t]
\newcommand{\tabincell}[2]{\begin{tabular}{@{}#1@{}}#2\end{tabular}}
\centering
\caption{Key Notation}
\begin{tabular}{l l }
\hline
$G(V,E)$  & Caching graph, with nodes in $V$ and edges in $E$  \\
$c_s$ & Cache capacity of node $s\in V$   \\
$w_{uv}$ & Cost on edge $(u,v)\in E$   \\
$\I$ & Set of content items    \\ 
$\mathcal{D}^i$ & Set of designated servers for content item $i \in \I$ \\
$(s,i)$ & Request for item $i$ from node $s$  \\
$\lambda_{(s,i)}$ & Arrival rate of request $(s,i)$ \\
$p^{(s,i)}$ & Request forwarding path of request $(s,i)$ \\
$V_{(s,i)}$ & The set of intermediate nodes on path $p^{(s,i)}$\\
$V_{s}$ & The set of intermediate nodes on node $s$' paths\\
$x_{si}$ & Caching strategy of node $s\in V$ for item $i\in \I$\\
$\boldsymbol{x}_s$ & Caching strategy of node $s\in V$ \\
$Z_s$ & The set of content items cached by node $s\in V$\\
$\boldsymbol{x}$ & Global caching strategy of all nodes \\
$h_{(s,i)}$ & The routing cost to serve request $(s,i)$ \\
$h_s$ & The routing cost of node $s$ \\
$g_s$ & The caching gain of node $s$ \\
$G$ & The aggregate caching gain in the network \\
\hline
\end{tabular}
\label{table:Notation}
\end{table}

\subsection{Routing Costs}\label{sec:RoutingCosts}

Transferring a content item across edge $e=(u,v)\in E$ incurs a cost (e.g., delay or financial expense) denoted by $w_{uv}\geq 0$.\footnote{We do not model the congestion effect on each edge. How to jointly consider cost and throughput issues is an interesting open problem.}
Since the size of each request message is relatively small compared with the content item, we assume that costs are only due to content item transfers, and the costs of forwarding requests are negligible \cite{YehSigmetrics}. 
To serve the request $(s,i)$, the routing cost depends on the caching decision $x_{si}$ of the request source node $s$, as well as the caching decisions $x_{vi}, \forall v\in V_{(s,i)}$, of all the intermediate nodes on the request forwarding path $p^{(s,i)}$. 
Specifically, the routing cost of transferring item $i$ over the reverse direction of $p^{(s,i)}$ is 
\begin{equation*}
\begin{aligned}
&\textstyle h_{(s,i)}\left(x_{si},\{x_{vi}:v\in V_{(s,i)}\}  \right)\\
=& \textstyle \sum_{k=1}^{|p^{(s,i)}|-1}w_{p_{k+1}p_k} \prod_{k'=1}^{k}\left( 1-x_{p_{k'}i} \right)\\
=& \textstyle \sum_{k=1}^{|p^{(s,i)}|-1}w_{p_{k+1}p_k} \left( 1-x_{si} \right) \prod_{k'=2}^{k}\left( 1-x_{p_{k'}i} \right).
\end{aligned}
\end{equation*}
Note that $h_{(s,i)}(\cdot)$ includes the cost on edge $(p_{k+1},p_k)$, i.e., $w_{p_{k+1}p_k}$, if and only if none of the nodes from $p_1$ to $p_k$ on path $p^{(s,i)}$ has cached content item $i$. 
For example, in Figure \ref{fig:Model}, $p^{(3,2)}=(3,6,5,7)$ and the routing cost of request $(3,2)$ depends on $x_{32}$, $x_{62}$ and $x_{52}$. 
If $(x_{32},x_{62},x_{52})=(0,0,1)$, then $h_{(3,2)}(x_{32},x_{62},x_{52})=w_{63}+w_{56}$.

\subsection{Selfish Caching Behavior}

Each selfish cache node $s\in V$ seeks a caching strategy to optimize its own benefit, i.e., minimizing the aggregate expected cost for serving all its own requests, calculated as follows: 
\begin{equation}\label{eq:aggregatecost}
\begin{aligned}
&  h_s\left(\boldsymbol{x}_s, \{\boldsymbol{x}_v: v\in V_s\} \right)\\
=& \sum_{i\in\I}\lambda_{(s,i)} \cdot  h_{(s,i)}\left(x_{si},\{x_{vi}:v\in V_{(s,i)}\}  \right) .
\end{aligned}
\end{equation} 
For notation simplicity, we write $h_s(\cdot)$ as $h_s(\boldsymbol{x}_s,\boldsymbol{x}_{-s})$. 
In the absence of caching, i.e., $\boldsymbol{x}=\boldsymbol{0}$, the aggregate expected cost of node $s$ is: 
\begin{equation*}
h_s(\boldsymbol{0})=\sum_{i\in\I}\lambda_{(s,i)} \sum_{k=1}^{|p^{(s,i)}|-1}w_{p_{k+1}p_k}.
\end{equation*}
We define the \emph{caching gain} of node $s$ as
\begin{equation}\label{eq:CachingGain}
g_s(\boldsymbol{x}_s,\boldsymbol{x}_{-s})=h_s(\boldsymbol{0})-h_s(\boldsymbol{x}_s,\boldsymbol{x}_{-s}).
\end{equation}
Intuitively, the caching gain is the cost reduction enabled by caching.
Since $h_s(\boldsymbol{0})$ is a constant, minimizing the aggregate expected cost in~\eqref{eq:aggregatecost} is equivalent to maximizing the caching gain in~\eqref{eq:CachingGain}. 
Hence, the caching gain in~\eqref{eq:CachingGain} serves as node $s$' payoff function.

\textbf{\begin{figure}[t]
\centering
\includegraphics[width=0.33\textwidth]{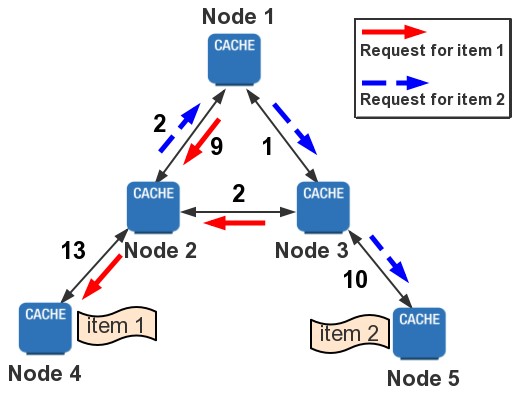}
\caption{An example where PSNE does not exist. The caching network has $|V|=5$ nodes and $|\I|=2$ content items, where node 4 (node 5, respectively) is the designated server of item 1 (item 2, respectively). The cache capacity is $1$ at each node. The request arrival rates satisfy $\lambda_{(v,i)}=\lambda_i, \forall v\in V, i\in \I$, where $\lambda_1=10$ and $\lambda_2=14$. The request forwarding paths are fixed, for example, $p^{(3,1)}=(3,2,4)$.}\label{fig:NoNE}
\end{figure}}

\section{Selfish Caching Game}\label{sec:SCG}

In this section, we model the interactions among selfish cache nodes by a selfish caching game on directed graphs. 
We construct an example where the pure strategy Nash equilibrium (PSNE) does not exist for such a game. 
We then identify the condition under which a PSNE exists, and propose a polynomial-time algorithm to find a PSNE under the condition.

\subsection{Game Modeling}

We define the selfish caching game as follows:

\begin{game}[Selfish Caching Game on Directed Graphs]
$ $
\begin{itemize}
\item Players: the set $V$ of cache nodes on the caching graph; 
\item Strategies: the caching strategy $\boldsymbol{x}_s=\{x_{si}: \forall i\in\I\}$ for each cache node $s\in V$, where $x_{si}\in\{0,1\}$ and $\sum_{i\in\I}x_{si} \leq c_s$;
\item Payoffs: the caching gain $g_s(\boldsymbol{x}_s,\boldsymbol{x}_{-s})$ for each $s\in V$.
\end{itemize}
\end{game}

Since the selfish caching game is a finite game, there exists at least one mixed strategy Nash equilibrium (including pure strategy Nash equilibrium as a special case). 
However, since it is difficult to implement random caching strategies in practical caching networks, we focus on analyzing pure strategy Nash equilibria in this paper, as defined below.

\begin{definition}[Pure Strategy Nash Equilibrium]
A pure strategy Nash equilibrium of the selfish caching game is a caching strategy profile $\boldsymbol{x}^{\rm NE}$ such that for every cache node $s\in V$,
\begin{equation}
g_s(\boldsymbol{x}_s^{\rm NE},\boldsymbol{x}_{-s}^{\rm NE}) \geq g_s(\boldsymbol{x}_s,\boldsymbol{x}_{-s}^{\rm NE}), \mbox{ for all feasible } \boldsymbol{x}_s.
\end{equation}
\end{definition}

\begin{figure}[t]
\centering
\includegraphics[width=0.35\textwidth]{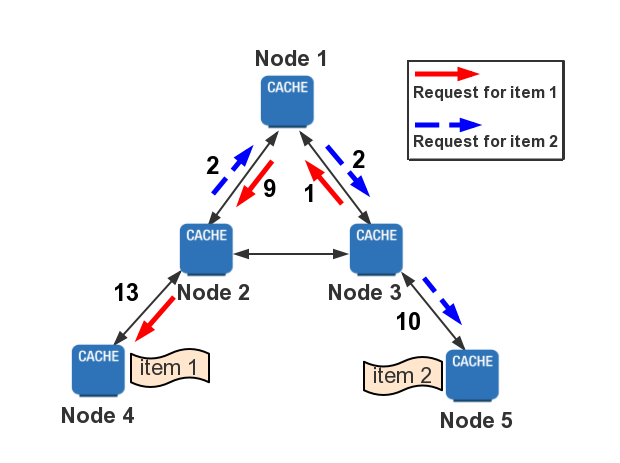}
\caption{An example where there is no mixed request loop. The request forwarding path is $p^{(3,1)}=(3,1,2,4)$. }\label{fig:NoNEAPP}
\end{figure}

\subsection{An Example with No PSNE}

In the following, we first show that the PSNE does not always exist.

\begin{theorem}\label{theo:NoNE}
There exists a selfish caching game for which the pure strategy Nash equilibrium does not exist. 
\end{theorem}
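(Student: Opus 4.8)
The statement is purely existential, so it suffices to exhibit a single instance with no PSNE, and the instance in Figure~\ref{fig:NoNE} is exactly the candidate I would analyze. The plan is to treat this five-node, two-item game with unit cache capacities as a finite anti-coordination game and to rule out a fixed point by exhibiting a best-response cycle.

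First I would reduce the strategy space. Because every node has capacity $c_s=1$ and every arrival rate $\lambda_{(v,i)}=\lambda_i$ is strictly positive, each node's caching strategy $\boldsymbol{x}_s$ is effectively a choice among three options: cache item~1, cache item~2, or cache nothing. The empty choice is weakly dominated whenever caching yields positive gain, so I would first argue that in any candidate PSNE each interior node caches exactly one item. The designated servers (node~4 for item~1, node~5 for item~2) and the request sources that are not on the loop have forced or irrelevant behavior, which localizes the analysis to the nodes lying on the directed loop of the topology.

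Next, for the loop nodes I would write down the caching gain $g_s(\boldsymbol{x}_s,\boldsymbol{x}_{-s})$ from~\eqref{eq:CachingGain} as a function of which item each node stores, using the fixed forwarding paths (e.g.\ $p^{(3,1)}=(3,2,4)$) and the edge costs $w_{uv}$ of the figure. The instance is designed so that the requests for the two items traverse the loop in opposite senses: a node caching item~$i$ reduces its own downstream cost but simultaneously removes the incentive of an upstream neighbor to cache item~$i$, pushing that neighbor toward item~$3-i$. With the asymmetry $\lambda_1=10<\lambda_2=14$ these marginal gains never balance, so for each of the finitely many joint configurations of cached items at least one node strictly prefers to switch which item it holds. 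I would make this precise by computing, for every candidate profile, one node's best response and checking the strict inequality $g_s(\boldsymbol{x}_s',\boldsymbol{x}_{-s})>g_s(\boldsymbol{x}_s,\boldsymbol{x}_{-s})$, thereby tracing a cyclic sequence of profitable deviations that visits all configurations and returns, which proves that no profile is stable.

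The main obstacle is not the logic---ruling out a fixed point in a small finite game is a routine (if tedious) enumeration---but pinning down the exact structure that forces the cycle to close. Concretely, I must read off from Figure~\ref{fig:NoNE} the complete edge set, all edge weights $w_{uv}$, and every forwarding path, and then verify that the particular values $\lambda_1=10$ and $\lambda_2=14$ are calibrated so that the anti-coordination inequalities hold in the same rotational direction around the loop, with no configuration escaping. Checking these numerical inequalities simultaneously for \emph{all} configurations, rather than merely for a convenient subset, is the delicate step; it is exactly here that a mixed request loop (a directed loop every edge of which carries some request) manifests as the obstruction to equilibrium that the subsequent results of the paper go on to characterize.
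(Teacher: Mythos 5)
Your proposal matches the paper's proof essentially step for step: the paper likewise uses the instance of Figure~\ref{fig:NoNE}, observes that nodes 4 and 5 have dominant strategies (caching item 2 and item 1, respectively), and then rules out all $8$ profiles in which nodes 1, 2, 3 each cache one item by exhibiting a strictly profitable unilateral deviation from each --- precisely the finite enumeration and best-response cycle you outline (and the paper, too, leaves the profile-by-profile check as routine verification). One small factual correction: the cyclic dependency arises because all requests traverse the single directed loop $(1,3,2,1)$ in the \emph{same} rotational sense, with different items carried on different edges (item 1 on edge $(3,2)$, item 2 on edges $(2,1)$ and $(1,3)$), not because the two items circulate around the loop in opposite senses.
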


\begin{proof}
Figure \ref{fig:NoNE} is an example with no PSNE. 
For node 4 (node 5, respectively), caching item 2 (item 1, respectively) is its dominant strategy. 
Now we analyze the selfish behaviors of nodes 1, 2, and 3.
It is easy to verify that for all 8 feasible caching strategy profiles, there always exists one cache node that can improve its caching gain by changing its caching strategy unilaterally. 
For example, if all the three nodes cache item 1, then node 3 has the the incentive to cache item 2 to improve its caching gain assuming that the other two nodes do no change their caching strategies. 
Hence there is no strategy profile where everyone is achieving its maximum payoff assuming other nodes do not change their strategies. 
Hence, the PSNE does not exist. 
\end{proof}

\subsection{Existence of a PSNE}

Deciding the existence of a PSNE for games on graphs is NP-hard in general \cite{wang2014belief}. 
However, we identify the condition under which a PSNE of the selfish caching game exists and can be found in polynomial time. 
To proceed, we first introduce the definition below.

\begin{definition}[Mixed Request Loop]
A mixed request loop on a directed graph is a directed loop $(p_1,p_2,\ldots,p_K,p_{K+1}=p_1)$ involving $3 \leq K \leq |V|$ nodes, where $p_k\in V$ for $1 \leq k \leq K$, $p_k\neq p_l$ for all $1\leq k< l\leq K$, and at least one content request traverses edge $(p_k,p_{k+1})\in E$ for all $1 \leq k \leq K$. 
\end{definition}

In Figure \ref{fig:NoNE}, $(1,3,2,1)$ forms a mixed request loop, where requests for item 1 traverse edge $(3,2)$, and requests for item 2 traverse edges $(2,1)$ and $(1,3)$.

Note that a loop on graph is not always a mixed request loop. 
For example, $(1,3,2,1)$ in Figure \ref{fig:NoNEAPP} is a loop.
However, the request forwarding path is $p^{(3,1)}=(3,1,2,4)$ rather than $(3,2,4)$, meaning no request traverses edge $(3,2)$.  Hence, loop $(1,3,2,1)$ is not a mixed request loop. 
In other words, we can avoid mixed request loops by properly choosing the request forwarding paths.

Next, we will show that a PSNE exists in the selfish caching game on caching graphs with no mixed request loop.\footnote{Note that no mixed request loop is a sufficient (but not necessary) condition for a PSNE to exist.}

\begin{theorem}\label{theo:Existence}
A PSNE always exists in the selfish caching game on caching graphs with no mixed request loop. 
\end{theorem}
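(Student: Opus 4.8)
The plan is to prove existence constructively, by ordering the players so that each one best-responds exactly once. First I would pin down the strategic dependency: by \eqref{eq:aggregatecost}, node $s$'s payoff $g_s$ depends only on its own strategy $\boldsymbol{x}_s$ together with $\{\boldsymbol{x}_v : v\in V_s\}$, so the only opponents who can change $s$'s best response are the intermediate nodes $V_s$ on $s$'s own forwarding paths. I would record this in a dependency digraph $H$ on $V$ with an arc $s\to v$ whenever $v\in V_s$, and then process the nodes in reverse topological order of $H$. When a node is processed, every node whose strategy affects it has already been fixed, so the node maximizes its caching gain against fixed choices; if $H$ were acyclic this alone would give a PSNE, since each node's strategy is then a best response to strategies that are never revised afterward.

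The structural heart of the proof is to relate cycles of $H$ to loops in $G$. Every arc $s\to v$ of $H$ witnesses a directed sub-path of some $p^{(s,i)}$ from $s$ to the intermediate node $v$, each of whose edges is traversed by request $(s,i)$. Concatenating the arcs of a directed cycle of $H$ therefore yields a closed directed walk in $G$ whose every edge carries a request, and such a walk contains a directed cycle with the same property. If that cycle has at least three distinct nodes it is exactly a mixed request loop and is excluded by hypothesis; so the no-mixed-request-loop condition rules out every cycle of $H$ except those whose underlying $G$-cycle has length two.

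The main obstacle is precisely these length-two cycles, which the definition ($K\ge 3$) deliberately permits: two nodes $u,v$ joined by the reciprocal request-traversed edges $(u,v)$ and $(v,u)$ may depend on each other, so $H$ need not be acyclic. I would attack this by contracting the strongly connected components of $H$. Using the lemma above, I would argue that the only recurrent interactions left inside a component are such reciprocal pairs and that the induced symmetric ``conflict'' relation is itself acyclic --- an undirected cycle through three or more of these pairs can be oriented into a directed $G$-cycle of length $\ge 3$ with every edge request-traversed, i.e.\ a forbidden mixed request loop. Each component is thus a forest of mutually dependent pairs whose interaction is of substitutes type: a node caching item $i$ only lowers its partner's marginal gain from caching $i$, an anti-coordination structure that (unlike a three-node rock--paper--scissors cycle) always admits a pure equilibrium, which I would compute by sweeping the forest from its leaves inward.

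Finally I would glue the pieces together: order the contracted DAG of components topologically and process the components in reverse, solving each component's internal forest game against the already-fixed downstream components. Because no component depends on one processed earlier, the best responses fixed inside a component stay optimal after all later components are set, so the concatenated profile is a global PSNE, and bookkeeping the work per node gives the claimed polynomial running time. I expect the two delicate points to be (i) showing rigorously that concatenated dependency arcs yield a request-traversed closed walk from which a genuine $K\ge 3$ cycle can be extracted, so that only reciprocal pairs survive, and (ii) establishing that the substitutes game inside each component has a pure equilibrium and that gluing these component equilibria along the DAG never reopens any node's incentive to deviate.
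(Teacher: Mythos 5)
There is a genuine gap at what you yourself call the structural heart of the argument: the claim that the no-mixed-request-loop hypothesis forces every strongly connected component of your dependency digraph $H$ to be a forest of reciprocal pairs is false. The problem is that a cycle of $H$ through three or more players produces a closed walk in $G$ that may decompose into several \emph{distinct} $2$-cycles sharing a node, so that no single directed $G$-cycle with $K\geq 3$ ever appears. Concretely, take nodes $s_1,s_2,s_3$, items $i_1,i_2,i_3$ with designated servers $d_1,d_2,d_3$, and paths $p^{(s_1,i_1)}=(s_1,s_2,d_1)$, $p^{(s_2,i_2)}=(s_2,s_1,s_3,d_2)$, $p^{(s_3,i_3)}=(s_3,s_1,d_3)$. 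The request-traversed edges among $\{s_1,s_2,s_3\}$ are $(s_1,s_2),(s_2,s_1),(s_1,s_3),(s_3,s_1)$, whose only directed cycles are the two $2$-cycles through $s_1$; hence there is no mixed request loop ($K\geq 3$ is never met). Yet $H$ contains the arcs $s_1\to s_2$, $s_2\to s_1$, $s_2\to s_3$, $s_3\to s_1$, so $\{s_1,s_2,s_3\}$ is a strongly connected component containing the genuine $3$-cycle $s_1\to s_2\to s_3\to s_1$, and it is not a forest of mutually dependent pairs (contracting the pair $\{s_1,s_2\}$ still leaves a $2$-cycle with $s_3$ built from one-directional arcs). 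Your reverse-topological ordering and leaves-inward sweep therefore have no foothold on such components. A secondary weak point: even where reciprocal pairs do occur, ``substitutes interaction'' by itself guarantees nothing --- the paper's own no-PSNE instance (Figure \ref{fig:NoNE}) is exactly a three-player substitutes-type game --- so the pair-game existence claim in (ii) would also need an actual proof rather than an appeal to anti-coordination structure.

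For comparison, the paper does not attempt any static decomposition of the dependency structure. Its proof (Appendices A--B, via Algorithm \ref{algo:FindNESG}) runs best-response-style dynamics on the state graph: each round begins with one node adding its best item, and the no-mixed-request-loop condition is used \emph{dynamically}, to show that the cascade of ``change'' steps triggered by an add step cannot circle back --- the adding node keeps its new item throughout the cascade, which dies out within $|V|-2$ steps --- so the walk on the state graph reaches a sink (a PSNE) after at most $|V||\mathcal{I}|^2(|V|-2)^2$ arcs (Theorem \ref{theo:ExistencePolytime}). That argument handles exactly the entangled components your decomposition cannot, including the example above. If you want to salvage your approach, you would need either a termination argument for dynamics inside each strongly connected component of $H$ (at which point you have essentially reconstructed the paper's proof) or a correct structural characterization of those components, which the $K\geq 3$ definition does not give you.
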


\begin{proof}
See Appendix A. 
\end{proof}

Theorem \ref{theo:Existence} holds in caching networks with arbitrary topologies and heterogeneous content popularity. 
We prove the existence\footnote{The selfish caching game generally admits multiple PSNEs, depending on system parameters such as edge weights and request arrival rates.} of the PSNE by finding a PSNE in polynomial time.

\subsection{Polynomial-Time Algorithm to Find a PSNE}

In this section, we present a polynomial-time algorithm to find a PSNE for the selfish caching game. 
Specifically, for each selfish caching game, we can define a \emph{state graph} \cite{MarketSharing} as follows.

Recall that given node $s$' caching strategy $\boldsymbol{x}_s$,  the set $Z_s=\{i:x_{si}=1, i\in\I\}$ is the set of content items cached by node $s\in V$. 
Hence we can use $\boldsymbol{x}=\{\boldsymbol{x}_s: \forall s\in V\}$ and $Z=\{Z_s: \forall s\in V\}$ interchangeably to represent the caching strategy profile.

\begin{definition}[State Graph \cite{MarketSharing}]
A state graph is a directed graph where each vertex corresponds to a strategy profile $Z$. 
There is a directed arc from vertex $Z$ to vertex $Z'$ with label $v$ if the only difference between $Z$ and $Z'$ is the strategy of player $v$ and the payoff of player $v$ in $Z$ is strictly less than its payoff in $Z'$. 
\end{definition}

A PSNE corresponds to a vertex on the state graph without any outgoing arc, i.e., a sink. 
Hence identifying a PSNE of the selfish caching game is equivalent to identifying a sink on the corresponding state graph.

We propose a polynomial-time algorithm (Algorithm \ref{algo:FindNESG} \cite{MarketSharing}) to find a sink on the state graph. 
The algorithm proceeds in rounds. 
The first round starts at the vertex $Z=\emptyset$, corresponding to the strategy profile where none of the cache nodes cache any content item (Line 1 of Algorithm \ref{algo:FindNESG}). 
In each round, the first arc traversed on the state graph corresponds to an \emph{add arc} where a player, say $s$, changes from $Z_s$ to $Z_s \cup \{i^\ast\}$. 
Intuitively, player $s$ adds only one content item $i^\ast$ to its cache, where we select $i^\ast$ among all possible content items not currently in $Z_s$ to maximize player $s$' caching gain (Lines 3-4 of Algorithm \ref{algo:FindNESG}). 
After the first arc, subsequent arcs in the same round correspond to \emph{change arcs}. 
Specifically, a change arc corresponds to a player, say $v$, replacing $Z_v$ by $Z_v\cup \{j\} \setminus \{t\}$, where $j \notin Z_v$ and $t\in Z_v$. 
Intuitively, player $v$ replaces content item $t$ for content item $j$ if $g_v(Z_v\cup \{j\} \setminus \{t\},Z_{-v})> g_v(Z_v,Z_{-v})$ (Lines 5-7 of Algorithm \ref{algo:FindNESG}). 
When the current vertex on the state graph has no change arcs, one round ends. 
For the vertex where a round ends, if there is an add arc outgoing from it, a new round starts; otherwise, it is a sink and the algorithm terminates. 
Such a sink corresponds to the PSNE.

\begin{algorithm}[t]
\LinesNumbered
\SetAlgoLined
\begin{small}
\KwIn{$G(V,E), \I, w_{uv}, \forall (u,v)\in E, \lambda_{(s,i)}$ and $p^{(s,i)}$, for all $s\in V, i\in \I$}
\KwOut{$Z^{\rm NE}$}
Set $Z=\emptyset$\;
\Repeat{$\forall s\in V$ satisfies $|Z_s|=c_s$}{
Randomly pick a node $s\in V$ where $|Z_s|<c_s$\;
Add item $i^\ast$ where $i^\ast \in \arg \max_{i\in\I\setminus Z_s} g_s(Z_s \cup \{i\},Z_{-s})$ to node $s$, i.e., $Z_s \gets Z_s \cup \{i^\ast\}$\;
\While{$\exists v\in V, j\notin Z_v, t\in Z_v$, such that $g_v(Z_v\cup \{j\} \setminus \{t\},Z_{-v})> g_v(Z_v,Z_{-v})$}
{
Set $Z_v \gets Z_v\cup \{j\} \setminus \{t\}$\;
}
}
Set $Z^{\rm NE} = Z$\;
\end{small}
\caption{Find PSNE on State Graph \cite{MarketSharing}}
\label{algo:FindNESG}
\end{algorithm}

In the following theorem, we show that Algorithm \ref{algo:FindNESG} can find a sink on the state graph in polynomial time.

\begin{theorem}\label{theo:ExistencePolytime}
For the selfish caching game on arbitrary-topology caching graphs with no mixed request loop, Algorithm \ref{algo:FindNESG} computes a PSNE in polynomial time by traversing a path of length at most $|V||\I|^2(|V|-2)^2$ on the corresponding state graph.
\end{theorem}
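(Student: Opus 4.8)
The plan is to count separately the two kinds of arcs that Algorithm~\ref{algo:FindNESG} traverses, namely the \emph{add arcs} (one per round, on Line~4) and the \emph{change arcs} (the swaps on Lines~5--7), and to use the absence of a mixed request loop to bound the change arcs. First I would check that the algorithm only follows genuine state-graph arcs and halts at a sink. Each add arc and each change arc is taken only when it strictly raises the acting node's caching gain, so both are valid arcs. Moreover, caching item $i$ at node $s$ can only reduce the cost of $s$'s own request $(s,i)$ (to zero), so a node never gains by evicting an item without replacing it; hence at termination every cache is full and admits no improving swap, which means no node has a profitable unilateral deviation. This is precisely the PSNE condition, and Theorem~\ref{theo:Existence} guarantees that a sink exists. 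It remains to bound the number of steps.

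Counting add arcs is immediate: a change arc preserves each $|Z_s|$ whereas an add arc increases $\sum_{s} |Z_s|$ by one, so starting from $Z=\emptyset$ the algorithm performs exactly $\sum_{s\in V} c_s \le |V||\I|$ add arcs, i.e. at most $|V||\I|$ rounds. Thus it suffices to bound the change arcs within a single round by $|\I|(|V|-2)^2$, since then the total path length is at most $|V||\I| + |V||\I|\cdot|\I|(|V|-2)^2 \le |V||\I|^2(|V|-2)^2$. The enabling structural fact is that ``no mixed request loop'' forces the dependency relation to be acyclic: declaring $v \prec s$ whenever $v\in V_s$ (node $v$ lies upstream of $s$ on one of $s$'s request paths), a directed cycle in $\prec$ would splice into a closed directed walk in $G$ every edge of which carries a request, i.e. a mixed request loop, a contradiction. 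I would therefore fix a topological order $v_1,\ldots,v_{|V|}$ from upstream to downstream, in which the gain $g_{v_k}$ depends only on $v_k$'s own cache and the caches of $V_{v_k}\subseteq\{v_1,\ldots,v_{k-1}\}$. A change arc at $v_k$ then leaves $g_{v_1},\ldots,g_{v_{k-1}}$ fixed while strictly increasing $g_{v_k}$, so the vector $(g_{v_1},\ldots,g_{v_{|V|}})$ grows lexicographically at every change arc; this rules out cycling and already yields termination.

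To turn termination into the per-round bound I would exploit a second fact: for a \emph{fixed} upstream configuration, node $v$'s gain is \emph{modular} in its cached set, because caching item $i$ affects only request $(v,i)$ and contributes the independent marginal value $\mu_{v,i}=\lambda_{(v,i)}\cdot d_{v,i}$, where $d_{v,i}$ is the distance from $v$ to the nearest upstream provider of $i$ on $p^{(v,i)}$. With a modular objective, each improving swap is a strictly rightward move of one item in the ranking by $\mu_{v,i}$, so the swaps a node makes against a fixed upstream are limited. Crucially, $d_{v,i}$ ranges over the prefix distances along $p^{(v,i)}$, hence over at most $|V|-1$ values indexed by the at most $|V|-2$ intermediate nodes of the path; so a node's best response can be retargeted only when an upstream provider changes, and the perturbation introduced by the single add arc of the round can only propagate downstream along $\prec$. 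Tracking this propagation across the at most $|V|-2$ intermediate positions, for each of the $|\I|$ items, is what I would use to bound the change arcs of one round by $|\I|(|V|-2)^2$.

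I expect the main obstacle to be exactly this last accounting: showing that, within one round, the downstream cascade of re-optimizations triggered by a single add arc produces no more than $|\I|(|V|-2)^2$ swaps, rather than a larger polynomial. This demands careful bookkeeping of how each node's modular re-sorting interacts with the retargetings forced by changing upstream providers, and it is precisely where acyclicity is indispensable: without the order $\prec$, the lexicographic potential collapses, cascades can feed back on themselves, and neither termination nor any polynomial bound survives.
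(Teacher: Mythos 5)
Your skeleton is the same as the paper's: at most $|V||\I|$ add arcs (rounds), a target of at most $|\I|(|V|-2)^2$ change arcs per round, and the no-mixed-request-loop hypothesis used to prevent swap cascades from feeding back on themselves. The difficulty is that you never establish the per-round bound, and you say so yourself: the final paragraph concedes that bounding the cascade by $|\I|(|V|-2)^2$ is ``the main obstacle.'' The two tools you do develop fall short of it. The lexicographic potential over a topological order of the dependency relation $\prec$ proves only that the algorithm terminates, not that it terminates in polynomially many steps (a lexicographically increasing real vector admits exponentially long improving sequences in general). Likewise, modularity of $g_v$ in $Z_v$ for a fixed upstream configuration bounds what one node does \emph{between} upstream changes, but the quantity you must control is the number of upstream changes a round can generate, which is exactly the unproven cascade count. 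Since the theorem's content is precisely this quantitative bound, the proposal has a genuine gap at its core.

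For comparison, the paper closes this step with a direct cascade argument (Appendix B): when the round's add arc places item $i$ at node $s$, only nodes $v$ with $s\in p^{(v,i)}$ and $x_{vi}=1$ are perturbed; such a $v$ may swap $i$ for some $j$, which perturbs only nodes $u$ with $v\in p^{(u,j)}$ and $x_{uj}=1$, and so on. Because $s$ retains $i$ throughout and there is no mixed request loop, each such chain of swaps moves strictly forward and dies out after at most $|V|-2$ steps, and the number of chains spawned within one round is at most $|\I|(|V|-2)$ (at most $|V|-2$ affected intermediate positions, each for at most $|\I|$ items), yielding the $|\I|(|V|-2)^2$ per-round bound and hence the stated path length. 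One further caution on your acyclicity lemma: extracting a simple directed cycle from the closed walk obtained by splicing the paths $v\in V_s$ can in principle produce a $2$-node cycle, which the paper's definition of a mixed request loop (requiring $K\geq 3$) does not formally exclude; the paper's own proof is equally silent on this corner case, so it is a shared informality rather than a flaw unique to your route, but if you intend to lean on a topological order of $\prec$ you would need to rule out, or separately handle, such $2$-cycles.
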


\begin{proof}
See Appendix B. 
\end{proof}

Note that for any given selfish caching game, Algorithm \ref{algo:FindNESG} does not require the construction of the whole state graph. 
At any given vertex of the state graph, Algorithm \ref{algo:FindNESG} only requires one to find the next arc to traverse, which takes $\mathcal{O}(|V|)$ time. 
Hence, the total maximum running time of Algorithm \ref{algo:FindNESG} is $\mathcal{O}(|V|^2|\I|^2(|V|-2)^2)$. 
Furthermore, different random choices of the next arc to traverse in Algorithm \ref{algo:FindNESG} correspond to different outcomes if there is more than one PSNE in the selfish caching game. 

Since each cache node maximizes its own benefit, a PSNE of the selfish caching game does not in general optimize the social welfare. 
We will quantify the efficiency of the Nash equilibria in terms of social welfare next.

\section{Price of Anarchy}\label{sec:PoA}

To evaluate the efficiency of Nash equilibria, we analyze the price of anarchy (PoA) \cite{CSgame}, i.e., the ratio of the social welfare achieved by the worst Nash equilibrium to that achieved by a socially optimal strategy. 

In this paper, we define the social welfare as the aggregate caching gain in the network.
Specifically, the social welfare maximization problem is
\begin{equation}\label{prob:maxCG}
\begin{aligned}
\displaystyle \mbox{max}~ & ~~  \textstyle G(\boldsymbol{x}) \triangleq \sum_{s\in V} g_s(\boldsymbol{x}_s,\boldsymbol{x}_{-s}) \\
\mbox{s.t.}~ & ~~  \textstyle  \sum_{i\in\I}  x_{si} \leq c_s,  ~x_{si} \in\{0,1\}, ~ \forall  s\in V, i\in\I .
\end{aligned}
\end{equation}
Problem \eqref{prob:maxCG} is NP-hard \cite{YehSigmetrics}. 
It is challenging to calculate the socially optimal solution and analyze the PoA in general.

In the following, we first show that the PoA can be arbitrarily poor if we allow any content request patterns. 
We then identify the cache paradox where adding extra cache nodes can make the PoA worse. 
Under reasonable constraints of request patterns and paths, however, we can show that the PoA is bounded in general caching networks. 
Furthermore, for given caching networks with known network topology and parameters, we can derive a better bound for PoA.

\subsection{An Example with an Arbitrarily Poor PoA}\label{sec:BadPoA}

Next, we show that the PoA can be arbitrarily close to $0$, indicating that the selfish caching behaviors can lead to unboundedly poor performance in terms of social welfare.

\begin{figure}[t]
\centering
\includegraphics[width=0.3\textwidth]{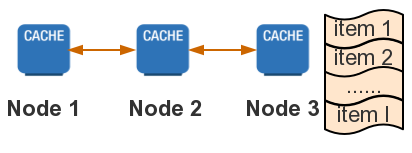}
\caption{An example where PoA approaches $0$. Consider a caching network with $|V|=3$ nodes where node 3 is the designated server of content items in set $\I=\{1,2,\ldots,I\}$. 
The request arrival rates at node 1 satisfy $ \lambda_{(1,1)} >0, \lambda_{(1,i)}= \lambda_{(1,1)}-\epsilon>0$, where $ \epsilon > 0$, for $i\in I\setminus \{1\}$. 
Node 2 does not generate request, i.e., $\lambda_{(2,i)} = 0, \forall i\in I$. 
The cache capacities are $c_1=1$ and $c_2=I-1$.}\label{fig:PoA2}
\end{figure}

\begin{lemma}
There exists a selfish caching game for which the PoA is arbitrarily close to $0$.
\end{lemma}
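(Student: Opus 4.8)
The plan is to analyze the three-node instance in Figure~\ref{fig:PoA2} directly: I would exhibit one particular PSNE whose social welfare stays a fixed constant while the social optimum of~\eqref{prob:maxCG} grows without bound as the number of items $I$ increases, so that their ratio tends to $0$. Throughout I will use the request forwarding path $p^{(1,i)}=(1,2,3)$ for every item $i$ and write $w_{21}$ and $w_{32}$ for the two edge costs encountered on the reverse path. The structural feature that drives everything is that node~$2$ issues no requests of its own, so by~\eqref{eq:CachingGain} its payoff satisfies $g_2(\cdot)\equiv 0$ for every feasible caching strategy.

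First I would pin down the best responses of the two strategic players. Since $g_2\equiv 0$, \emph{every} strategy of node~$2$ (capacity $c_2=I-1$) is a weak best response; in particular, caching the empty set is optimal for node~$2$. For node~$1$ (capacity $c_1=1$), when node~$2$ caches nothing, caching a single item $i$ reduces node~$1$'s expected cost by $\lambda_{(1,i)}(w_{21}+w_{32})$, a quantity that is strictly maximized at $i=1$ because $\lambda_{(1,1)}>\lambda_{(1,i)}$ for all $i\neq 1$. Hence node~$1$'s unique best response to node~$2$'s empty cache is to cache item~$1$, and the profile ``node~$1$ caches item~$1$, node~$2$ caches nothing'' is a PSNE.

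Next I would compare this equilibrium with the social optimum. At the exhibited PSNE the social welfare equals node~$1$'s caching gain alone, $G^{\rm NE}=\lambda_{(1,1)}(w_{21}+w_{32})$, since items $2,\dots,I$ are served only from the designated server. For the optimum I would use the full capacity $I$ across the two nodes: placing the highest-rate item~$1$ at node~$1$ captures both edges, and placing items $2,\dots,I$ at node~$2$ captures $w_{32}$ for each, giving $G^{\rm OPT}=\lambda_{(1,1)}(w_{21}+w_{32})+(I-1)(\lambda_{(1,1)}-\epsilon)w_{32}$. Consequently the PoA is at most $G^{\rm NE}/G^{\rm OPT}$, and letting $I\to\infty$ with the remaining parameters fixed drives a bounded numerator over an unbounded denominator to $0$, proving that the PoA can be made arbitrarily close to $0$.

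The main obstacle, and the step I would justify most carefully, is confirming that node~$2$ caching nothing really is an equilibrium strategy rather than an artifact: this relies on the weak-inequality definition of PSNE together with $g_2\equiv 0$, and on re-checking that node~$1$ still strictly prefers item~$1$ in this profile, which the strict gap $\lambda_{(1,1)}>\lambda_{(1,i)}$ guaranteed by $\epsilon$ provides. A secondary point is establishing optimality of the claimed $G^{\rm OPT}$, which I would handle by an exchange argument: the marginal value of holding item~$i$ at node~$1$ instead of at node~$2$ is exactly the extra edge cost $\lambda_{(1,i)}w_{21}$, maximized at $i=1$, so devoting node~$1$'s single slot to item~$1$ and node~$2$'s slots to the remaining items is optimal. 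Since we only need \emph{one} bad PSNE to upper-bound the worst-case PoA, exhibiting the empty-cache equilibrium suffices.
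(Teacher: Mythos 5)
Your proof is correct and takes essentially the same approach as the paper: it uses the same three-node instance of Figure~\ref{fig:PoA2}, exhibits the same PSNE (node~$1$ caches item~$1$, node~$2$ caches nothing, justified via $g_2\equiv 0$ and the weak-inequality PSNE definition), and compares it against the same allocation in which node~$2$ holds items $2,\dots,I$. If anything, your limit is slightly cleaner than the paper's: you let $I\to\infty$ with all other parameters fixed, whereas the paper additionally takes $w_{32}\gg w_{21}$ and $\epsilon\to 0$ to make the ratio approach exactly $1/I$ --- a refinement not needed for the conclusion --- and you also verify more explicitly than the paper that the exhibited profile is indeed an equilibrium.
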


\begin{proof}
We construct an example where PoA approaches $0$, as shown in Figure \ref{fig:PoA2}. 
In this example, the socially optimal caching strategy is for node 1 to cache content item 1 and for node 2 to cache content items $2$ to $I$. 
The optimal social welfare, i.e., aggregate caching gain, is\footnote{The superscript ``SO'' represents socially optimal.} 
\begin{equation*}
\textstyle G^{\rm SO}=\lambda_{(1,1)}(w_{21}+w_{32})+\sum_{i=2}^I \lambda_{(1,i)}w_{32}.
\end{equation*} 
There may exist more than one PSNE. 
One is that node 1 caches item 1 and node 2 caches none of the content items (since node $2$ has no request of its own). 
The social welfare achieved by this PSNE is $\textstyle G^{\rm NE}=\lambda_{(1,1)}(w_{21}+w_{32}).$ 
We have
\begin{equation*}
\textstyle \frac{G^{\rm NE}}{G^{\rm SO}} =\frac{1}{1+\sum_{i=2}^I\frac{\lambda_{(1,i)}w_{32}}{\lambda_{(1,1)}(w_{21}+w_{32})}  }.
\end{equation*}
When $w_{32} \gg w_{21}$ and $\epsilon \to 0$, we have $\frac{\lambda_{(1,i)}w_{32}}{\lambda_{(1,1)}(w_{21}+w_{32})} \to 1$ and $\frac{G^{\rm NE}}{G^{\rm SO}} \to \frac{1}{I}$, 
which goes to $0$ as $I$ becomes very large. 
Since PoA measures the worst case ratio between any PSNE and the social optimal solution, the PoA will be no larger than ${G^{\rm NE}}/{G^{\rm SO}}$ and hence can be arbitrarily close to $0$.
\end{proof}

\subsection{Cache Paradox}

In practice, one way to improve the aggregate caching gain in the network is to add extra cache nodes. 
However, we identify the following cache paradox.

\begin{lemma}
In the selfish caching game, adding extra cache nodes can make the PoA worse. 
\end{lemma}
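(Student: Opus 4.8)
The plan is to prove the cache paradox by explicit construction, exhibiting two caching networks that differ only by the addition of extra cache nodes, where the larger network has a strictly worse PoA than the smaller one. Since the PoA is defined as the ratio of the worst PSNE's social welfare to the socially optimal welfare, it suffices to show that adding nodes can create new (bad) Nash equilibria, or degrade the worst-case ratio, without correspondingly raising the worst PSNE. The most economical approach is to build on the arbitrarily-poor-PoA construction already developed in Figure~\ref{fig:PoA2}, because that example already isolates the mechanism by which a selfish node fails to cache items it does not itself request.

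First I would fix a small base network $G$ whose PoA equals some value, ideally computing both $G^{\rm SO}$ and the worst $G^{\rm NE}$ explicitly so the base PoA is known exactly. A natural candidate is a two-node configuration (a single source node routing to a designated server) where the selfish and social outcomes coincide or nearly coincide, giving a base PoA close to $1$. Then I would construct $G'$ by adding one or more extra cache nodes positioned on the request forwarding paths, choosing edge weights $w_{uv}$ and request rates $\lambda_{(s,i)}$ so that the added node \emph{could} cache items benefiting the social welfare (as in node~2 of Figure~\ref{fig:PoA2}) but has no request of its own, hence no selfish incentive to do so. The socially optimal strategy in $G'$ exploits the extra caching capacity and achieves a strictly larger $G^{\rm SO}$, while the worst PSNE leaves the extra node's cache empty, so the worst $G^{\rm NE}$ grows much more slowly than $G^{\rm SO}$.

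The key steps, in order, are: (i) specify $G$, its parameters, compute $G^{\rm SO}(G)$ and the worst $G^{\rm NE}(G)$, and record $\mathrm{PoA}(G)$; (ii) specify $G'$ as $G$ augmented with the extra cache node(s) on the forwarding paths, keeping the original source requests and weights intact while assigning the new node zero self-request rate; (iii) identify the socially optimal strategy in $G'$ and verify it strictly improves on $G^{\rm SO}(G)$ by using the new capacity; (iv) identify a PSNE in $G'$ in which the added node caches nothing—verifying it is indeed an equilibrium requires checking that no node, including the new one, can unilaterally improve its caching gain, which follows because the new node gains nothing from caching items it never requests; and (v) compute $\mathrm{PoA}(G') < \mathrm{PoA}(G)$, completing the argument. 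I would lean on Theorem~\ref{theo:Existence} to guarantee that the augmented network (which I would design to have no mixed request loop) admits a PSNE at all.

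The main obstacle I anticipate is step~(iv): rigorously verifying that the empty-cache configuration at the new node genuinely constitutes a PSNE rather than merely an intuitively selfish outcome. One must confirm that every node's strategy is a best response simultaneously, and in particular that the \emph{original} nodes do not suddenly acquire an incentive to change their caching once the new node and its edges are present—the added topology can alter routing costs along shared path segments, potentially shifting other nodes' best responses. Carefully choosing the edge weights so that the new node sits \emph{downstream} of the selfish source (closer to the designated server), and so that its presence does not change the marginal caching gains of the upstream nodes, should neutralize this coupling and make the equilibrium verification clean. A secondary subtlety is ensuring the comparison is fair: the paradox is meaningful only if $G'$ is $G$ plus pure additions, so I would state explicitly that all of $G$'s nodes, edges, weights, and requests are preserved unchanged in $G'$.
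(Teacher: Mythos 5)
Your proposal is correct and follows essentially the same route as the paper's proof: the paper starts from a two-node network (source plus designated server, $c_1=1$, $\lambda_{(1,1)}>\lambda_{(1,2)}>0$) whose unique PSNE is socially optimal (PoA $=1$), then inserts a zero-request node 3 on the forwarding path with $w_{21}=w_{31}+w_{23}$ so that the old equilibrium persists (node 3 caches nothing) while the social optimum improves by having node 3 cache item 2, giving $\mathrm{PoA}'=\frac{\lambda_{(1,1)}(w_{31}+w_{23})}{\lambda_{(1,1)}(w_{31}+w_{23})+\lambda_{(1,2)}w_{23}}<1$. Your anticipated obstacle in step (iv) is resolved exactly as you suggest — placing the new node between source and server with edge weights summing to the original edge cost leaves the original node's best response unchanged.
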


\begin{proof}
Consider a caching network with two nodes in Figure \ref{fig:Paradox} (left subfigure), where node 2 is the designated server for two content items. 
Assume $c_1=1$ and $\lambda_{(1,1)}>\lambda_{(1,2)}>0$. 
At the equilibrium, node 1 caches item 1, which is also socially optimal. 
Hence, $PoA = 1$. 

Now we add an extra cache node, i.e., node 3 (see the right subfigure in Figure \ref{fig:Paradox}), where $c_3=1$ and $\lambda_{(3,1)}=\lambda_{(3,2)}=0$. 
Assume $w_{21}=w_{31}+w_{23}$, $w_{31}>0$, and $w_{23}>0$. 
Then one equilibrium is that node 1 caches item 1 and node 3 caches nothing. 
However, the socially optimal strategy is that node 1 caches item 1 and node 3 caches item 2. 
Hence, the PoA with node 3 satisfies
$$PoA'=\frac{\lambda_{(1,1)} (w_{31}+w_{23}) }{ \lambda_{(1,1)} (w_{31}+w_{23})+\lambda_{(1,2)} w_{23} } < 1.$$
Intuitively, adding node 3 does not change the social welfare achieved at the equilibrium, but increases the optimal social welfare, and hence makes the PoA worse. 
\end{proof}

\begin{figure}[t]
\centering
\includegraphics[width=0.45\textwidth]{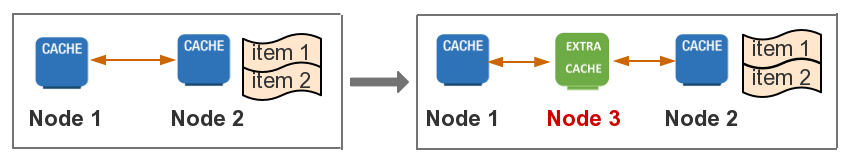}
\caption{An example where adding an extra cache node makes the PoA worse.}\label{fig:Paradox}
\end{figure}

\subsection{Bound on PoA}

In this section, we show that under reasonable constraints of request patterns and paths, the selfish caching game belongs to a class of games that we call \emph{$\alpha$-scalable valid utility games,} and the PoA is bounded by the length of the longest request forwarding path in the network.

Recall that $Z_s=\{i:x_{si}=1,i\in\I\}$ represents the set of content items cached by node $s\in V$. 
For convenience, we express the caching gain of node $s$ as $g_s(Z_s,Z_{-s})$, and the aggregate caching gain of the network as $G(Z)=\sum_{s\in V}g_s(Z_s,Z_{-s})$.

We first define the \emph{valid utility games} (for general games not restricted to selfish caching games), introduced by Vetta in \cite{ValidUtilityGame}.

\begin{definition}[Valid Utility Game \cite{ValidUtilityGame}]
A game (with social function $\gamma(\cdot)$ and individual payoff functions $f_s(\cdot), \forall s\in V$)\footnote{Note that the social function can be any objective that the network aims to optimize, and may not be the summation of individual players' payoff functions.} is a valid utility game if the following three properties are satisfied:
\begin{enumerate}
\item The social function $\gamma(\cdot)$ is non-decreasing and submodular. 
Mathematically, for every content item $i\in \I$ and for any subsets $Z, Z'$ such that $Z \subseteq Z'$,
\begin{align}
& \gamma(Z) \leq \gamma(Z'), \label{eq:prop21} \\
& \gamma(Z \cup \{i\}) - \gamma(Z) \geq \gamma(Z' \cup \{i\}) - \gamma(Z'). \label{eq:prop22}
\end{align}
\item The sum of players' payoff functions $f_s(\cdot)$ for any strategy profile $\boldsymbol{x}$ should be no larger than the social function $\gamma(\cdot)$: 
\begin{equation}\label{eq:prop1}
\textstyle \sum_{s\in V} f_s(\boldsymbol{x}_s,\boldsymbol{x}_{-s}) \leq \gamma(\boldsymbol{x}).
\end{equation}
\item The payoff of a player is no less than the difference between the social function when the player participates and that when it does not participate
\begin{equation}\label{eq:prop3}
f_s(\boldsymbol{x}_s,\boldsymbol{x}_{-s}) \geq \gamma(\boldsymbol{x}_s,\boldsymbol{x}_{-s}) - \gamma(\boldsymbol{0},\boldsymbol{x}_{-s}).
\end{equation}
\end{enumerate}
\end{definition}

Vetta in \cite{ValidUtilityGame} proved that the PoA of a valid utility game is bounded by $2$. 
In the following, we define a new class of games called \emph{$\alpha$-scalable valid utility games,} which generalizes the notation of valid utility games.

\begin{definition}[$\alpha$-Scalable Valid Utility Game]
A game is an $\alpha$-scalable valid utility game if it satisfies the two properties in \eqref{eq:prop21}, \eqref{eq:prop22}, and \eqref{eq:prop1}, and 
the payoff of a player is no less than the product of a positive constant $\alpha$ and the difference between the social function when the player participates and that when it does not participate:
\begin{equation}\label{eq:svug}
f_s(\boldsymbol{x}_s,\boldsymbol{x}_{-s}) \geq \frac{1}{\alpha} \cdot \Big[ \gamma(\boldsymbol{x}_s,\boldsymbol{x}_{-s}) - \gamma(\boldsymbol{0},\boldsymbol{x}_{-s}) \Big].
\end{equation}
\end{definition}

Note that the valid utility game is a special case of the $\alpha$-scalable valid utility games with $\alpha=1$.
We show that the selfish caching game is an $\alpha$-scalable valid utility game with $\alpha=\max_{v\in V,i\in\I}|p^{(v,i)}|-1$ when the following two properties are satisfied.

\begin{definition}[Homogeneous Request Pattern Property]\label{assum:arrival}
The request arrival processes for content item $i\in\I$ at different nodes are the same, i.e.,
\begin{equation}\label{eq:homolambda}
\lambda_{(s,i)}=\lambda_i, \forall s \in V, i\in \I.
\end{equation}
\end{definition}

The homogeneous request pattern property implies that each content item has a global popularity. 
Note that even under the homogeneous request pattern property, the popularity of different content items can be different, i.e., $\lambda_i \neq \lambda_j, i\neq j, i,j \in \I$.

\begin{figure}[t]
\centering
\includegraphics[width=0.26\textwidth]{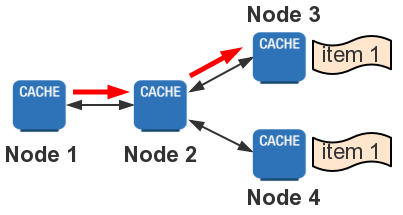}
\caption{An example that satisfies the path overlap property. Here, $p^{(2,1)}=(2,3)$ and $p^{(1,1)}=(1,2,3)$.}\label{fig:PathOverlap}
\end{figure}

\begin{definition}[Path Overlap Property]\label{assum:path}
If node $s$ is on path $p^{(v,i)}$, then starting from node $s$, path $p^{(v,i)}$ overlaps with path $p^{(s,i)}$, i.e., 
\begin{equation}\label{eq:pathoverlap}
s \in p^{(v,i)} \Rightarrow p^{(s,i)} \subseteq p^{(v,i)}.
\end{equation}
\end{definition}

Figure \ref{fig:PathOverlap} shows an example that satisfies the path overlap property. 
Note that the path overlap property is naturally satisfied when each node chooses a unique shortest path to fetch content items.

\begin{theorem}\label{theo:scalablevalid}
The selfish caching game with the homogeneous request pattern and path overlap properties on caching graphs with no mixed request loop is an $\alpha$-scalable valid utility game where 
\begin{equation}
\textstyle \alpha=\max_{v\in V,i\in\I}|p^{(v,i)}|-1.
\end{equation}
\end{theorem}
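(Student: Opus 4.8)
The plan is to verify the three defining properties of an $\alpha$-scalable valid utility game in turn, taking $\gamma(\cdot)=G(\cdot)=\sum_{s\in V}g_s$ as the social function and $f_s=g_s$ as each player's payoff. The single most useful preliminary observation is that everything decomposes over content items: since the routing cost of a request $(s,i)$ depends only on which nodes cache item $i$, both $G$ and every $g_s$ split as sums over $i\in\I$ of per-item gains. I would establish this decomposition first, because it lets me argue each property one item at a time and then sum.

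For the monotonicity and submodularity conditions \eqref{eq:prop21} and \eqref{eq:prop22}, I would exhibit the per-item aggregate gain as a weighted coverage function. Writing the gain of request $(v,i)$ as $\lambda_i\sum_k w_{p_{k+1}p_k}\big(1-\prod_{k'\le k}(1-x_{p_{k'}i})\big)$, each edge term is $w_{p_{k+1}p_k}$ multiplied by the indicator that at least one of the nodes $p_1,\dots,p_k$ caches $i$. Such a ``first upstream copy'' indicator is a monotone submodular set function of the caching decisions, non-negative weighted sums preserve both properties, and summing over requests and items preserves them as well; hence $G$ is non-decreasing and submodular. Property \eqref{eq:prop1} is then immediate and in fact holds with equality, since by definition $\gamma=G=\sum_{s}f_s$.

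The crux is the scalable condition \eqref{eq:svug}, i.e.\ $\alpha\,g_s\ge \Delta_s:=G(\boldsymbol{x}_s,\boldsymbol{x}_{-s})-G(\boldsymbol{0},\boldsymbol{x}_{-s})$ for every profile. My plan is to read $\Delta_s$ as the total extra caching gain that $s$ confers on the network when it switches from caching nothing to $\boldsymbol{x}_s$, and to charge it against $s$'s own gain $g_s$. Fixing an item $i\in Z_s$, the requests affected by $s$ caching $i$ are exactly those $(v,i)$ whose forwarding path contains $s$ and whose copy would otherwise be fetched at or beyond $s$. Here the two hypotheses do the essential work: the path overlap property \eqref{eq:pathoverlap} forces the tail of every such $p^{(v,i)}$ beyond $s$ to coincide with $p^{(s,i)}$, so there is a single common ``first downstream holder'' $u^\ast$ on $p^{(s,i)}$; and the homogeneous request property \eqref{eq:homolambda} forces each affected request to save exactly the same amount $b_i:=\lambda_i\cdot\mathrm{cost}(s\!\to\!u^\ast)$ that $s$ itself saves on its own request $(s,i)$. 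Since $g_s\ge\sum_{i\in Z_s}b_i$, it remains to bound, for each $i$, the number of affected requests by $\alpha$.

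The main obstacle is precisely this last counting step: I must show that the number of requests for $i$ that reach $s$ (including $s$'s own) is at most $\alpha=\max_{v,i}|p^{(v,i)}|-1$. I would argue along the request structure of item $i$, which under path overlap is an in-tree directed toward the servers; the affected requests correspond to upstream nodes that reach $s$ without encountering an earlier cached copy, and I would relate their number to the length of the longest request path through $s$. Establishing that this count is governed by the longest-path length $\alpha$ rather than by an unbounded fan-in is the delicate part of the argument, and the place where the no-mixed-request-loop assumption and the precise interplay of the two properties must be used carefully. Once the count is bounded by $\alpha$, summing $\Delta_s=\sum_{i\in Z_s}(\mathrm{count}_i)\,b_i\le \alpha\sum_{i\in Z_s}b_i\le\alpha\,g_s$ closes the proof.
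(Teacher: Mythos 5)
Your reduction is essentially the paper's own: Appendix C likewise decomposes $\Delta_s := G(\boldsymbol{x}_s,\boldsymbol{x}_{-s})-G(\boldsymbol{0},\boldsymbol{x}_{-s})$ per item $i\in Z_s$, uses the path overlap property so that every affected request $(v,i)$ saves exactly the amount $s$ saves on its own request (its steps $(a)$--$(b)$ are your decomposition and collapse, with homogeneity supplying $\lambda_{(v,i)}=\lambda_i$), lower-bounds $g_s$ by the sum of these per-item savings (its steps $(d)$--$(e)$ are your $g_s\ge\sum_{i\in Z_s}b_i$), and reduces the whole theorem to bounding, for each item, the number of affected requests by $\alpha$. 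Your treatment of the first two properties is fine: the paper handles monotonicity and submodularity by citation to prior work rather than your weighted-coverage argument, but your argument is correct, and \eqref{eq:prop1} indeed holds with equality since $\gamma=G=\sum_s g_s$ by definition.

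The genuine gap is the counting step you explicitly defer, and it is the entire content of the theorem beyond bookkeeping: you never prove that the number of requests for item $i$ reaching $s$ is at most $\alpha=\max_{v\in V,i\in\I}|p^{(v,i)}|-1$. The paper closes this step with the one-line cardinality assertion $|\{v\in V\setminus\mathcal{D}^i: s\in p^{(v,i)}\}|\le\max_{v\in V,i\in\I}|p^{(v,i)}|-1$ (its step $(c)$), offered without argument. Moreover, your instinct about fan-in is substantive: that assertion fails for general in-trees satisfying both hypotheses. Consider a star with center $s$, single designated server $u$, and leaves $v_1,\dots,v_n$ with $p^{(v_j,i)}=(v_j,s,u)$ and $p^{(s,i)}=(s,u)$; homogeneous rates and path overlap hold and there is no mixed request loop, yet if $s$ alone caches $i$ then $\Delta_s=(n+1)\lambda_i w_{us}$ while $g_s=\lambda_i w_{us}$ and $\alpha=2$, so \eqref{eq:svug} fails once $n\ge 2$. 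The count of affected requests is governed by the fan-in of request paths through $s$, not by the longest path length; the stated bound holds only under an additional structural restriction (e.g., the requests traversing any node form a chain, as in line-like networks). So your proposal is incomplete exactly where you flagged it, and you should be aware that the completion cannot go through at the stated value of $\alpha$ without such a restriction --- the paper's own step $(c)$ glosses over the same difficulty that stopped you.
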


\begin{proof}
See Appendix C. 
\end{proof}

In the following theorem, we show that when the selfish caching game is an $\alpha$-scalable valid utility game, the PoA is bounded by the length of the longest request forwarding path in the network.

\begin{theorem}\label{theo:PoAalpha}
When the selfish caching game is an $\alpha$-scalable valid utility game, the PoA satisfies
\begin{equation}\label{eq:PoAalpha}
PoA \geq \frac{1}{1+\alpha} = \frac{1}{\max_{v\in V,i\in\I}|p^{(v,i)}|}.
\end{equation}
\end{theorem}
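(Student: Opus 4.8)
The plan is to adapt Vetta's classical price-of-anarchy argument for valid utility games, carrying the extra factor $1/\alpha$ through the chain of inequalities. Let $Z^{\rm NE}$ denote the worst PSNE and $Z^{\rm SO}$ a socially optimal profile; I treat both as subsets of the ground set $V\times\I$ of node--item placements, so that the submodularity property \eqref{eq:prop22} applies directly. The goal is to establish $\gamma(Z^{\rm NE}) \geq \frac{1}{1+\alpha}\gamma(Z^{\rm SO})$, which immediately yields \eqref{eq:PoAalpha} since $1+\alpha = \max_{v\in V,i\in\I}|p^{(v,i)}|$, and since the bound holds for the worst PSNE it holds for the PoA.

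First I would invoke the equilibrium condition: at $Z^{\rm NE}$ no player can gain by unilaterally switching to the strategy it uses in $Z^{\rm SO}$, so $f_s(Z^{\rm NE}_s, Z^{\rm NE}_{-s}) \geq f_s(Z^{\rm SO}_s, Z^{\rm NE}_{-s})$ for every $s\in V$. Applying the $\alpha$-scalable property \eqref{eq:svug} to the right-hand side (player $s$ deviating to $Z^{\rm SO}_s$ against $Z^{\rm NE}_{-s}$), then summing over $s$ and using property \eqref{eq:prop1}, I obtain
\begin{equation*}
\gamma(Z^{\rm NE}) \geq \sum_{s\in V} f_s(Z^{\rm NE}_s, Z^{\rm NE}_{-s}) \geq \frac{1}{\alpha}\sum_{s\in V}\Big[\gamma(Z^{\rm SO}_s, Z^{\rm NE}_{-s}) - \gamma(\boldsymbol{0}, Z^{\rm NE}_{-s})\Big].
\end{equation*}

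The crux is to lower-bound the summation on the right by $\gamma(Z^{\rm SO}) - \gamma(Z^{\rm NE})$. For this I would order the players $1,\dots,|V|$ and telescope $\gamma(Z^{\rm NE}\cup Z^{\rm SO}) - \gamma(Z^{\rm NE})$ by inserting the optimal placements $Z^{\rm SO}_k$ one player at a time, building intermediate sets $B_k = Z^{\rm NE}\cup Z^{\rm SO}_1 \cup\cdots\cup Z^{\rm SO}_k$. Since $Z^{\rm NE}_{-k} \subseteq B_{k-1}$, submodularity guarantees that the marginal gain of adding $Z^{\rm SO}_k$ to the large set $B_{k-1}$ is at most its marginal gain over the smaller base $(\boldsymbol{0}, Z^{\rm NE}_{-k})$, namely $\gamma(Z^{\rm SO}_k, Z^{\rm NE}_{-k}) - \gamma(\boldsymbol{0}, Z^{\rm NE}_{-k})$. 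Summing the telescope and using monotonicity \eqref{eq:prop21} to replace $\gamma(Z^{\rm NE}\cup Z^{\rm SO})$ by the smaller $\gamma(Z^{\rm SO})$ gives
\begin{equation*}
\gamma(Z^{\rm SO}) - \gamma(Z^{\rm NE}) \leq \sum_{s\in V}\Big[\gamma(Z^{\rm SO}_s, Z^{\rm NE}_{-s}) - \gamma(\boldsymbol{0}, Z^{\rm NE}_{-s})\Big].
\end{equation*}

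Combining the two displayed inequalities yields $\alpha\,\gamma(Z^{\rm NE}) \geq \gamma(Z^{\rm SO}) - \gamma(Z^{\rm NE})$, hence $\gamma(Z^{\rm NE}) \geq \frac{1}{1+\alpha}\gamma(Z^{\rm SO})$, which completes the argument. I expect the telescoping submodularity step to be the main obstacle: one must verify that each intermediate set $B_{k-1}$ genuinely contains $Z^{\rm NE}_{-k}$ so that \eqref{eq:prop22} is applied in the correct direction, and, because \eqref{eq:prop22} is stated only for a single item $i$, one must lift it to the set $Z^{\rm SO}_k$ by a second inner item-by-item telescope (adding the items of $Z^{\rm SO}_k$ one at a time to both bases, whose inclusion is preserved at every step). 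The remaining manipulations are a mechanical chaining of the equilibrium inequality, property \eqref{eq:prop1}, and the scalable-utility bound \eqref{eq:svug}.
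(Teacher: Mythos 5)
Your proposal is correct and follows essentially the same route as the paper's Appendix D: deviate each player at equilibrium to its socially optimal strategy, apply the $\alpha$-scalable bound \eqref{eq:svug}, lower-bound the resulting sum of marginals by the telescoping-submodularity argument over $Z^{\rm NE}\cup Z^{\rm SO}_1\cup\cdots\cup Z^{\rm SO}_s$, and finish with monotonicity to obtain $\alpha\,G(\boldsymbol{x}^{\rm NE})\geq G(\boldsymbol{x}^\ast)-G(\boldsymbol{x}^{\rm NE})$. The only (harmless) cosmetic difference is that you invoke the inequality \eqref{eq:prop1} where the paper uses the equality $G=\sum_{s}g_s$, and you spell out the base-inclusion and item-by-item lifting details that the paper's step $(e)$ leaves implicit.
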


\begin{proof}
See Appendix D. 
\end{proof}

The PoA bound decreases with $\alpha$. 
The intuition is that as the length of the request forwarding path increases, the selfish behaviors of the intermediate nodes on a request forwarding path affect more succeeding nodes.
The above performance guarantee is true for general caching networks with an arbitrary topology. 
However, given a caching network with a known topology and network parameters, we can further explore the network structure and derive a better bound for PoA. 
This is achieved by characterizing the discrete curvature of the social function, as discussed next.

\begin{figure}[t]
\centering
\includegraphics[width=0.36\textwidth]{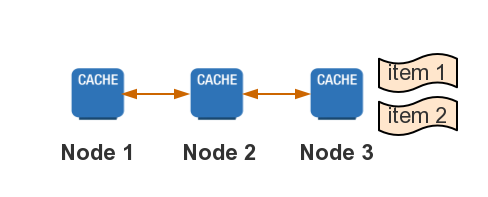}
\caption{An example to calculate the value of $\delta(G)$. We assume $\lambda_{(v,i)}=\lambda_i, \forall v\in V, i\in\I$. According to \eqref{eq:deltaGcalculation}, we have $\delta(G)=\frac{1}{1+w_{21}/w_{32}} \in [0,1].$ When $w_{21}/w_{32} \to 0$, we have $\delta(G) \to 1$; when $w_{21}/w_{32} \to \infty$, we have $\delta(G) \to 0$. }\label{fig:delta}
\end{figure}

\subsection{PoA and the Discrete Curvature of the Social Function}

To understand how the discrete curvature \cite{ValidUtilityGame} of the social function will affect our PoA analysis, we first introduce the discrete derivative. 
For a set function $G(\cdot)$, we define the discrete derivative at $Y$ in the direction $Z$ as $$ G'_Z(Y)=G(Y \cup Z) - G(Y) .$$
We define the \emph{discrete curvature} of a non-decreasing, submodular social function $G(\cdot)$ to be
\begin{equation}\label{eq:deltaGcalculation}
\delta(G)=\max_{\forall s\in V:G'_{Z_s}(\emptyset) > 0} \frac{G'_{Z_s}(\emptyset)-G'_{Z_s}(\mathcal{I}^{|V|}-Z_s)}{G'_{Z_s}(\emptyset)} \in [0,1],
\end{equation}
where $\mathcal{I}^{|V|}-Z_s$ represents the caching strategy profile (which can be infeasible) under which node $s$ caches content items in set $\I\setminus Z_s$ while all other nodes cache all content items in set $\I$. 
Figure \ref{fig:delta} shows an example to calculate the value of $\delta(G)$.

Given the discrete curvature of the social function, we can obtain a better bound for the PoA of the selfish caching game.

\begin{theorem}\label{theo:deltaG}
Given the discrete curvature $\delta(G)$ of the social function, for any selfish caching game with the homogeneous request pattern and path overlap properties on caching graphs with no mixed request loop, the PoA satisfies
\begin{equation}\label{eq:PoAdeltaG}
PoA \geq \frac{1}{\alpha+\delta(G)}.
\end{equation}
\end{theorem}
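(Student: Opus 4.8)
The plan is to mirror the proof of Theorem~\ref{theo:PoAalpha} and sharpen only the single submodularity estimate where the curvature can be exploited. Theorem~\ref{theo:scalablevalid} already certifies that, under the homogeneous request pattern and path overlap properties and with no mixed request loop, the game is an $\alpha$-scalable valid utility game, so I may freely use that $G(\cdot)$ is non-decreasing and submodular (\eqref{eq:prop21}--\eqref{eq:prop22}), that $\sum_s g_s \le G$ (\eqref{eq:prop1}), and the scalable participation bound \eqref{eq:svug} with $\alpha=\max_{v,i}|p^{(v,i)}|-1$; note also $G(\boldsymbol{0})=0$. Let $Z^{\rm NE}$ be any PSNE and $Z^{\rm SO}$ a social optimum. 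First I would reproduce the core chain: for each node $s$ the Nash condition gives $g_s(Z^{\rm NE})\ge g_s(Z_s^{\rm SO},Z_{-s}^{\rm NE})$, and \eqref{eq:svug} gives $g_s(Z_s^{\rm SO},Z_{-s}^{\rm NE})\ge \tfrac{1}{\alpha}\big[G(Z_s^{\rm SO},Z_{-s}^{\rm NE})-G(\boldsymbol{0},Z_{-s}^{\rm NE})\big]$; summing over $s$ and applying \eqref{eq:prop1} yields
\begin{equation*}
\alpha\, G(Z^{\rm NE})\ \ge\ \sum_{s\in V}\Big[G(Z_s^{\rm SO},Z_{-s}^{\rm NE})-G(\boldsymbol{0},Z_{-s}^{\rm NE})\Big].
\end{equation*}

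Next I would apply the standard submodular telescoping argument \cite{ValidUtilityGame}: writing $Z^{\rm NE}\!\cup\! Z^{\rm SO}$ for the (possibly infeasible) profile in which each node $s$ caches $Z_s^{\rm NE}\cup Z_s^{\rm SO}$, submodularity makes each summand a marginal of $Z_s^{\rm SO}$ over the smallest relevant base, so the sum dominates a node-by-node telescope and
\begin{equation*}
\sum_{s\in V}\Big[G(Z_s^{\rm SO},Z_{-s}^{\rm NE})-G(\boldsymbol{0},Z_{-s}^{\rm NE})\Big]\ \ge\ G(Z^{\rm NE}\!\cup\! Z^{\rm SO})-G(Z^{\rm NE}).
\end{equation*}
In Theorem~\ref{theo:PoAalpha} one now discards the entire NE mass via $G(Z^{\rm NE}\!\cup\! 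Z^{\rm SO})\ge G(Z^{\rm SO})$, conceding a full $-G(Z^{\rm NE})$ and producing the factor $\tfrac{1}{1+\alpha}$. The whole point of the refinement is to salvage the $(1-\delta(G))$ fraction of that mass that the curvature guarantees can never be lost.

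The heart of the argument, and the step I expect to be the main obstacle, is therefore the claim
\begin{equation*}
G(Z^{\rm NE}\!\cup\! Z^{\rm SO})-G(Z^{\rm SO})\ \ge\ \big(1-\delta(G)\big)\,G(Z^{\rm NE}),
\end{equation*}
which upgrades the previous display to $G(Z^{\rm NE}\!\cup\! Z^{\rm SO})-G(Z^{\rm NE})\ge G(Z^{\rm SO})-\delta(G)\,G(Z^{\rm NE})$. To prove it I would telescope the addition of the NE sets on top of the optimal base node by node and bound each marginal $G'_{Z_s^{\rm NE}}(\cdot)$ from below. Rearranging the definition \eqref{eq:deltaGcalculation} gives, for every node, $G'_{Z_s^{\rm NE}}(\mathcal{I}^{|V|}-Z_s^{\rm NE})\ge(1-\delta(G))\,G'_{Z_s^{\rm NE}}(\emptyset)$; since $\mathcal{I}^{|V|}-Z_s^{\rm NE}$ is the largest base against which node $s$ can add $Z_s^{\rm NE}$, submodularity makes this the worst case, so each intermediate marginal along the telescope is at least $(1-\delta(G))\,G'_{Z_s^{\rm NE}}(\emptyset)$. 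Summing and invoking subadditivity of $G$ (a consequence of submodularity with $G(\boldsymbol{0})=0$), namely $\sum_s G'_{Z_s^{\rm NE}}(\emptyset)\ge G(Z^{\rm NE})$, delivers the claim. The delicate point I must handle carefully is the base ordering: I need each telescoping base to be contained in $\mathcal{I}^{|V|}-Z_s^{\rm NE}$ so that the per-node curvature inequality applies, which means managing the overlap $Z_s^{\rm NE}\cap Z_s^{\rm SO}$ at node $s$ (on the overlap, adding $Z_s^{\rm NE}$ contributes nothing, so those items may simply be dropped from the telescope without weakening the bound). With the claim in hand, chaining the three displays gives $\alpha\,G(Z^{\rm NE})\ge G(Z^{\rm SO})-\delta(G)\,G(Z^{\rm NE})$, i.e.\ $(\alpha+\delta(G))\,G(Z^{\rm NE})\ge G(Z^{\rm SO})$, which is exactly \eqref{eq:PoAdeltaG}; setting $\delta(G)=1$ recovers Theorem~\ref{theo:PoAalpha} as a consistency check.
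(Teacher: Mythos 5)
Your first two displays are fine: they reproduce steps $(a)$--$(f)$ of the paper's proof of Theorem~\ref{theo:PoAalpha}, and the overall strategy (retain a curvature-sized fraction of the NE mass inside the union marginal) is the right intuition. But the inequality you yourself identify as the heart of the argument,
$G(Z^{\rm NE}\cup Z^{\rm SO})-G(Z^{\rm SO})\ \ge\ (1-\delta(G))\,G(Z^{\rm NE})$,
is false in general. Take any instance in which some PSNE coincides with the social optimum, $Z^{\rm NE}=Z^{\rm SO}=Z$ with $G(Z)>0$ and $\delta(G)<1$ (e.g., the single-item version of the network in Fig.~\ref{fig:delta}, where both nodes cache the item at the unique PSNE and this is also socially optimal, while $\delta(G)=1/(1+w_{21}/w_{32})<1$): the left side is $0$ and the right side is strictly positive. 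The breakdown is exactly at the overlap step you flagged and then waved away. In your telescope on top of the base $Z^{\rm SO}$, node $s$ effectively adds only $Z_s^{\rm NE}\setminus Z_s^{\rm SO}$, so the per-node curvature bound can deliver at best $(1-\delta(G))\,G'_{Z_s^{\rm NE}\setminus Z_s^{\rm SO}}(\emptyset)$. Summing gives $(1-\delta(G))\sum_{s}G'_{Z_s^{\rm NE}\setminus Z_s^{\rm SO}}(\emptyset)$, and monotonicity runs the wrong way: this sum can be far smaller than $G(Z^{\rm NE})$ (it is zero in the coincidence example). ``Dropping the overlap items without weakening the bound'' is precisely what you cannot do --- the caching gain attributable to $Z_s^{\rm NE}\cap Z_s^{\rm SO}$ disappears from the union marginal and nothing else in your chain recovers it.

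The paper's proof never attempts your intermediate claim. Following Vetta, Lemma~\ref{lemma:Ginequality} telescopes the NE strategies onto $Z^\ast$ only over the nodes with $Z_s\ne Z_s^\ast$, and Lemma~\ref{lemma:Galpha} handles the coincident nodes ($Z_s=Z_s^\ast$) by charging them a second time to the $\alpha$-scalable inequality \eqref{eq:svug} (steps $(c)$--$(d)$), assembling the residual $-\sum_{s\in V}G'_{Z_s}(Z^\ast\cup Z-Z_s)$ over \emph{all} nodes from these two different sources. Only then is curvature invoked: each leave-one-out marginal is lowered by submodularity to $G'_{Z_s}(\mathcal{I}^{|V|}-Z_s)$ and compared, via the ratio $G'_{Z_s}(\mathcal{I}^{|V|}-Z_s)/G'_{Z_s}(\emptyset)\ge 1-\delta(G)$ from \eqref{eq:deltaGcalculation}, against the exact telescoping identity $\sum_{s}G'_{Z_s}(Z^{s-1})=G(Z)$, which retains the full NE mass with no overlap loss. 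To repair your write-up you would need to restructure along these lines; as written, the displayed key claim is unprovable and the proof does not go through.
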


\begin{proof}
See Appendix E. 
\end{proof}

The PoA bound decreases with $\delta(G)$. 
The intuition is that under a larger $\delta(G)$, the selfish behavior of a cache node has a greater impact on the achieved social welfare. 
Since $\delta(G)\in [0,1]$ exploits the curvature property of the given network structure, the performance guarantee in \eqref{eq:PoAdeltaG} is better than the one in \eqref{eq:PoAalpha}.

\section{Selfish Caching Games with Unequal-Sized Items}\label{sec:Approx}

In this section, we analyze more general selfish caching games with unequal-sized items, which correspond to the practical applications such as video streaming services over HTTP (e.g., Netflix and Hulu) that split each video into segments of lengths from 2 to 10 seconds \cite{huang2012confused}.  
We show that the caching gain maximization problem for each cache node is NP-hard. 
We further generalize the model by considering that each cache node has limited computational capability to solve its caching gain maximization problem. 
This may lead to an approximate PSNE. 
We analyze the existence and efficiency of an approximate PSNE under these two generalizations.

\subsection{Game Modeling}

Let $L_i$ denote the size of content item $i$, for all $i\in\I$. 
We define the selfish caching game with unequal-sized items as follows:

\begin{game}[Selfish Caching Game with Unequal-Sized Items]\label{gameapp}
$ $
\begin{itemize}
\item Players: the set $V$ of cache nodes on the caching graph G(V,E); 
\item Strategies: the caching strategy $\boldsymbol{x}_s=\{x_{si}: \forall i\in\I\}$ for each cache node $s\in V$, where $x_{si}\in\{0,1\}$ and $\sum_{i\in\I}L_ix_{si} \leq c_s$;
\item Payoffs: the caching gain $g_s(\boldsymbol{x}_s,\boldsymbol{x}_{-s})$ for each $s\in V$.
\end{itemize}
\end{game}

In the following, we will show that for each cache node $s\in V$, given fixed $\boldsymbol{x}_{-s}$, its caching gain maximization problem is equivalent to a knapsack problem. 
For node $s \in V$, the caching gain in \eqref{eq:CachingGain} can be equivalently written as
\begin{equation}\label{eq:gsapp}
\begin{aligned}
g_s(\boldsymbol{x}_s,\boldsymbol{x}_{-s}) &= \sum_{i\in \I} \lambda_{(s,i)} \sum_{k=1}^{|p^{(s,i)}|-1}w_{p_{k+1}p_k} ( 1-\prod_{k'=2}^k(1-x_{p_{k'}i}) ) \\
& + \sum_{i\in \I } x_{si} \cdot \lambda_{(s,i)} \sum_{k=1}^{|p^{(s,i)}|-1}w_{p_{k+1}p_k} \prod_{k'=2}^k(1-x_{p_{k'}i}).
\end{aligned}
\end{equation}
Given fixed $\boldsymbol{x}_{-s}$, the first term in \eqref{eq:gsapp} is a constant, while the second term in \eqref{eq:gsapp} depends on $\boldsymbol{x}_s$. 
Define weight 
\begin{equation}\label{eq:weightqsi}
\textstyle q_{si}(\boldsymbol{x}_{-s})=\lambda_{(s,i)} \sum_{k=1}^{|p^{(s,i)}|-1}w_{p_{k+1}p_k} \prod_{k'=2}^k(1-x_{p_{k'}i}).
\end{equation}
Intuitively, $q_{si}(\boldsymbol{x}_{-s})$ represents the routing cost for request $(s,i)$ under $\boldsymbol{x}_{-s}$ if $x_{si}=0$.
Given fixed $\boldsymbol{x}_{-s}$, the caching gain maximization problem of node $s$ is equivalent to the following knapsack problem:
\begin{equation}\label{prob:knapsack}
\begin{aligned}
& \max_{\boldsymbol{x}_s} ~\sum_{i\in \I } x_{si} \cdot q_{si}(\boldsymbol{x}_{-s}) \\
& ~~\mbox{s.t.} ~~~~~ \textstyle  ~\sum_{i\in\I}L_ix_{si} \leq c_s,  x_{si}\in\{0,1\}, \forall i\in \I.
\end{aligned}
\end{equation}

Solving the knapsack problem \eqref{prob:knapsack} is NP-hard.\footnote{When $L_i=L_j, \forall i,j\in \I$, problem \eqref{prob:knapsack} is a max-weight knapsack problem, which is easy to solve and corresponds to the scenario with equal-sized items in Sections \ref{sec:model}--\ref{sec:PoA}.} 
In practice, each cache node has limited computational capability in a short time period (e.g., minutes or hours for which the request patterns remain unchanged \cite{garetto2015efficient}), and can only solve the knapsack problem \eqref{prob:knapsack} to an approximate solution $\hat{\boldsymbol{x}}_s=\{\hat{x}_{si}:\forall i\in \I\}$. 
There is extensive literature on the polynomial-time approximation algorithms for the knapsack problem \cite{KnapsackBook}. 
We present one such algorithm in Lines 4-12 of Algorithm \ref{algo:FindNECloud}, which achieves a $1/2$ approximation ratio (see Section 9.4.2 of \cite{KnapsackBook}).

Now we consider the general case where cache nodes obtain only a ${1}/{\beta}$ approximate solution with $\beta >1$ for problem \eqref{prob:knapsack}. 
This leads to the $\beta$-approximate PSNE of Game \ref{gameapp}.

\subsection{Existence of an Approximate PSNE}

A $\beta$-approximate PSNE is a strategy profile for which no player can improve its caching gain by a factor more than $\beta$ of its current caching gain by unilaterally changing its strategy.\footnote{An alternative notion of approximate PSNE (see, e.g., \cite{ApproximateNEadditive}) is based on an additive error, rather than the multiplicative error. Our definition is equally natural, and indeed more in line with the notion of price of anarchy in game theory \cite{CSgame, ValidUtilityGame, ApproximateNE}.} 
\begin{definition}[$\beta$-Approximate PSNE \cite{ApproximateNE}]
A pure strategy profile $\boldsymbol{x}^{\beta-NE}$ is a $\beta$-approximate PSNE if no player can find an alternative pure strategy with a payoff which is more that $\beta$ times its current payoff. That is for any player $s\in V$, 
\begin{equation}
g_s(\boldsymbol{x}_s',\boldsymbol{x}_{-s}^{\beta-NE}) \leq \beta \cdot g_s(\boldsymbol{x}_s^{\beta-NE},\boldsymbol{x}_{-s}^{\beta-NE}), \mbox{ for all feasible } \boldsymbol{x}_s'.
\end{equation}
\end{definition}

Next, we show that a $\beta$-approximate PSNE exists when the following property is satisfied:

\begin{definition}[Cloud Property]\label{assum:cloud}
All content items are stored in the same designated server node, i.e.,
\begin{equation}\label{eq:cloud}
|\S^i|=1 \mbox{ and } \S^i=\S^j, \forall i\neq j, i,j\in\I. 
\end{equation}
Furthermore, for each cache node $s\in V$, its request forwarding path for different content items is the same, i.e., 
\begin{equation}\label{eq:samepath}
p^{(s,i)}=p^s, \forall i\in\I .
\end{equation}
\end{definition}
In practice, a network in which all content items are stored in the cloud server satisfies \eqref{eq:cloud}. 
Note that \eqref{eq:samepath} is naturally satisfied when each node chooses the unique shortest path to fetch content items. 
Furthermore, \eqref{eq:samepath} is naturally satisfied in a tree topology.

In the following, we show that a $\beta$-approximate PSNE exists in Game \ref{gameapp} with the cloud property and the path overlap property in \eqref{eq:pathoverlap}.\footnote{The cloud property and the path overlap property are sufficient conditions for existence. Analyzing the sufficient and necessary conditions for existence of (approximate) PSNE on graphs is an open problem, and we will consider it in the future work.} 
Note that in the caching graph satisfying the cloud property and the path overlap property, there is no mixed request loop.

\begin{theorem}\label{theo:ExistenceBeta}
A $\beta$-approximate PSNE always exists in Game \ref{gameapp} with the cloud property and the path overlap property. 
\end{theorem}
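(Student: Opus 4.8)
The plan is to exploit the fact that, under the cloud property and the path overlap property, the request forwarding paths organize the caching nodes into an in-tree rooted at the common server. This makes the payoff interdependencies acyclic and permits a single-pass construction of a $\beta$-approximate PSNE, rather than requiring an iterative best-response argument as in Theorem \ref{theo:Existence}.

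First I would establish the tree structure. Under the cloud property \eqref{eq:cloud}--\eqref{eq:samepath}, every node $s$ has a single forwarding path $p^s$ ending at the unique server, so the next hop out of $s$ is a well-defined parent; thus the parent relation by itself already defines a tree. The path overlap property \eqref{eq:pathoverlap} guarantees that whenever the path of some node $v$ passes through $s$, it subsequently coincides with $p^s$, so all paths are consistent with this parent relation and the caching nodes form an in-tree rooted at the server (this is exactly the structural content behind the already-noted absence of a mixed request loop). The crucial consequence, read directly off the routing-cost expression $h_{(s,i)}(\cdot)$, is that $g_s$ depends only on $\boldsymbol{x}_s$ and on the decisions $\{x_{vi}: v\in V_s\}$ of the intermediate nodes on $p^s$, i.e. on the strategies of $s$'s ancestors; conversely, $s$'s own strategy influences only the payoffs of $s$ and its descendants.

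Second I would order the caching nodes topologically by increasing distance to the server (equivalently, by tree depth), so that every node is processed after all of its ancestors, and build the profile $\hat{\boldsymbol{x}}$ in one pass. When node $s$ is processed, the strategies of all nodes in $V_s$ are already fixed, so $s$'s caching-gain maximization reduces to the knapsack problem \eqref{prob:knapsack} with the fixed weights $q_{si}(\hat{\boldsymbol{x}}_{-s})$ from \eqref{eq:weightqsi}; I would set $\hat{\boldsymbol{x}}_s$ to the $1/\beta$-approximate knapsack solution, which by construction satisfies $g_s(\hat{\boldsymbol{x}}_s,\hat{\boldsymbol{x}}_{-s}) \geq \tfrac{1}{\beta}\max_{\boldsymbol{x}_s'} g_s(\boldsymbol{x}_s',\hat{\boldsymbol{x}}_{-s})$. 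Because $s$'s payoff is unaffected by the strategies of the nodes fixed after it (its descendants), the partial profile that $s$ sees at processing time agrees with the final $\hat{\boldsymbol{x}}_{-s}$ on every coordinate entering $g_s$, so this inequality persists at the final profile and, rearranged, gives $g_s(\boldsymbol{x}_s',\hat{\boldsymbol{x}}_{-s}) \leq \beta\, g_s(\hat{\boldsymbol{x}}_s,\hat{\boldsymbol{x}}_{-s})$ for every feasible $\boldsymbol{x}_s'$ and every $s$, which is precisely the defining condition of a $\beta$-approximate PSNE.

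The main obstacle I anticipate is the rigorous justification that the dependency is genuinely acyclic, so that the single pass is self-consistent: one must verify from the cloud and path overlap properties that the parent relation induces no cycle, and that a node's decision, once fixed, can never be invalidated by a later node because later-fixed nodes lie strictly among its descendants. Once the in-tree structure and this ``descendants do not affect ancestors' payoffs'' property are nailed down, the remaining steps are routine applications of the $1/\beta$ knapsack approximation guarantee and of the definition of the $\beta$-approximate PSNE.
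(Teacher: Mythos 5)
Your proposal is correct and is essentially the paper's own argument: the paper's Algorithm~\ref{algo:FindNECloud} performs exactly this single-pass construction, ordering nodes by the number of intermediate nodes $|V_s|=m$ (which coincides with your tree depth, since the cloud and path overlap properties make the parent relation an in-tree rooted at the server) and letting each node solve its knapsack \eqref{prob:knapsack} to a $1/\beta$-approximation given the already-fixed upstream strategies. Your observation that $g_s$ depends only on ancestors, so later-fixed nodes cannot invalidate the approximate best-response inequality, is precisely the paper's justification (via ``if $v\in V_s$ then $s\notin V_v$'') for why the sequential pass yields a $\beta$-approximate PSNE.
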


\begin{proof}
See Appendix F. 
\end{proof}

The result holds in arbitrary-topology networks with heterogeneous content popularity. 
However, the complexity for finding an approximate PSNE may grow exponentially with the number of nodes and their strategies in general.

\subsection{Polynomial-Time Algorithm to Find an Approximate PSNE}

In this section, we propose a polynomial-time algorithm to find an approximate PSNE of Game \ref{gameapp}.

With the cloud property in \eqref{eq:cloud} and \eqref{eq:samepath}, and given designated server node $u$, the caching gain of each cache node $s\in V$ depends on not only $\boldsymbol{x}_s$ but also $\{\boldsymbol{x}_v: v\in V_s\}$, where $V_s=\{v: v\in p^s, v\neq s, v\neq u\}$ is the set of intermediate nodes on node $s$' request forwarding path $p^s$. 
We group nodes with the same number of intermediate nodes into one set, i.e., denote the set of nodes with $|V_s|=m$ by $\mathcal{V}^m=\{s\in V: |V_s|=m\}$ where $0 \leq m \leq |V|-2$. 
Note that since node $u$ stores all content items, its caching strategy does not affect other nodes.

If the path overlap property in \eqref{eq:pathoverlap} is satisfied, we know that if node $v\in V_s$, then $s\notin V_v$. 
That is, if node $v$ in on path $p^s$, then node $s$ is not on path $p^v$. 
Hence, for each node $s\in \mathcal{V}^m$ with $|V_s|=m$ intermediate nodes in set $V_s=\{v: v\in p^s, v\neq s, v\neq u\}$, every intermediate node $v\in V_s$ has a smaller number of intermediate nodes, i.e., $|V_v|<m$. 
This motivates us to find the equilibrium strategies for nodes in sets $\mathcal{V}^m, 0 \leq m \leq |V|-2,$  according to the increasing order of $m$.

\begin{algorithm}[t]
\LinesNumbered
\SetAlgoLined
\begin{small}
\KwIn{$G(V,E), \I, w_{uv}, \forall (u,v)\in E, \lambda_{(s,i)}$ and $p^{(s,i)}$, for all $s\in V, i\in \I$}
\KwOut{$\boldsymbol{x}^{\beta-NE}$}
Classify nodes into sets $\mathcal{V}^m$ for $0 \leq m \leq |V|-2$\;
\For{$m=0:|V|-2$}{
\For{$s\in \mathcal{V}^m$}{
Set $\hat{\boldsymbol{x}}_s=\boldsymbol{0}$, i.e., $\hat{x}_{si}=0, \forall i\in \I$\;
Relax problem \eqref{prob:knapsack} to a linear programming problem by relaxing $\boldsymbol{x}_s\in \{0,1\}^{|\I|}$ to $\widetilde{\boldsymbol{x}}_s\in [0,1]^{|\I|}$\; 
Compute an optimal solution $\widetilde{\boldsymbol{x}}_s^{\ast}$ of the LP-relaxation\;
Set $I_s=\{i:\widetilde{x}_{si}^{\ast}=1\}$ and $F_s=\{i:0<\widetilde{x}_{si}^{\ast}<1\}$\;
\eIf{$\sum_{i\in I_s}q_{(s,i)}(\boldsymbol{x}_{-s})>\max_{i\in F_s}q_{(s,i)}(\boldsymbol{x}_{-s})$}{
Set $\hat{x}_{si}=1, \forall i\in I_s$\;
}{
Set $\hat{x}_{sj}=1$ for $j=\arg \max_{i\in F_s}q_{(s,i)}(\boldsymbol{x}_{-s})$\;
}
}
}
Set $\boldsymbol{x}^{\beta-NE}=\hat{\boldsymbol{x}}$\;
\end{small}
\caption{Find $\beta$-Approximate PSNE}
\label{algo:FindNECloud}
\end{algorithm}

We propose a polynomial-time algorithm (Algorithm \ref{algo:FindNECloud}) to find a $\beta$-approximate PSNE. 
Specifically, We find the equilibrium strategies of nodes in sets $\mathcal{V}^m$ for $0 \leq m \leq |V|-2$ sequentially (Lines 1-2 of Algorithm \ref{algo:FindNECloud}). 
For example, for node $s\in \mathcal{V}^0$ such that $V_s=\emptyset$ (Line 3 of Algorithm \ref{algo:FindNECloud}), its $\beta$-approximate equilibrium strategy $\boldsymbol{x}_s^{\beta-NE}$ is the $\beta$-approximate solution to problem \eqref{prob:knapsack}, calculated by Lines 4-12 of Algorithm \ref{algo:FindNECloud}. 
Note that for a node $s\in\mathcal{V}^0$, $q_{si}(\boldsymbol{x}_{-s})=q_{si}$ is a constant independent of other nodes' strategies. 
For node $s\in \mathcal{V}^m$ with $1 \leq m \leq |V|-2$ (Line 3 of Algorithm \ref{algo:FindNECloud}), its equilibrium strategy is the $\beta$-approximate solution to problem \eqref{prob:knapsack}, calculated by Lines 4-12 of Algorithm \ref{algo:FindNECloud}.\footnote{Lines 4-12 of Algorithm \ref{algo:FindNECloud} achieve a $1/2$ approximation ratio of problem \eqref{prob:knapsack}, and hence $\beta=2$. Note that $\beta$ is identical across all nodes.} 
Note that its equilibrium strategy depends only on the caching strategies of nodes $v\in V_s$ in its intermediate node set with $|V_v| < m$, and hence $q_{si}(\boldsymbol{x}_{-s})=q_{si}(\{\boldsymbol{x}_{v}^{\beta-NE}:v\in V_s\})$. 
We continue this sequential process until all nodes decide their $\beta$-approximate equilibrium caching strategies. 
The resulting caching strategy profile is a $\beta$-approximate PSNE of Game \ref{gameapp}.

In the following theorem, we show that Algorithm \ref{algo:FindNECloud} can find a $\beta$-approximate PSNE of Game \ref{gameapp} in polynomial time.

\begin{theorem}\label{theo:ExistenceAPP}
For Game \ref{gameapp} with the cloud property and the path overlap property, Algorithm \ref{algo:FindNECloud} computes a $\beta$-approximate PSNE in $\mathcal{O}(|V||\I|)$ time. 
\end{theorem}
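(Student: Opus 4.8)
The plan is to prove the two assertions separately: that Algorithm~\ref{algo:FindNECloud} terminates with a $\beta$-approximate PSNE (correctness) and that it runs in $\mathcal{O}(|V||\I|)$ time (complexity). Both rest on the structural observation, already recorded above, that under the cloud and path overlap properties the dependency relation among players is acyclic and graded by $|V_s|$. Concretely, whenever $v\in V_s$ the path overlap property~\eqref{eq:pathoverlap} forces $p^s$ to contain $p^v$ as the suffix starting at $v$, so $s\notin V_v$ and $V_v\subsetneq V_s$, whence $|V_v|<|V_s|$. Since $g_s$ depends on $\boldsymbol{x}_{-s}$ only through $\{\boldsymbol{x}_v:v\in V_s\}$, the relation ``$s$'s payoff depends on $v$'s strategy'' always points from a node in $\mathcal{V}^m$ to nodes lying in $\bigcup_{m'<m}\mathcal{V}^{m'}$; this is exactly the absence of a mixed request loop specialized to the cloud setting.

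First I would establish correctness. Processing the sets $\mathcal{V}^0,\mathcal{V}^1,\dots,\mathcal{V}^{|V|-2}$ in increasing order of $m$ ensures that when node $s\in\mathcal{V}^m$ is handled, every node $v\in V_s$ belongs to some $\mathcal{V}^{m'}$ with $m'<m$ and has already received its final strategy $\boldsymbol{x}_v^{\beta-NE}$. Hence the weights $q_{si}(\boldsymbol{x}_{-s})=q_{si}(\{\boldsymbol{x}_v^{\beta-NE}:v\in V_s\})$ of~\eqref{eq:weightqsi} are fully determined at that moment, and Lines~4--12 return a $\hat{\boldsymbol{x}}_s$ that is a $1/\beta$-approximate solution of the knapsack problem~\eqref{prob:knapsack} with $\beta=2$. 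I would then argue that this assignment is never invalidated later: any node $w$ whose payoff depends on $s$ satisfies $s\in V_w$, hence $V_s\subsetneq V_w$ and $|V_w|>m$, so $w$ is processed strictly after $s$, and the strategies of $s$'s dependency set stay frozen from the moment $s$ is processed onward. Finally, using the decomposition $g_s(\boldsymbol{x}_s,\boldsymbol{x}_{-s})=C_s(\boldsymbol{x}_{-s})+\sum_{i\in\I}x_{si}\,q_{si}(\boldsymbol{x}_{-s})$ read off from~\eqref{eq:gsapp}, where the constant term $C_s\ge 0$, the knapsack guarantee $\beta\sum_i\hat{x}_{si}q_{si}\ge\sum_i x_{si}'q_{si}$ for every feasible $\boldsymbol{x}_s'$ lifts (because $\beta C_s\ge C_s$) to $\beta\,g_s(\hat{\boldsymbol{x}}_s,\boldsymbol{x}_{-s})\ge g_s(\boldsymbol{x}_s',\boldsymbol{x}_{-s})$, so node $s$ cannot improve its payoff by more than a factor $\beta$. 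Since this holds simultaneously for all $s\in V$ at termination, the output $\boldsymbol{x}^{\beta-NE}$ meets the defining inequality of a $\beta$-approximate PSNE.

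For the complexity bound I would exhibit the algorithm as a single non-backtracking sweep. Classifying the nodes into the sets $\mathcal{V}^m$ amounts to reading off $|V_s|=|p^s|-2$ from the given paths and is linear. Each node is then visited exactly once, and for a visited node $s$ the work is: forming the LP-relaxation of~\eqref{prob:knapsack}, computing its optimal fractional solution (which has at most one fractional coordinate, so $|F_s|\le 1$, using a linear-time weighted-median computation), and the constant-factor comparison of Lines~8--12. Each of these is $\mathcal{O}(|\I|)$, and because there is exactly one visit per node with no round ever repeated, the total running time is $\mathcal{O}(|V||\I|)$.

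I expect the main obstacle to be the correctness step rather than the timing: one must verify not only that each node $\beta$-best-responds to the data available when it is processed, but that this local guarantee composes into a \emph{global} $\beta$-approximate equilibrium. This is precisely where the acyclicity forced by the path overlap property does the work, guaranteeing that freezing a node's strategy never disturbs the best-response conditions of the nodes decided earlier; the remaining care is the elementary but necessary check that the additive constant $C_s$ in the payoff does not erode the multiplicative $\beta$ guarantee inherited from the knapsack routine.
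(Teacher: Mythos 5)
Your proposal is correct and follows the same route as the paper: a single topological sweep over the classes $\mathcal{V}^0,\dots,\mathcal{V}^{|V|-2}$ (well-defined because the path overlap property gives $v\in V_s \Rightarrow |V_v|<|V_s|$), with each node solved once by the $\mathcal{O}(|\I|)$ knapsack approximation, yielding $\mathcal{O}(|V||\I|)$ overall. In fact you are more thorough than the paper's own proof, which only counts steps and cites the knapsack routine; your explicit checks that frozen strategies in $V_s$ are never disturbed by later rounds, and that the multiplicative knapsack guarantee lifts through the additive constant $C_s(\boldsymbol{x}_{-s})\geq 0$ to the full payoff $g_s$, are exactly the correctness details the paper leaves implicit in the discussion preceding the theorem.
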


\begin{proof}
See Appendix G. 
\end{proof}

We next analyze the PoA of Game \ref{gameapp}.

\subsection{Price of Anarchy}

we show that the PoA for the $\beta$-approximate PSNE is bounded under the homogeneous request pattern property in \eqref{eq:homolambda}. 

\begin{theorem}\label{theo:PoAapp}
For Game \ref{gameapp} with the cloud property, the path overlap property, and the homogeneous request pattern property, the PoA for the $\beta$-approximate PSNE satisfies
\begin{equation}\label{eq:PoAbetaapp}
PoA^{\beta} \geq \frac{1}{1+\alpha \cdot \beta} = \frac{1}{1+ \beta \cdot \left( \max_{v\in V,i\in\I}|p^{(v,i)}| -1\right)}. 
\end{equation}
\end{theorem}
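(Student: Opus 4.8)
The plan is to show that Game \ref{gameapp} under the three stated properties is an $\alpha$-scalable valid utility game with the same $\alpha=\max_{v\in V,i\in\I}|p^{(v,i)}|-1$ as in Theorem \ref{theo:scalablevalid}, and then to rerun the price-of-anarchy argument behind Theorem \ref{theo:PoAalpha}, inserting the multiplicative slack $\beta$ that the approximate equilibrium condition permits. The first observation I would make is that the item sizes $L_i$ enter Game \ref{gameapp} only through the knapsack feasibility constraint $\sum_{i\in\I}L_i x_{si}\le c_s$: the caching gain $g_s$ and the social function $G$ are the exact same set functions of the cached-item profile $Z$ as in the equal-sized game. Since the cloud property forces a single designated server and a single path $p^s$ per node, the caching graph has no mixed request loop, so the function-level hypotheses of Theorem \ref{theo:scalablevalid} are met. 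Hence the non-decreasing and submodular property \eqref{eq:prop21}--\eqref{eq:prop22} of $G$, the valid-utility inequality \eqref{eq:prop1} (which holds with equality here, since $G(Z)=\sum_{s\in V}g_s$), and the $\alpha$-scalable inequality \eqref{eq:svug} all carry over, yielding the desired $\alpha$.

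Next I would bound the optimal social welfare against a $\beta$-approximate equilibrium $Z^{\beta-NE}$. Let $Z^{\ast}$ denote a socially optimal profile. Using monotonicity to pass from $Z^{\ast}$ to the player-wise overlay $Z^{\ast}\cup Z^{\beta-NE}$, and then applying submodularity to decompose the total marginal gain into a sum of per-player marginals evaluated at the smaller base $(\boldsymbol{0},Z^{\beta-NE}_{-s})$, I would establish
\begin{equation*}
G(Z^{\ast})-G(Z^{\beta-NE}) \le \sum_{s\in V}\Big[ G(Z^{\ast}_s, Z^{\beta-NE}_{-s}) - G(\boldsymbol{0}, Z^{\beta-NE}_{-s}) \Big].
\end{equation*}
The $\alpha$-scalable inequality \eqref{eq:svug} then upper-bounds each summand by $\alpha\, g_s(Z^{\ast}_s, Z^{\beta-NE}_{-s})$.

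The crucial new step is the equilibrium condition. Because $Z^{\ast}_s$ is a feasible caching strategy for node $s$ in the knapsack problem \eqref{prob:knapsack}, the definition of a $\beta$-approximate PSNE applies with the deviation $\boldsymbol{x}_s'=Z^{\ast}_s$, giving $g_s(Z^{\ast}_s, Z^{\beta-NE}_{-s}) \le \beta\, g_s(Z^{\beta-NE}_s, Z^{\beta-NE}_{-s})$. Substituting, summing over $s$, and invoking $\sum_{s\in V} g_s(Z^{\beta-NE}_s, Z^{\beta-NE}_{-s}) = G(Z^{\beta-NE})$, I obtain
\begin{equation*}
G(Z^{\ast})-G(Z^{\beta-NE}) \le \alpha\beta \sum_{s\in V} g_s(Z^{\beta-NE}_s, Z^{\beta-NE}_{-s}) = \alpha\beta\, G(Z^{\beta-NE}),
\end{equation*}
so that $G(Z^{\beta-NE})/G(Z^{\ast}) \ge 1/(1+\alpha\beta)$. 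Taking the worst case over all $\beta$-approximate equilibria gives \eqref{eq:PoAbetaapp}, and setting $\beta=1$ recovers Theorem \ref{theo:PoAalpha}, confirming consistency.

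The main obstacle I anticipate is the first step: verifying rigorously that the $\alpha$-scalable valid utility structure transfers from the equal-sized game to Game \ref{gameapp}. Although the functional forms of $g_s$ and $G$ are identical, one must confirm that replacing max-weight selection by a capacity-constrained knapsack affects none of the four defining inequalities; this is true because those inequalities quantify over strategy profiles rather than over the feasibility region, but it should be stated explicitly. A secondary technical care point is the submodularity telescoping: the per-player marginals must be taken at the base $(\boldsymbol{0}, Z^{\beta-NE}_{-s})$ rather than at $Z^{\beta-NE}$, which requires a second application of submodularity together with monotonicity, exactly as in the proof of Theorem \ref{theo:PoAalpha}.
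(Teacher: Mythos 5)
Your proposal is correct and follows essentially the same route as the paper's proof: both chain the $\beta$-approximate equilibrium condition applied to the socially optimal deviation $\boldsymbol{x}_s^{\ast}$, the $\alpha$-scalable inequality \eqref{eq:svug}, and the submodularity/monotonicity telescoping from the proof of Theorem \ref{theo:PoAalpha} to obtain $G(\boldsymbol{x}^{\ast})-G(\boldsymbol{x}^{\beta-NE})\leq \alpha\beta\, G(\boldsymbol{x}^{\beta-NE})$. Your only addition is to verify explicitly that the $\alpha$-scalable valid utility structure of Theorem \ref{theo:scalablevalid} transfers to Game \ref{gameapp} because item sizes enter only through the knapsack feasibility constraint and not through $g_s$ or $G$ (and, per the paper's remark, the cloud and path overlap properties \emph{together} rule out mixed request loops) — a step the paper takes implicitly when it invokes \eqref{eq:svug}.
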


\begin{proof}
See Appendix H. 
\end{proof}

The performance guarantee holds for arbitrary caching networks. 
However, due to cache nodes' limited computational capabilities, the guarantee for the approximate PSNE in \eqref{eq:PoAbetaapp} is worse than the one in \eqref{eq:PoAalpha} for the equal-sized item case.

\section{Simulation Results}\label{sec:simu}

\begin{figure*}[t] 
\centering 
\begin{minipage}[t]{0.49 \linewidth}
\centering
\includegraphics[width=1\textwidth]{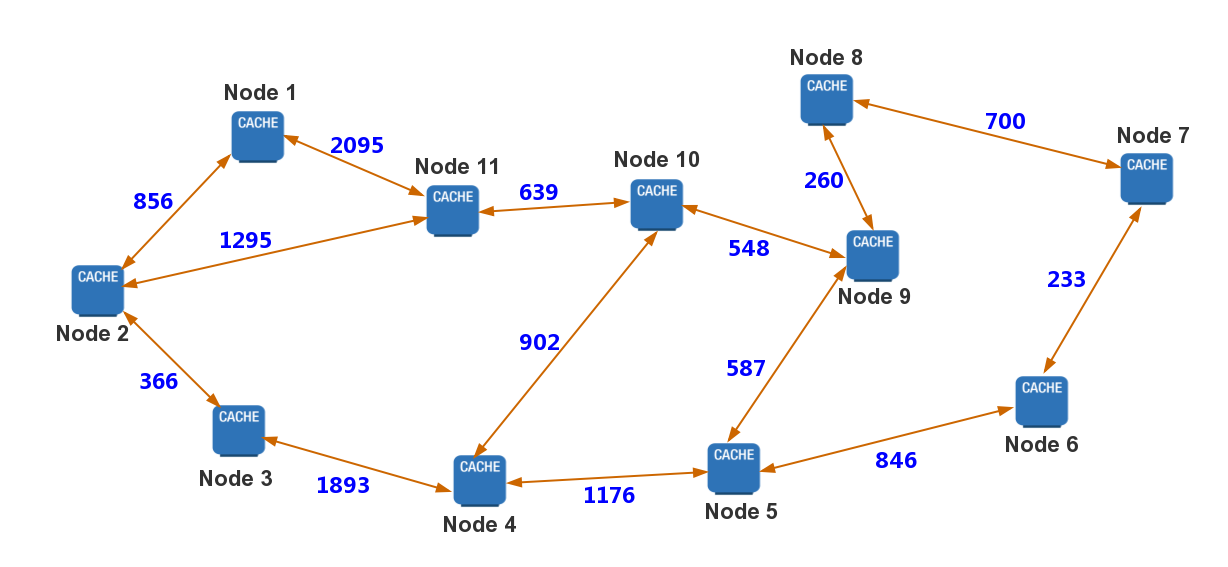}
\caption{Abilene network.}\label{fig:Abilene}
\end{minipage}
\begin{minipage}[t]{0.005 \linewidth}
~
\end{minipage}
\begin{minipage}[t]{0.49 \linewidth}
\centering
\includegraphics[width=1\textwidth]{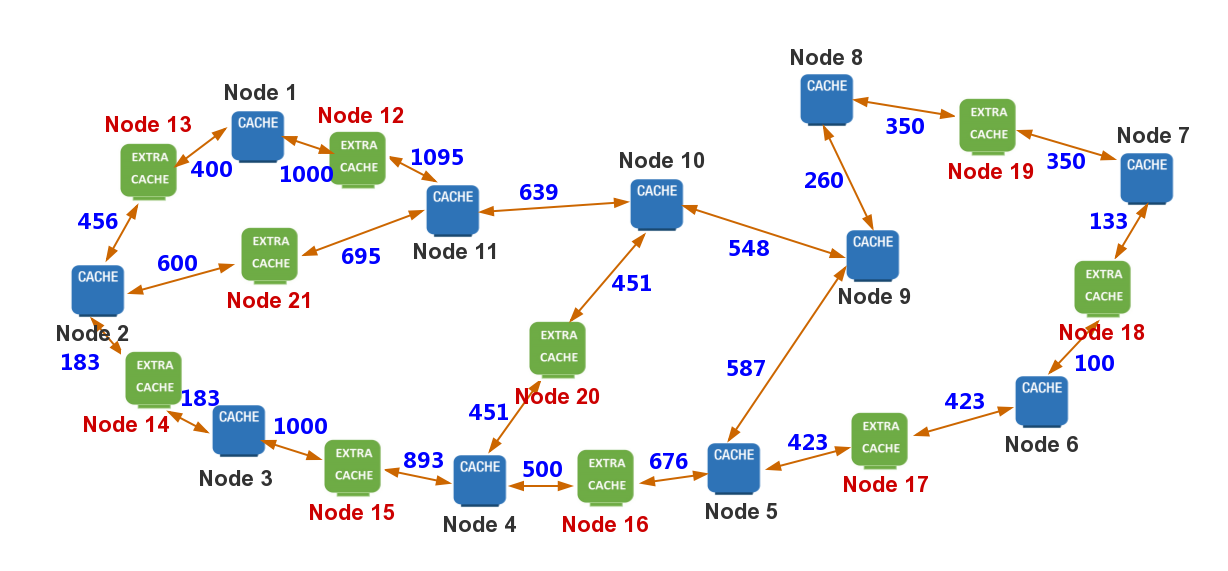}
\caption{Abilene network with extra nodes.}\label{fig:AbileneAdd}
\end{minipage}
\end{figure*}

\begin{figure*}[t] 
\centering 
\begin{minipage}[t]{0.23 \linewidth}
\centering
\includegraphics[width=1\textwidth]{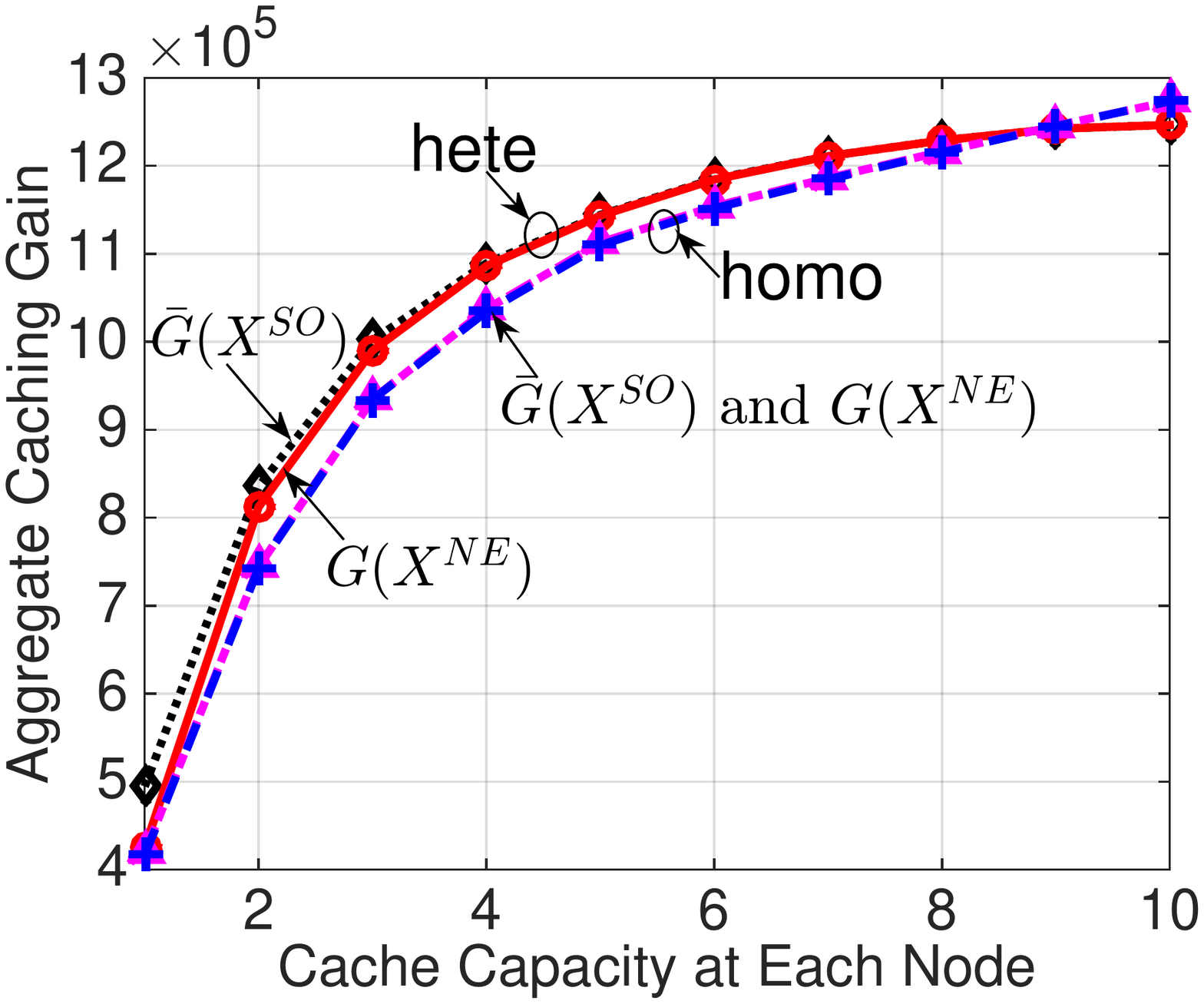}
\caption{$G(\cdot)$ vs. $c_v$, under both heterogeneous and homogeneous request patterns.}\label{fig:HomoLambda_Abilene10}
\end{minipage}
\begin{minipage}[t]{0.005 \linewidth}
~
\end{minipage}
\begin{minipage}[t]{0.23 \linewidth}
\centering
\includegraphics[width=1\textwidth]{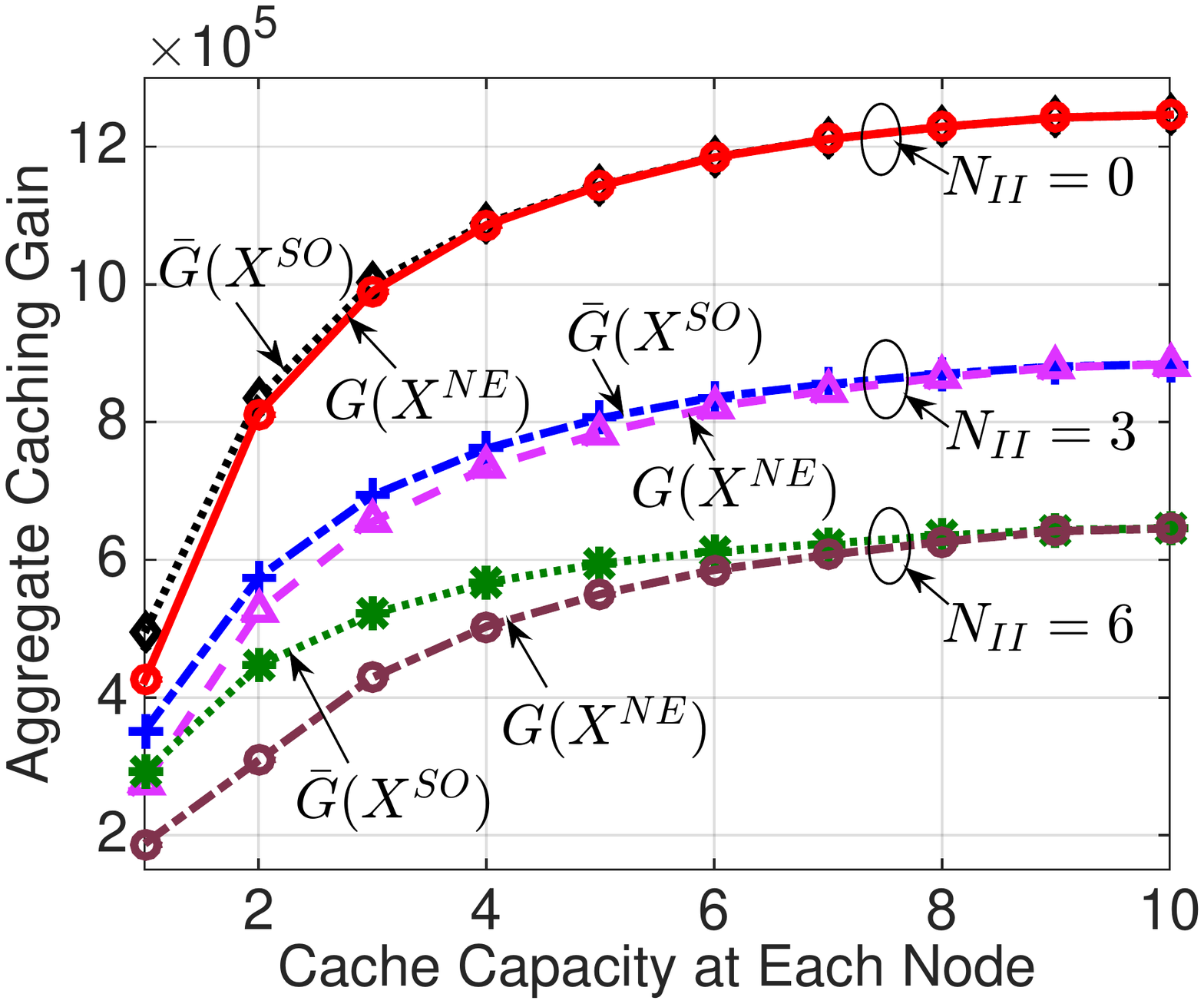}
\caption{$G(\cdot)$ vs. $c_v$, under different no. of Type-II nodes $N_{II}$.}\label{fig:IdleNodes_Abilene10}
\end{minipage}
\begin{minipage}[t]{0.005 \linewidth}
~
\end{minipage}
\begin{minipage}[t]{0.23 \linewidth}
\centering
\includegraphics[width=1.024\textwidth]{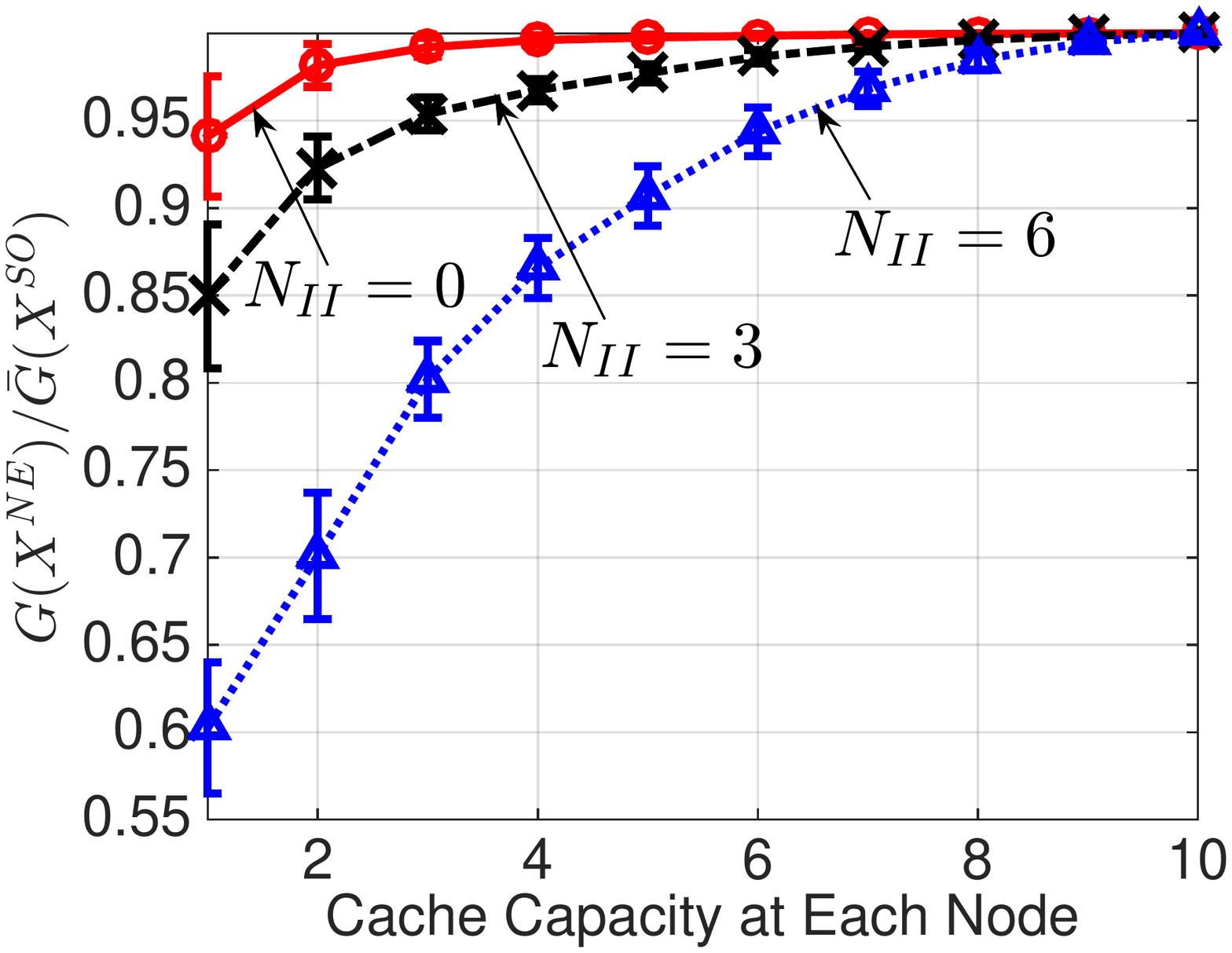}
\caption{$G(\boldsymbol{x}^{\rm NE})/\bar{G}(\boldsymbol{x}^{\rm SO})$ vs. $c_v$, under different $N_{II}$, for 100 trials.}\label{fig:100Trials}
\end{minipage}
\begin{minipage}[t]{0.005 \linewidth}
~
\end{minipage}
\begin{minipage}[t]{0.25 \linewidth}
\centering
\includegraphics[width=1.1\textwidth]{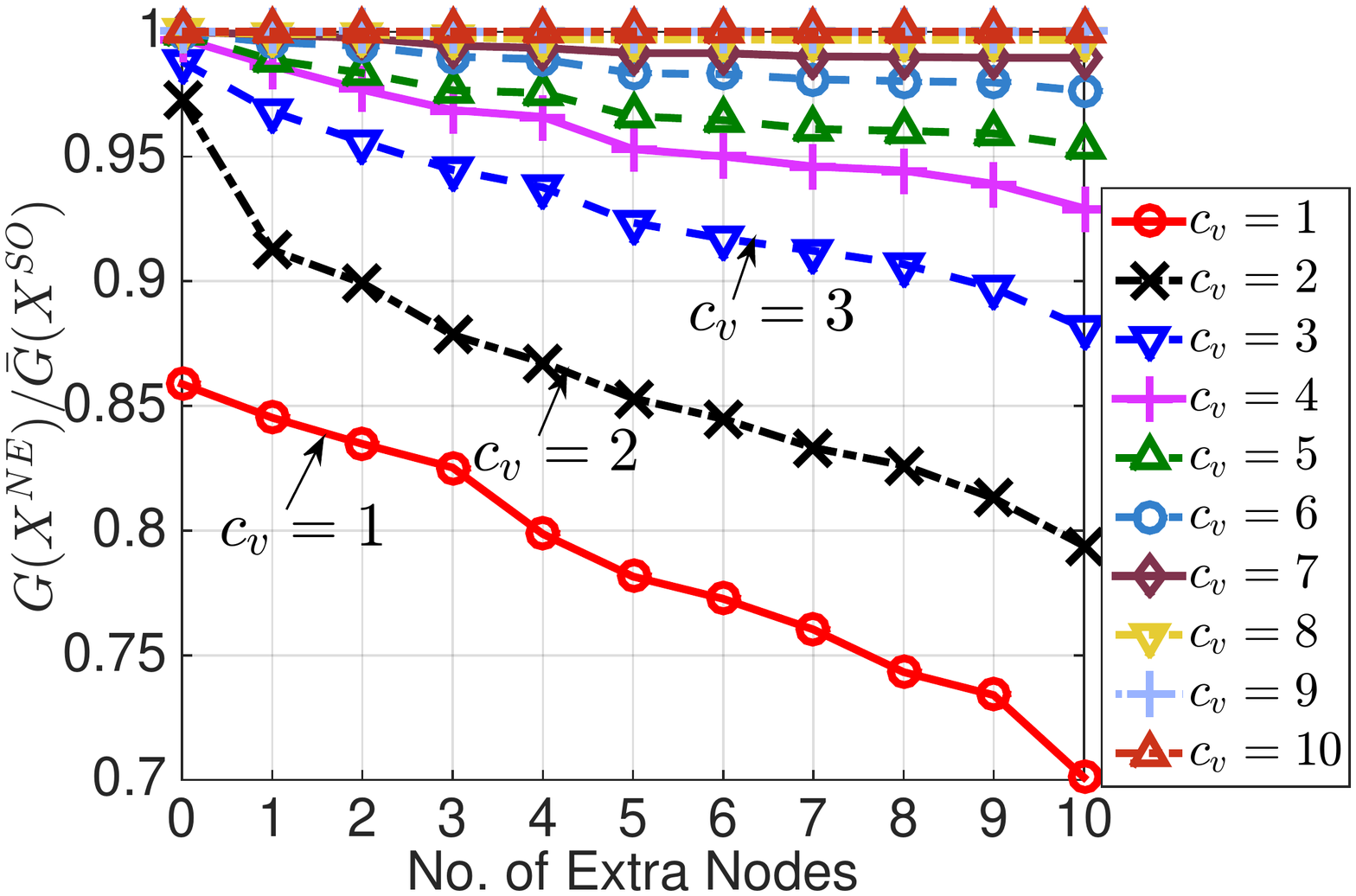}
\caption{$G(\boldsymbol{x}^{\rm NE})/\bar{G}(\boldsymbol{x}^{\rm SO})$ vs. no. of extra cache nodes, under different $c_v$.}\label{fig:ImpactExtraNodesTwoServer}
\end{minipage}
\end{figure*}

We perform simulations on networks including the Abilene network shown in Figure \ref{fig:Abilene}, the GEANT network shown in Figure \ref{fig:GEANT}, and the Grid topology shown in Figure \ref{fig:Grid}. 
Simulation results show that the performance of Nash equilibria improves with the cache capacity at each node, while it degrades with the number of nodes that do not generate content requests. 
Furthermore, adding extra cache nodes to the existing network can make the performance of Nash equilibria worse.

\textbf{Upper Bound of the Optimal Social Welfare.}
Since the social welfare maximization problem \eqref{prob:maxCG} is NP-hard, we calculate an upper bound for the optimal social welfare. 
Specifically, we relax problem \eqref{prob:maxCG} by relaxing the binary caching strategy $\boldsymbol{x}=\{x_{si}\in\{0,1\}: \forall s\in V,i\in\I\}$ to be a continuous caching probability strategy $\boldsymbol{\phi}=\{\phi_{si}\in[0,1]: \forall s\in V,i\in\I\}$ where $\sum_{i\in\I}\phi_{si}\leq c_s,\forall s\in V$, while keeping the objective function unchanged. 
The relaxed problem is 
\begin{equation}\label{prob:relaxedG}
 \textstyle \mbox{max}  ~ G(\boldsymbol{\phi})~\mbox{s.t.}  ~  \textstyle  \sum_{i\in\I}  \phi_{si} \leq c_s,  ~\phi_{si} \in [0,1], ~  \forall  s\in V, i\in\I .
\end{equation}
The relaxation objective function $G(\boldsymbol{\phi})$ is not concave, so \eqref{prob:relaxedG} is not a convex optimization problem. 
We approximate $G(\boldsymbol{\phi})$ by $L(\boldsymbol{\phi})$ below \cite{YehSigmetrics}:
\begin{equation*}
L(\boldsymbol{\phi})= \sum_{s\in V, i\in\I}\lambda_{(s,i)} \sum_{k=1}^{|p^{(s,i)}|-1}w_{p_{k+1}p_k}  \min\left\{ 1,\sum_{k'=1}^k \phi_{p_{k'}i} \right\}. 
\end{equation*}
Note that $L(\boldsymbol{\phi})$ is concave, and we can solve the following convex optimization problem in polynomial time. 
\begin{equation}\label{prob:maxL}
 \textstyle \mbox{max}  ~ L(\boldsymbol{\phi}) ~ \mbox{s.t.} ~  \textstyle  \sum_{i\in\I}  \phi_{si} \leq c_s,  ~\phi_{si} \in [0,1], ~  \forall  s\in V, i\in\I .
\end{equation}

We have the following result:
\begin{lemma}\label{lemma:SocialWelfareUB}
Let $\boldsymbol{x}^\ast$, $\boldsymbol{\phi}^\ast$, and $\boldsymbol{\phi}^{\ast\ast}$ be the optimal solutions to problems \eqref{prob:maxCG}, \eqref{prob:relaxedG}, and \eqref{prob:maxL}, respective. 
Then: 
\begin{equation}
G(\boldsymbol{x}^\ast) \leq G(\boldsymbol{\phi}^\ast) \leq L(\boldsymbol{\phi}^{\ast}) \leq  L(\boldsymbol{\phi}^{\ast\ast}) .
\end{equation}
\end{lemma}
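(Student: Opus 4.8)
The plan is to establish the chain of inequalities one link at a time, since each of the three comes from a separate and elementary principle: a relaxation argument, a pointwise upper bound, and the optimality of $\boldsymbol{\phi}^{\ast\ast}$ for $L(\cdot)$. None of the links requires heavy machinery, so the proof is essentially a bookkeeping exercise, with one small analytic inequality at its core.

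First I would handle $G(\boldsymbol{x}^\ast) \leq G(\boldsymbol{\phi}^\ast)$. The key observation is that problem \eqref{prob:relaxedG} is a relaxation of problem \eqref{prob:maxCG}: the two problems share the identical objective $G(\cdot)$, and any binary strategy $\boldsymbol{x}$ with $\sum_{i}x_{si}\le c_s$ is also feasible for \eqref{prob:relaxedG} because $\{0,1\}\subset[0,1]$ and the capacity constraints coincide. Hence $\boldsymbol{x}^\ast$ is a feasible point of \eqref{prob:relaxedG}, and since $\boldsymbol{\phi}^\ast$ is optimal for \eqref{prob:relaxedG} we get $G(\boldsymbol{\phi}^\ast)\ge G(\boldsymbol{x}^\ast)$ at once.

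Next I would prove the middle inequality $G(\boldsymbol{\phi}^\ast) \leq L(\boldsymbol{\phi}^\ast)$ by establishing the stronger pointwise bound $G(\boldsymbol{\phi}) \leq L(\boldsymbol{\phi})$ for every feasible $\boldsymbol{\phi}$. Expanding the continuous extension of $G$ term by term, the gain contributed by request $(s,i)$ along the prefix of length $k$ carries the factor $1-\prod_{k'=1}^k(1-\phi_{p_{k'}i})$, whereas the matching term of $L$ carries $\min\{1,\sum_{k'=1}^k \phi_{p_{k'}i}\}$. Since each $\phi_{p_{k'}i}\in[0,1]$, the elementary inequality $1-\prod_{k'=1}^k(1-a_{k'}) \leq \min\{1,\sum_{k'=1}^k a_{k'}\}$ holds: the bound by $1$ is clear because the product is nonnegative, and the bound by the sum follows by induction (equivalently, from $\prod_{k'}(1-a_{k'}) \geq 1-\sum_{k'}a_{k'}$). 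Because the coefficients $\lambda_{(s,i)}w_{p_{k+1}p_k}$ are all nonnegative, multiplying through and summing over $k$, $s$, and $i$ preserves the inequality, yielding $G(\boldsymbol{\phi})\leq L(\boldsymbol{\phi})$; evaluating at $\boldsymbol{\phi}^\ast$ gives the claim.

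Finally, the last inequality $L(\boldsymbol{\phi}^\ast) \leq L(\boldsymbol{\phi}^{\ast\ast})$ is immediate from optimality: problems \eqref{prob:relaxedG} and \eqref{prob:maxL} have the identical feasible region (the same box and capacity constraints), so $\boldsymbol{\phi}^\ast$ is feasible for \eqref{prob:maxL}, and since $\boldsymbol{\phi}^{\ast\ast}$ maximizes $L(\cdot)$ over that region we obtain $L(\boldsymbol{\phi}^{\ast\ast}) \geq L(\boldsymbol{\phi}^\ast)$. Chaining the three inequalities completes the proof. The only non-routine step is the pointwise bound in the middle, so the main (though still minor) obstacle is verifying the product-versus-sum inequality $1-\prod(1-a_{k'})\leq\min\{1,\sum a_{k'}\}$ and confirming that nonnegativity of the weights lets it lift term by term to the aggregate functions $G$ and $L$.
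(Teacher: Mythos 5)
Your proposal is correct and follows essentially the same route as the paper's proof: the first and third inequalities come from the relaxation/optimality arguments exactly as in the paper, and the middle inequality comes from the pointwise bound $G(\boldsymbol{\phi}) \leq L(\boldsymbol{\phi})$ via $1-\prod_{k'}(1-\phi_{p_{k'}i}) \leq \min\bigl\{1,\sum_{k'}\phi_{p_{k'}i}\bigr\}$. The only cosmetic difference is that the paper cites this last inequality as (part of) the Goemans--Williamson inequality, whereas you verify the needed upper-bound half directly by induction, which is a perfectly adequate substitute.
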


\begin{proof}
See Appendix I. 
\end{proof}

Hence, $L(\boldsymbol{\phi}^{\ast\ast})$ serves as an upper bound for the optimal social welfare $G(\boldsymbol{x}^\ast)$. 
We define $\bar{G}(\boldsymbol{x}^{\rm SO})=L(\boldsymbol{\phi}^{\ast\ast})$.\footnote{The superscript ``SO'' represents socially optimal.}

In the following, we first perform simulations for the case with equal-sized content items and show the results in Figures \ref{fig:HomoLambda_Abilene10}--\ref{fig:IdleNodes_Ratios_Grid}, which validate the existence of a PSNE in Theorem \ref{theo:Existence} and the PoA analysis in Theorem \ref{theo:PoAalpha}. 
We then perform simulations for the case with unequal-sized content items and show the results in Figures \ref{fig:HomoLambda_AbileneApp} -- \ref{fig:ImpactExtraNodesApp}, which validate the existence of an approximate PSNE in Theorem \ref{theo:ExistenceBeta} and the PoA analysis of the approximate PSNE in Theorem \ref{theo:PoAapp}.

\textbf{Experiment Setup for the Abilene Network.} 
For the Abilene network shown in Figure \ref{fig:Abilene}, we take all edge costs from the Abilene network configuration \cite{AbileneTopology}.\footnote{We assume that the edge costs are symmetric.} 
We consider a set $\I=\{1,\ldots,10\}$ of content items \cite{YehSigmetrics}, where node 1 is the designated server of the first 6 content items and node 2 is the designated server of the remaining 4 content items. 
Each node chooses the shortest path to fetch every content item, following which there is no mixed request loop on the graph. 
We generate the arrival rates $\lambda_{(s,i)}, \forall s\in V, i\in \I$ uniformly at random in the interval $[0,10]$.

\begin{figure*}[t] 
 \centering 
 \begin{minipage}[t]{0.25 \linewidth}
 \centering
 \includegraphics[width=1\textwidth]{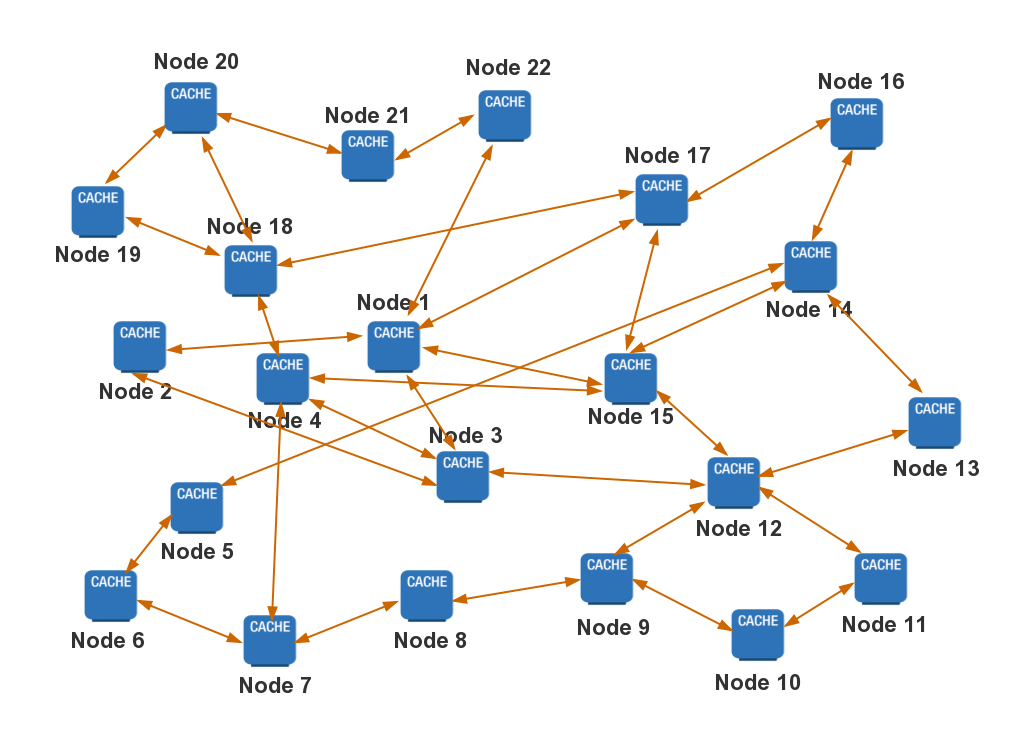}
 \caption{GEANT network.}\label{fig:GEANT}
 \end{minipage}
 \begin{minipage}[t]{0.005 \linewidth}
~
\end{minipage}
 \begin{minipage}[t]{0.23 \linewidth}
 \centering
 \includegraphics[width=1\textwidth]{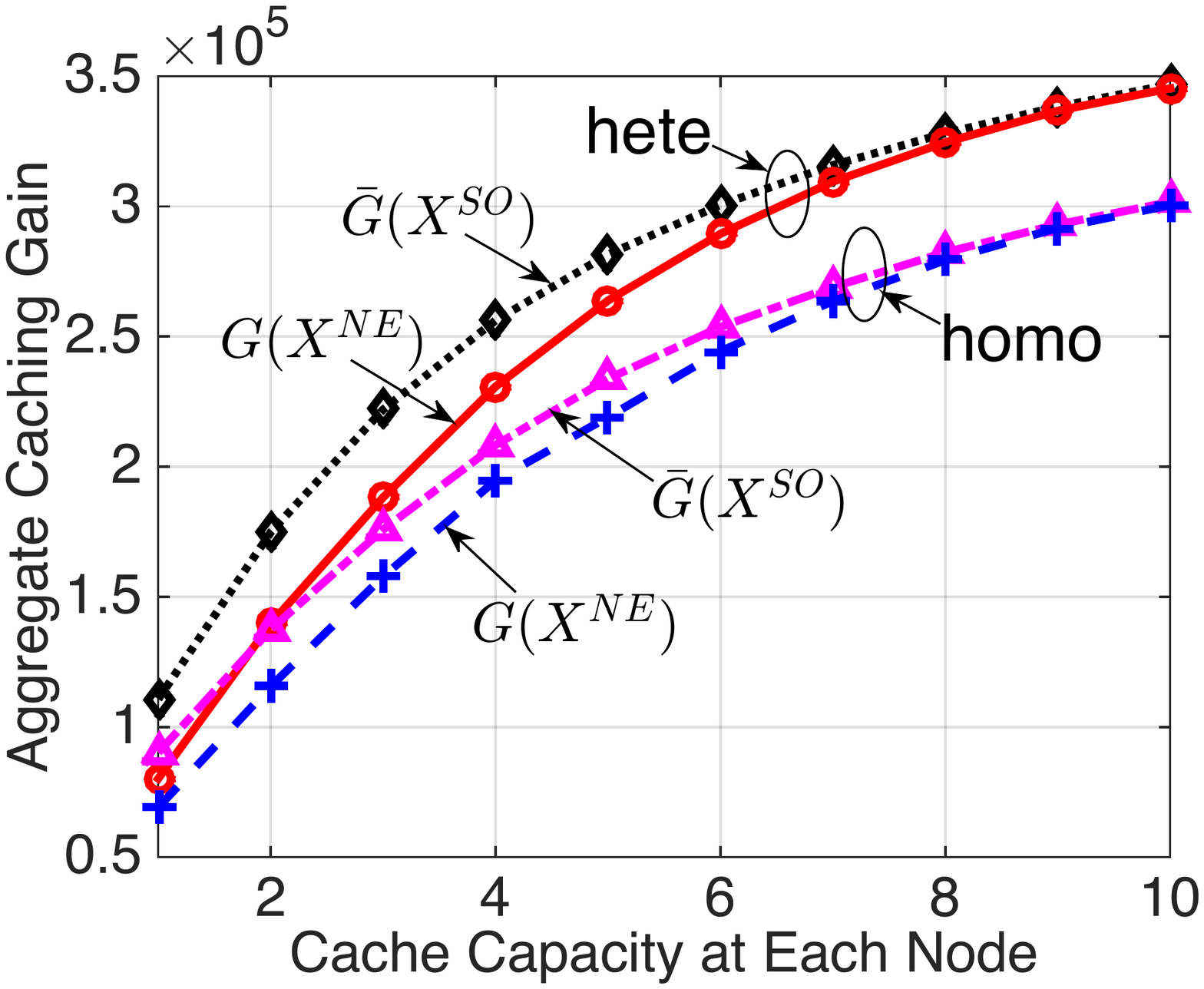}
 \caption{$G(\cdot)$ vs. $c_v$, under both heterogeneous and homogeneous request patterns.}\label{fig:HomoLambda_GEANT}
 \end{minipage}
 \begin{minipage}[t]{0.005 \linewidth}
~
\end{minipage}
 \begin{minipage}[t]{0.23 \linewidth}
 \centering
 \includegraphics[width=1\textwidth]{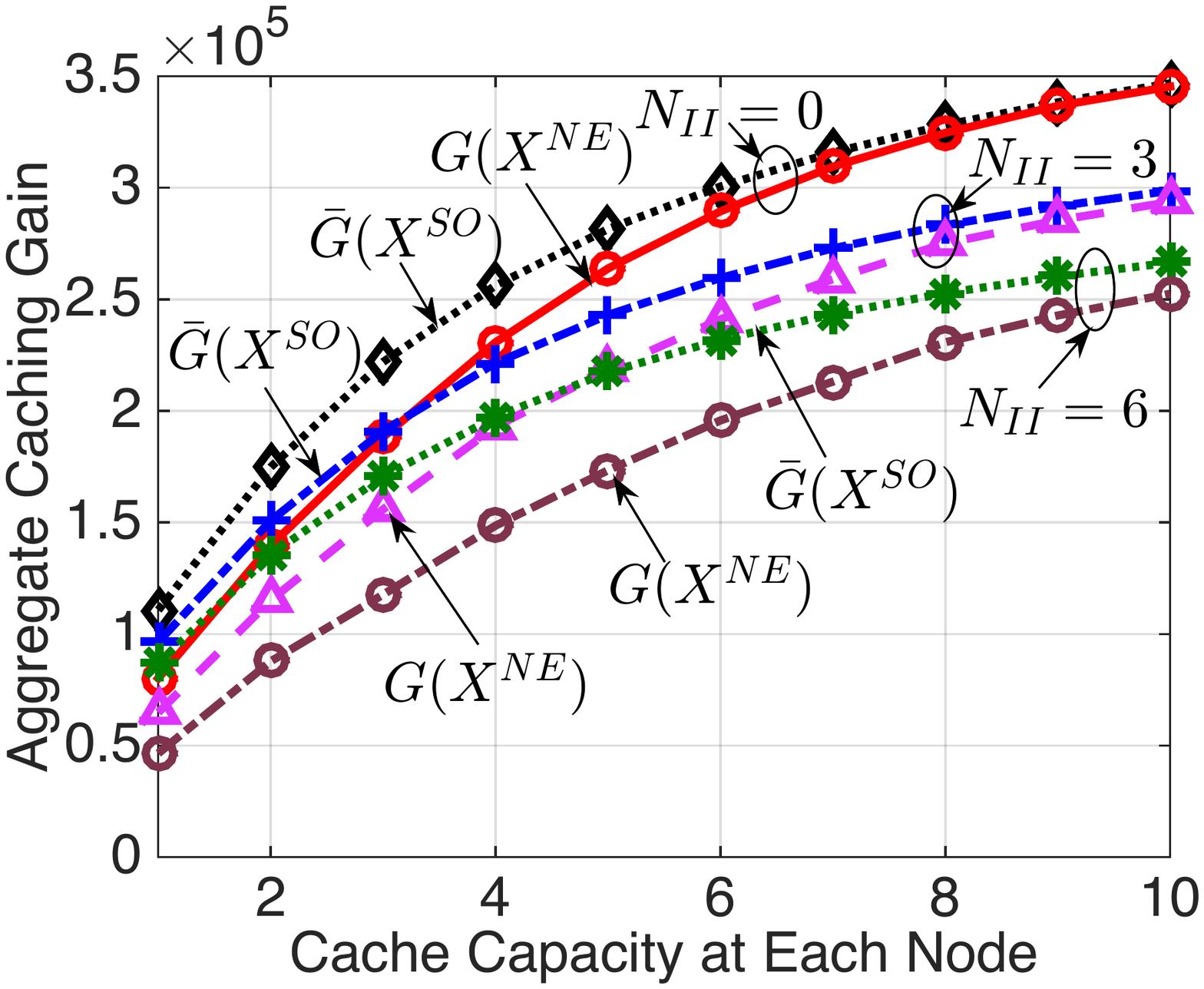}
 \caption{$G(\cdot)$ vs. $c_v$, under different no. of Type-II nodes $N_{II}$.}\label{fig:IdleNodes_GEANT}
 \end{minipage}
 \begin{minipage}[t]{0.005 \linewidth}
~
\end{minipage}
 \begin{minipage}[t]{0.23 \linewidth}
 \centering
 \includegraphics[width=1\textwidth]{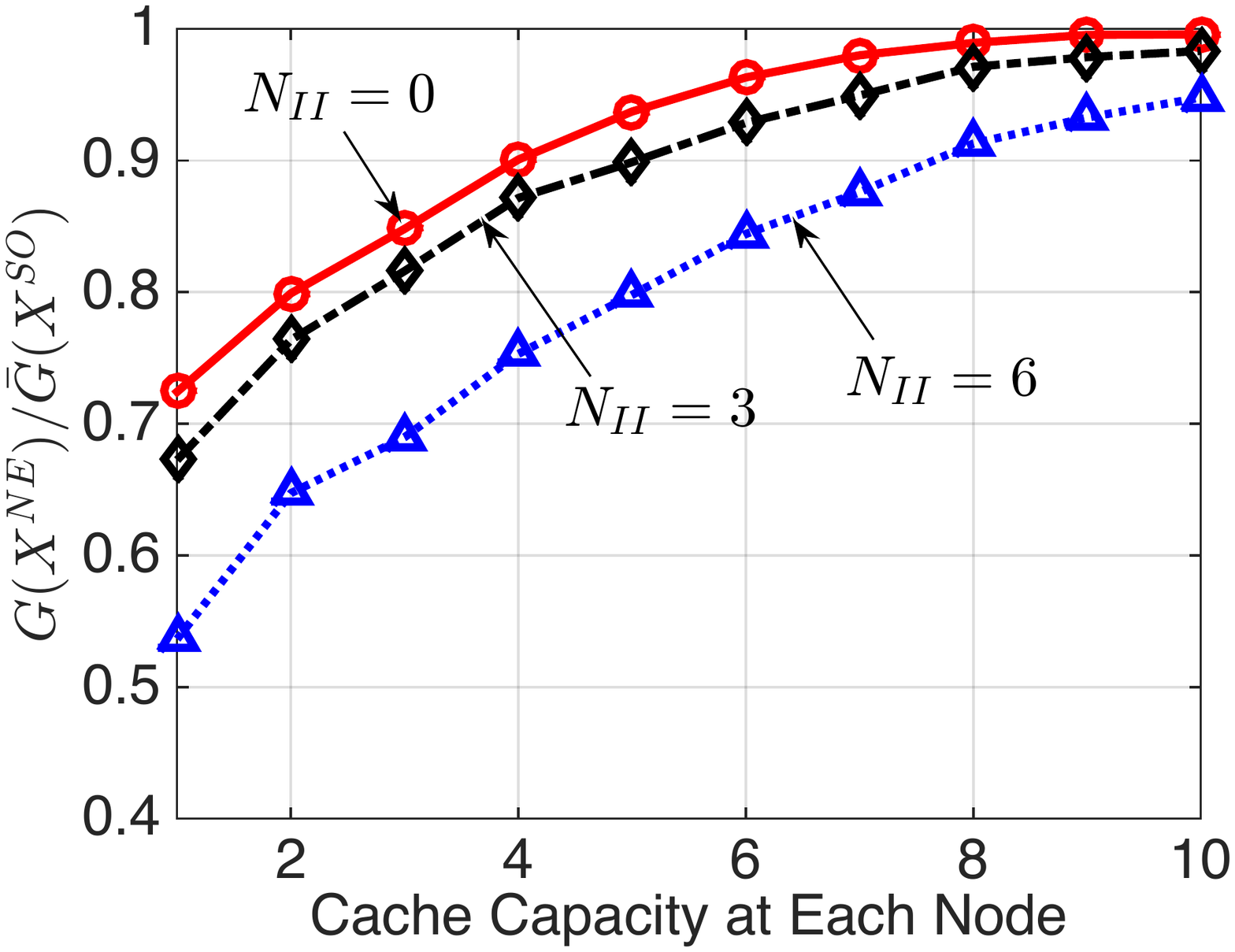}
 \caption{$G(\boldsymbol{x}^{\rm NE})/\bar{G}(\boldsymbol{x}^{\rm SO})$ vs. $c_v$, under different $N_{II}$.}\label{fig:IdleNodes_Ratios_GEANT}
 \end{minipage}
 \end{figure*}

 \begin{figure*}[t] 
 \centering 
 \begin{minipage}[t]{0.2 \linewidth}
 \centering
 \includegraphics[width=1\textwidth]{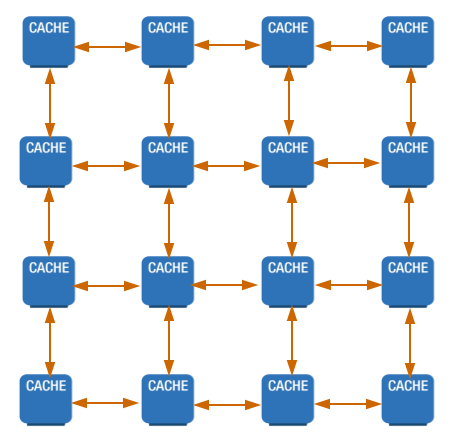}
 \caption{Grid topology.}\label{fig:Grid}
 \end{minipage}
 \begin{minipage}[t]{0.005 \linewidth}
~
\end{minipage}
 \begin{minipage}[t]{0.245 \linewidth}
 \centering
 \includegraphics[width=1\textwidth]{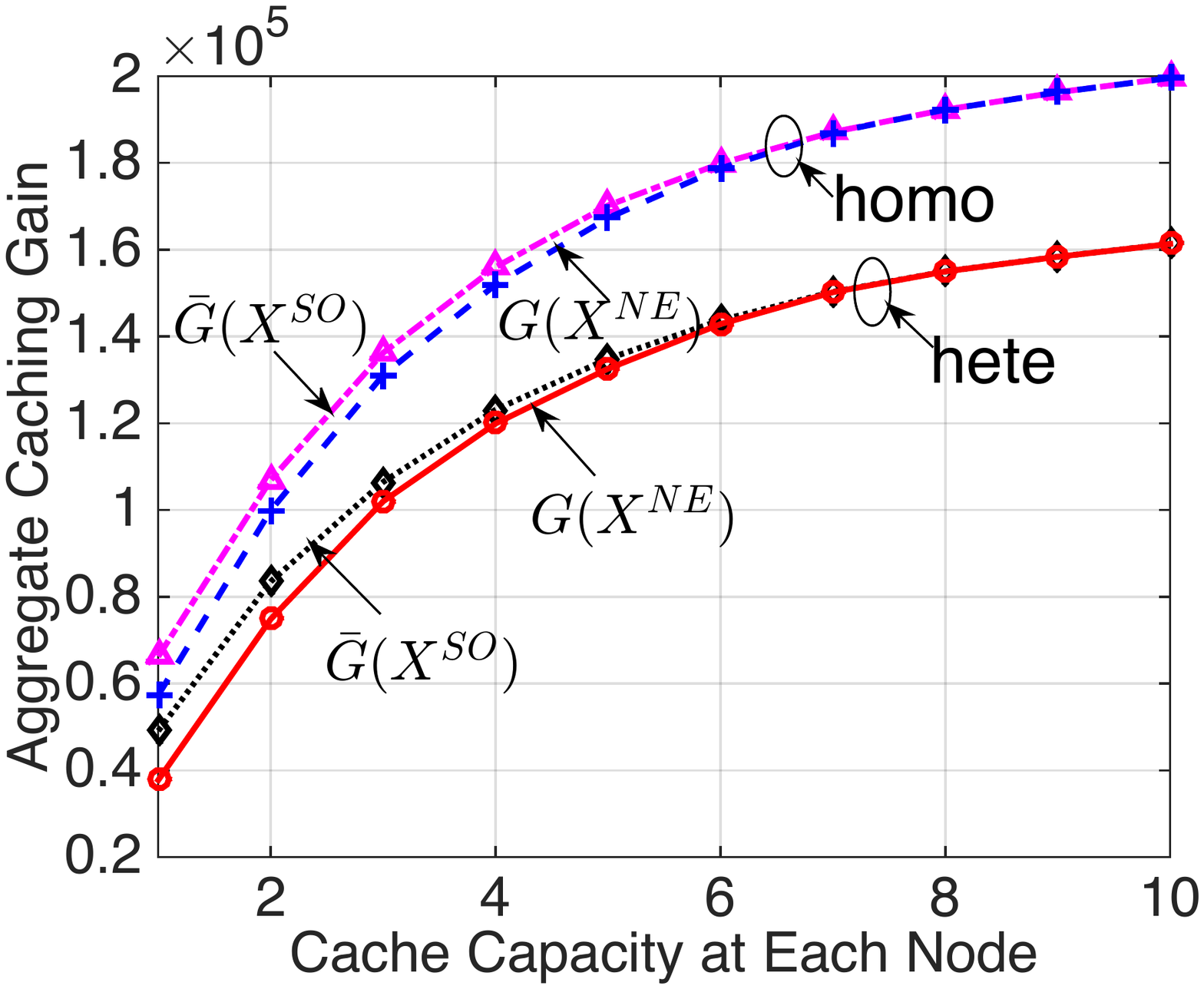}
 \caption{$G(\cdot)$ vs. $c_v$, under both heterogeneous and homogeneous request patterns.}\label{fig:HomoLambda_Grid}
 \end{minipage}
 \begin{minipage}[t]{0.005 \linewidth}
~
\end{minipage}
 \begin{minipage}[t]{0.245 \linewidth}
 \centering
 \includegraphics[width=1\textwidth]{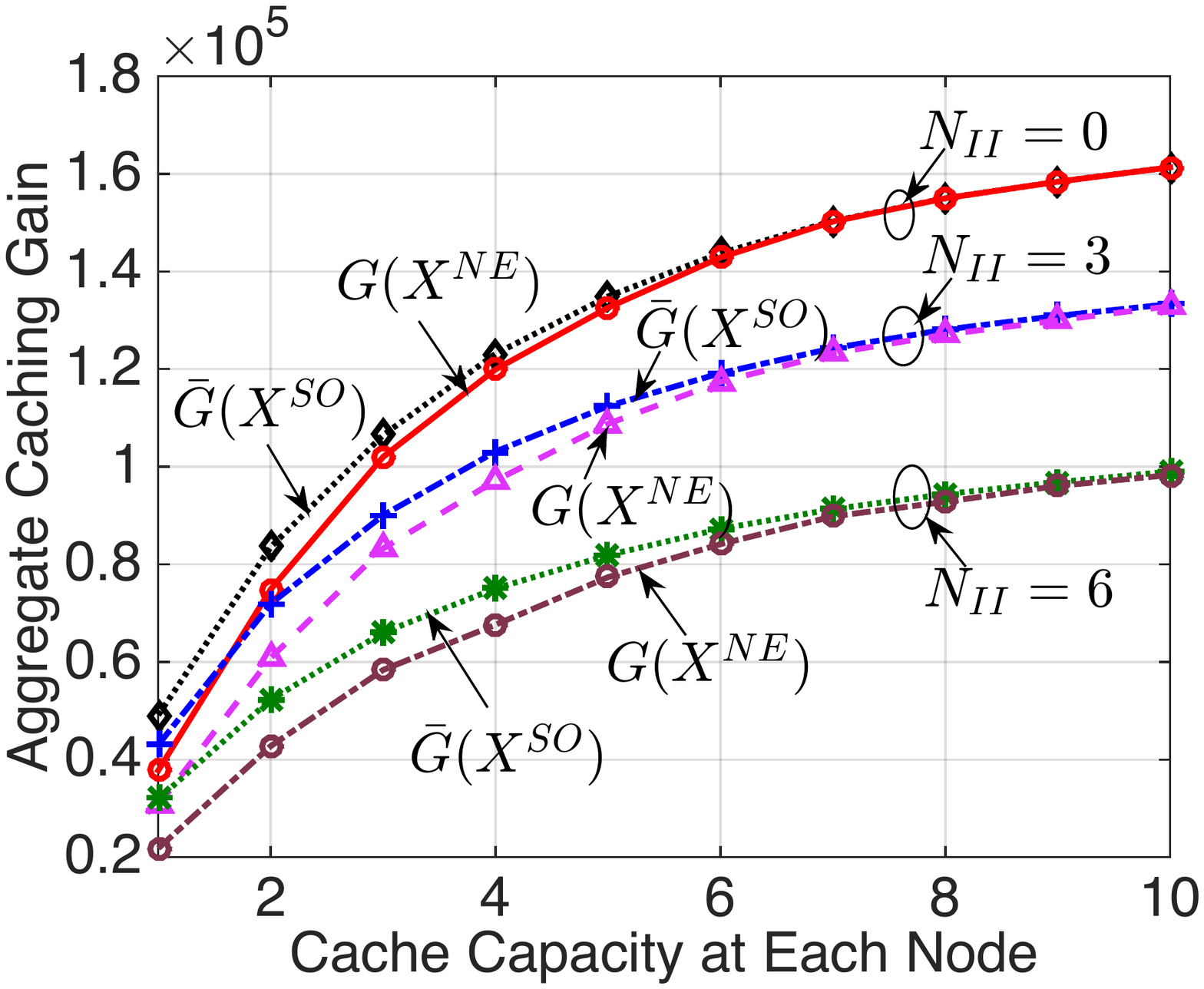}
 \caption{$G(\cdot)$ vs. $c_v$, under different no. of Type-II nodes $N_{II}$.}\label{fig:IdleNodes_Grid}
 \end{minipage}
 \begin{minipage}[t]{0.005 \linewidth}
~
\end{minipage}
 \begin{minipage}[t]{0.245 \linewidth}
 \centering
 \includegraphics[width=1\textwidth]{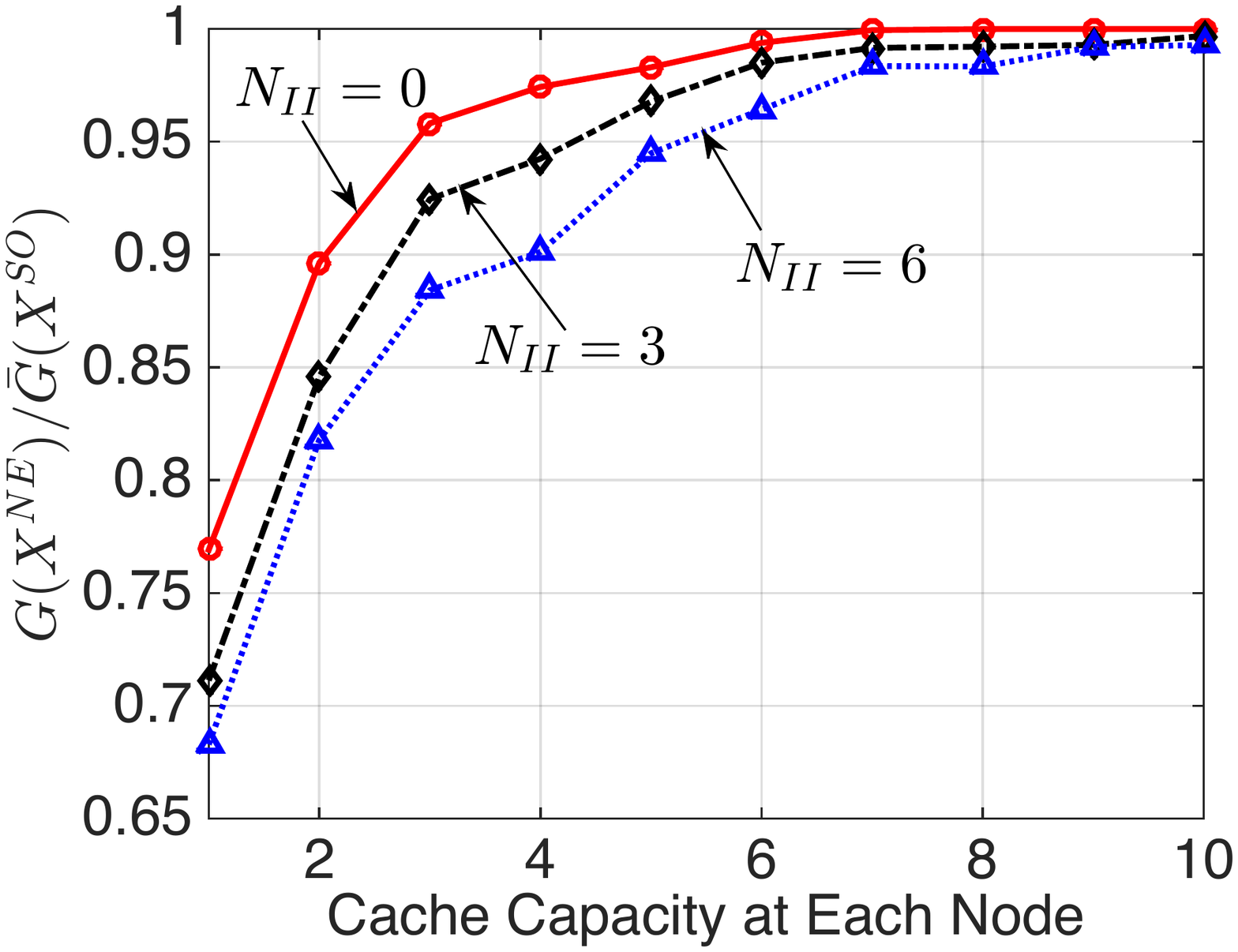}
 \caption{$G(\boldsymbol{x}^{\rm NE})/\bar{G}(\boldsymbol{x}^{\rm SO})$ vs. $c_v$, under different $N_{II}$.}\label{fig:IdleNodes_Ratios_Grid}
 \end{minipage}
 \end{figure*}

\textbf{Results in the Abilene Network.} 
Figure \ref{fig:HomoLambda_Abilene10} shows the aggregate caching gain $G(\boldsymbol{x}^{\rm NE})$ and $\bar{G}(\boldsymbol{x}^{\rm SO})$ under different cache capacities at each node,\footnote{We show the results for the case where cache nodes may have different cache capacities in Appendix J.} for the case with heterogeneous request patterns $\lambda_{(s,i)}$ (the upper two curves) and for the case with homogeneous request patterns $\lambda_{(s,i)}=\lambda_i, \forall s\in V, i\in \I$ (the lower two curves)\footnote{We take $\lambda_i=\sum_{s\in V}\lambda_{(s,i)}/V$ given the heterogeneous $\lambda_{(s,i)}, \forall s\in v,i\in\I$.}, respectively. 
We can see that the gap between $G(\boldsymbol{x}^{\rm NE})$ and $\bar{G}(\boldsymbol{x}^{\rm SO})$ under homogeneous $\lambda_i$ is smaller than the gap under heterogeneous $\lambda_{(s,i)}$. 
Thus, the homogeneous request pattern leads to better performance achieved by selfish caching behaviors in the Abilene network.

In practice, some cache nodes are intermediate routers which do not request for any content items. 
We define nodes with positive request rates as Type-I nodes (with a total number $N_{I}$), and nodes with no request as Type-II nodes (with a total number $N_{II}$). 
We show the impact of $N_{II}$ in Figure \ref{fig:IdleNodes_Abilene10}. 
We can see that the gap between $G(\boldsymbol{x}^{\rm NE})$ and $\bar{G}(\boldsymbol{x}^{\rm SO})$ decreases with the cache capacity at each node, while the gap increases with $N_{II}$. 
This implies that the impact of the selfish behaviors is mitigated when the cache resource increases, and the selfish behaviors of Type-II nodes degrade the (relative) performance of Nash equilibria (since the selfish Type-II nodes will not cache content items at equilibrium).

To understand the impact of the randomness of request arrival rates $\lambda_{(s,i)}$, we perform simulations on $100$ sets of randomly generated $\{\lambda_{(s,i)}: \forall s\in V, i\in \I\}$, and show the average ratios $G(\boldsymbol{x}^{\rm NE})/\bar{G}(\boldsymbol{x}^{\rm SO})$ of the $100$ trials in Figure \ref{fig:100Trials}, where the error bars represent the standard deviations. 
As is consistent with our observation from Figure \ref{fig:IdleNodes_Abilene10}, the performance of the Nash equilibria increases with the cache capacity, while decreases with $N_{II}$.

In practice, one direct way to improve the aggregate caching gain in the network is to add extra cache nodes.
To check the impact of extra caches on the performance of Nash equilibria, we sequentially add node 12, node 13, until node 21, shown in Figure \ref{fig:AbileneAdd}. 
We show the ratio $G(\boldsymbol{x}^{\rm NE})/\bar{G}(\boldsymbol{x}^{\rm SO})$ with different number of extra nodes in Figure \ref{fig:ImpactExtraNodesTwoServer}. 
We can see that adding more extra caches makes PoA worse. 
The reason is that adding extra cache nodes can improve the optimal social welfare, while it cannot improve the social welfare achieved by Nash equilibria due to the selfish nature of cache nodes. 
Hence the ``relative'' performance of the Nash equilibria (measured in terms of PoA) reduces.

\textbf{Results in the GEANT Network.} 
We perform simulations on the GEANT network shown in Figure \ref{fig:GEANT}. 
We consider a set $\I=\{1,\ldots,20\}$ of content items. 
We generate the cost on each edge uniformly at random from the interval $[1, 100]$. 
We show the performances corresponding to selfish behaviors in Figures \ref{fig:HomoLambda_GEANT}--\ref{fig:IdleNodes_Ratios_GEANT}. 
As in the Abilene network, the homogeneous request pattern leads to better (relative) performance achieved by selfish caching behaviors, and the ratio $G(X^{NE})/\bar{G}(X^{SO})$ increases with $c_v$ and decreases with $N_{II}$.

\begin{figure*}[t] 
\centering 
\begin{minipage}[t]{0.23 \linewidth}
\centering
\includegraphics[width=1\textwidth]{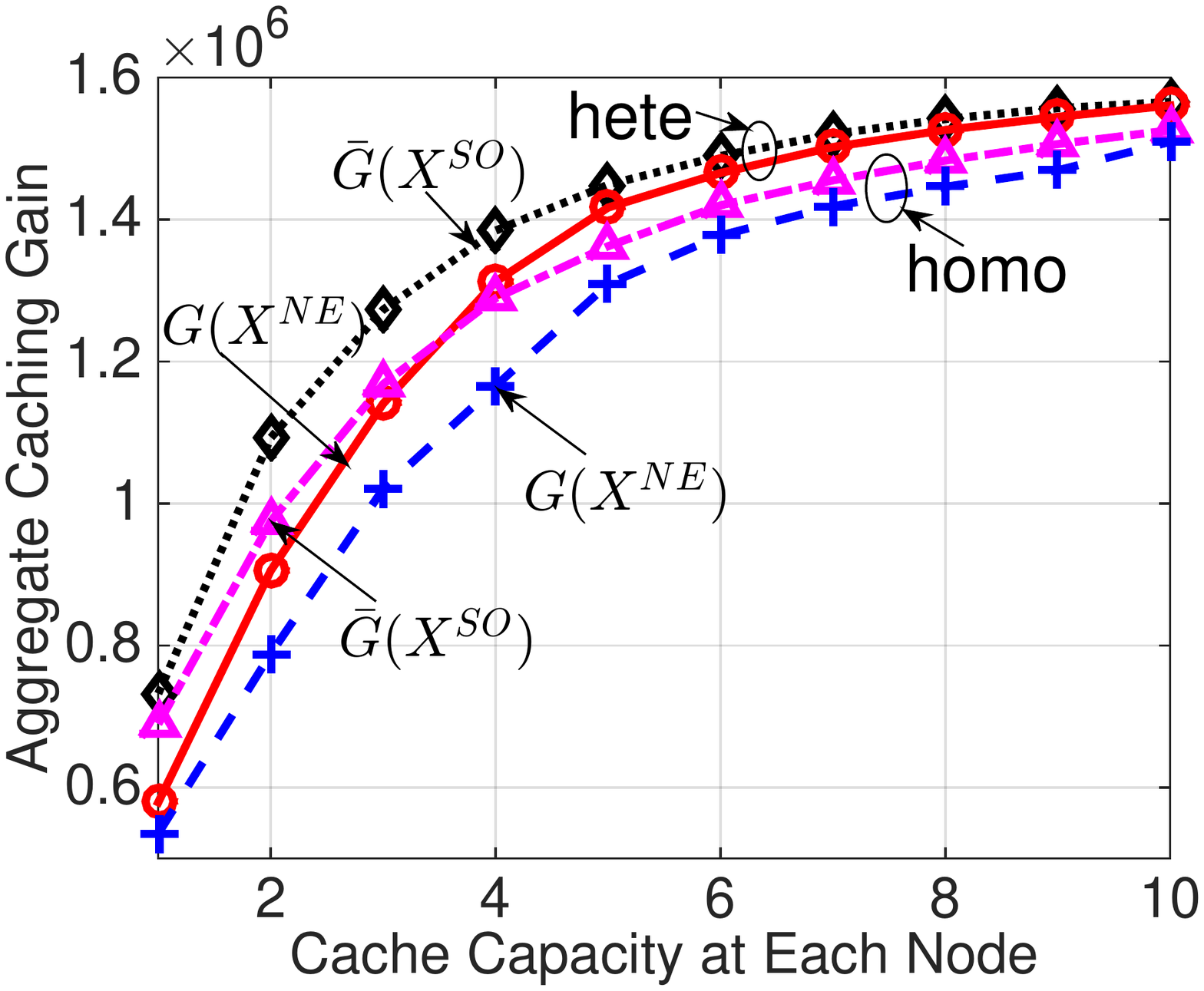}
\caption{$G(\cdot)$ vs. $c_v$, under both heterogeneous and homogeneous request patterns.}\label{fig:HomoLambda_AbileneApp}
\end{minipage}
\begin{minipage}[t]{0.005 \linewidth}
~
\end{minipage}
\begin{minipage}[t]{0.23 \linewidth}
\centering
\includegraphics[width=1\textwidth]{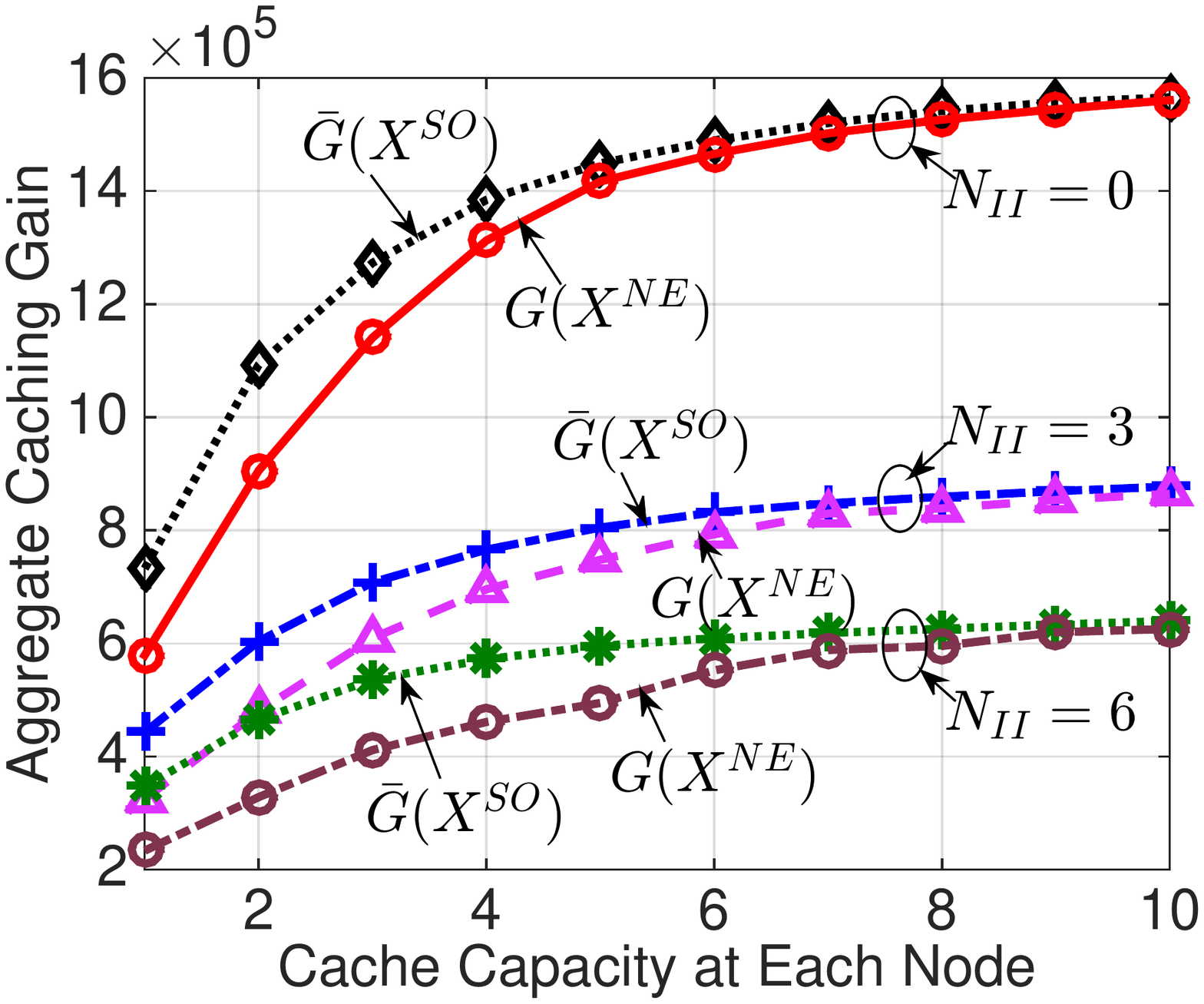}
\caption{$G(\cdot)$ vs. $c_v$, under different no. of Type-II nodes $N_{II}$.}\label{fig:IdleNodes_AbileneApp}
\end{minipage}
\begin{minipage}[t]{0.005 \linewidth}
~
\end{minipage}
\begin{minipage}[t]{0.23 \linewidth}
\centering
\includegraphics[width=1.024\textwidth]{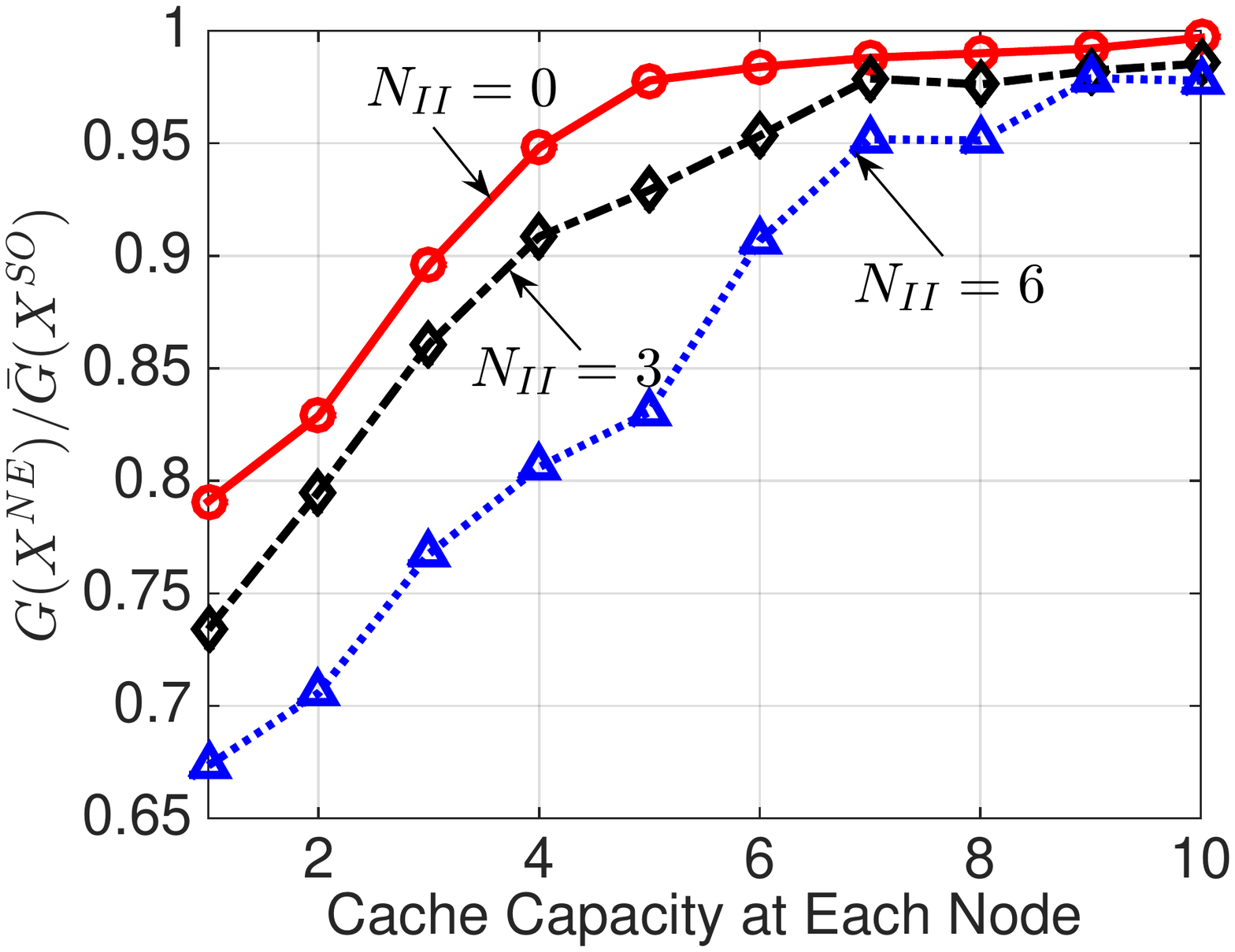}
\caption{$G(\boldsymbol{x}^{\rm NE})/\bar{G}(\boldsymbol{x}^{\rm SO})$ vs. $c_v$, under different $N_{II}$.}\label{fig:IdleNodes_Abilene_Ratios_App}
\end{minipage}
\begin{minipage}[t]{0.005 \linewidth}
~
\end{minipage}
\begin{minipage}[t]{0.25 \linewidth}
\centering
\includegraphics[width=1.08\textwidth]{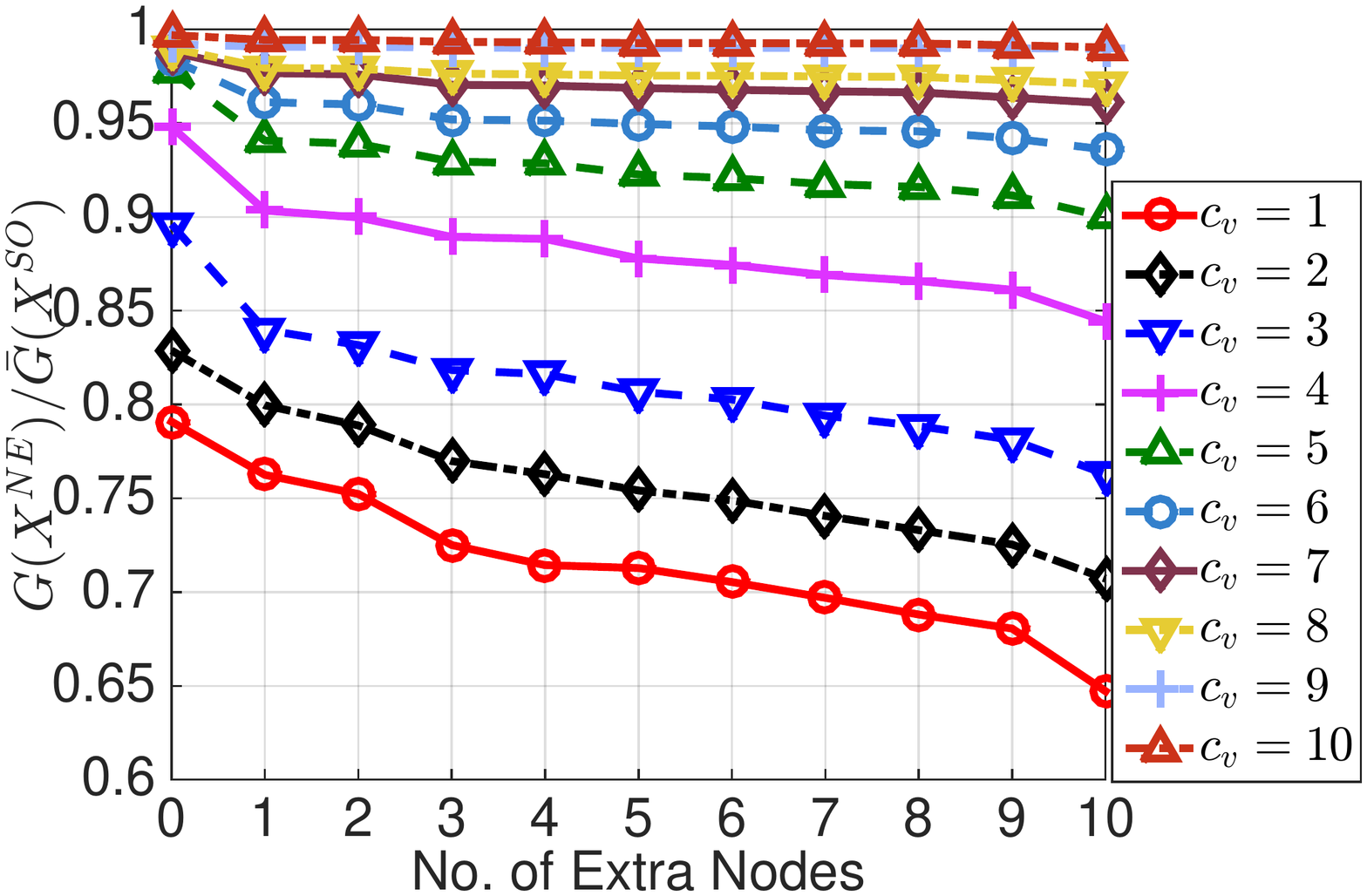}
\caption{$G(\boldsymbol{x}^{\rm NE})/\bar{G}(\boldsymbol{x}^{\rm SO})$ vs. no. of extra cache nodes, under different $c_v$.}\label{fig:ImpactExtraNodesApp}
\end{minipage}
\end{figure*}

\textbf{Results in the Grid Topology.} 
We perform simulations on the Grid topology shown in Figure \ref{fig:Grid}. 
We consider a set $\I=\{1,\ldots,16\}$ of content items, and generate the cost on each edge uniformly at random from the interval $[1, 100]$. 
We show the performance corresponding to selfish behaviors in Figures \ref{fig:HomoLambda_Grid}--\ref{fig:IdleNodes_Ratios_Grid}. 
Different from the Abilene and GEANT networks, we observe in Figure \ref{fig:HomoLambda_Grid} that the homogeneous request pattern leads to a larger aggregate caching gain but a smaller ratio $G(X^{NE})/\bar{G}(X^{SO})$ than that under the heterogeneous request pattern.  
As in the Abilene and GEANT networks, $G(X^{NE})/\bar{G}(X^{SO})$ increases with $c_v$, and decreases with $N_{II}$.

\textbf{Results for the Scenario with Unequal-Sized Items.} 
We perform simulations in the Abilene network for the case where different content items have different sizes in Figures \ref{fig:HomoLambda_AbileneApp} -- \ref{fig:ImpactExtraNodesApp}. 
We assume that the sizes of the $|\I|=10$ content items are 
$\boldsymbol{L}=\{0.2,0.4,0.6,0.8,1.0,1.2,1.4,1.6,1.8,2.0\},$
and node 1 is the designated server of the 10 content items.  
We compare the performance achieved by the approximate Nash equilibria of Game 2 and that by the socially optimal solution. 
Figure \ref{fig:HomoLambda_AbileneApp} shows that the homogeneous request pattern leads to larger gaps between $G(\boldsymbol{x}^{\rm NE})$ and $\bar{G}(\boldsymbol{x}^{\rm SO})$, and hence a worse performance achieved by selfish caching behaviors at the approximate Nash equilibrium. 
Figures \ref{fig:IdleNodes_AbileneApp} -- \ref{fig:ImpactExtraNodesApp} show that the gap between $G(\boldsymbol{x}^{\rm NE})$ and $\bar{G}(\boldsymbol{x}^{\rm SO})$ decreases with the cache capacity at each node, while the gap increases with $N_{II}$. 
Furthermore, adding more extra caches makes the PoA worse.

\section{Conclusion}\label{sec:conclusion}

In this paper, we analyze selfish caching games on directed graphs, which can yield arbitrary bad performance.
We show that a PSNE exists and can be found in polynomial time if there is no mixed request loop, and we can avoid mixed request loops by properly choosing request forwarding paths.
We then show that although cache paradox happens, i.e., adding extra cache nodes does not improve the performance of PSNE, with the homogeneous request pattern property and the path overlap property, the PoA is bounded in arbitrary-topology networks. 
We further show that the selfish caching game with unequal-sized items admits an approximate PSNE with bounded PoA in special cases. 

There are several interesting directions to explore in the future, such as analyzing the impact of the congestion effect on each edge, analyzing the joint caching and routing decisions of selfish nodes,  analyzing the privacy issue, analyzing the dynamic selfish caching game under incomplete information, and analyzing the coalitional game for the caching network with multiple cache providers where each provider owns several cache nodes.

\bibliographystyle{IEEEtran}
\bibliography{TNET-2019-00477}

\begin{thebibliography}{10}
\providecommand{\url}[1]{#1}
\csname url@samestyle\endcsname
\providecommand{\newblock}{\relax}
\providecommand{\bibinfo}[2]{#2}
\providecommand{\BIBentrySTDinterwordspacing}{\spaceskip=0pt\relax}
\providecommand{\BIBentryALTinterwordstretchfactor}{4}
\providecommand{\BIBentryALTinterwordspacing}{\spaceskip=\fontdimen2\font plus
\BIBentryALTinterwordstretchfactor\fontdimen3\font minus
  \fontdimen4\font\relax}
\providecommand{\BIBforeignlanguage}[2]{{%
\expandafter\ifx\csname l@#1\endcsname\relax
\typeout{** WARNING: IEEEtran.bst: No hyphenation pattern has been}%
\typeout{** loaded for the language `#1'. Using the pattern for}%
\typeout{** the default language instead.}%
\else
\language=\csname l@#1\endcsname
\fi
#2}}
\providecommand{\BIBdecl}{\relax}
\BIBdecl

\bibitem{QianMobiHoc2019}
Q.~Ma, E.~Yeh, and J.~Huang, ``How bad is selfish caching?'' in \emph{The
  Twentieth ACM International Symposium on Mobile Ad Hoc Networking and
  Computing (Mobihoc '19)}.\hskip 1em plus 0.5em minus 0.4em\relax ACM, 2019.

\bibitem{CDN}
S.~Borst, V.~Gupta, and A.~Walid, ``Distributed caching algorithms for content
  distribution networks,'' in \emph{IEEE INFOCOM}, 2010, pp. 1--9.

\bibitem{CDN2}
M.~Dehghan, A.~Seetharam, B.~Jiang, T.~He, T.~Salonidis, J.~Kurose, D.~Towsley,
  and R.~Sitaraman, ``On the complexity of optimal routing and content caching
  in heterogeneous networks,'' in \emph{IEEE INFOCOM}, 2015, pp. 936--944.

\bibitem{YehICN2014}
E.~Yeh, T.~Ho, Y.~Cui, M.~Burd, R.~Liu, and D.~Leong, ``Vip: A framework for
  joint dynamic forwarding and caching in named data networks,'' in \emph{ACM
  ICN}, 2014, pp. 117--126.

\bibitem{femtocell}
K.~Shanmugam, N.~Golrezaei, A.~G. Dimakis, A.~F. Molisch, and G.~Caire,
  ``Femtocaching: Wireless content delivery through distributed caching
  helpers,'' \emph{IEEE TIT}, vol.~59, no.~12, pp. 8402--8413, 2013.

\bibitem{DSR}
N.~Laoutaris, O.~Telelis, V.~Zissimopoulos, and I.~Stavrakakis, ``Distributed
  selfish replication,'' \emph{IEEE TPDS}, vol.~17, no.~12, pp. 1401--1413,
  2006.

\bibitem{p2p}
E.~Cohen and S.~Shenker, ``Replication strategies in unstructured peer-to-peer
  networks,'' in \emph{ACM SIGCOMM}, vol.~32, no.~4, 2002, pp. 177--190.

\bibitem{poularakis2018distributed}
K.~Poularakis, G.~Iosifidis, A.~Argyriou, I.~Koutsopoulos, and L.~Tassiulas,
  ``Distributed caching algorithms in the realm of layered video streaming,''
  \emph{IEEE TMC}, 2018.

\bibitem{YehSigmetrics}
S.~Ioannidis and E.~Yeh, ``Adaptive caching networks with optimality
  guarantees,'' in \emph{ACM SIGMETRICS}, vol.~44, no.~1, 2016, pp. 113--124.

\bibitem{ao2015distributed}
W.~C. Ao and K.~Psounis, ``Distributed caching and small cell cooperation for
  fast content delivery,'' in \emph{ACM MobiHoc}, 2015, pp. 127--136.

\bibitem{Milking}
L.~Wang, G.~Tyson, J.~Kangasharju, and J.~Crowcroft, ``Milking the cache cow
  with fairness in mind,'' \emph{IEEE/ACM TON}, vol.~25, no.~5, pp. 2686--2700,
  2017.

\bibitem{afanasyev2010usage}
M.~Afanasyev, T.~Chen, G.~M. Voelker, and A.~C. Snoeren, ``Usage patterns in an
  urban wifi network,'' \emph{IEEE/ACM TON}, vol.~18, no.~5, pp. 1359--1372,
  2010.

\bibitem{vega2012topology}
D.~Vega, L.~Cerda-Alabern, L.~Navarro, and R.~Meseguer, ``Topology patterns of
  a community network: Guifi. net,'' in \emph{IEEE WiMob}, 2012, pp. 612--619.

\bibitem{draves2004routing}
R.~Draves, J.~Padhye, and B.~Zill, ``Routing in multi-radio, multi-hop wireless
  mesh networks,'' in \emph{ACM MobiCom}, 2004, pp. 114--128.

\bibitem{ValidUtilityGame}
A.~Vetta, ``Nash equilibria in competitive societies, with applications to
  facility location, traffic routing and auctions,'' in \emph{IEEE FOCS}, 2002,
  pp. 416--425.

\bibitem{CSgame}
C.~Papadimitriou, ``Algorithms, games, and the internet,'' in \emph{ACM STOC},
  2001, pp. 749--753.

\bibitem{roughgarden2002bad}
T.~Roughgarden and {\'E}.~Tardos, ``How bad is selfish routing?'' \emph{Journal
  of the ACM}, vol.~49, no.~2, pp. 236--259, 2002.

\bibitem{juhn1997harmonic}
L.-S. Juhn and L.-M. Tseng, ``Harmonic broadcasting for video-on-demand
  service,'' \emph{IEEE transactions on broadcasting}, vol.~43, no.~3, pp.
  268--271, 1997.

\bibitem{huang2012confused}
T.-Y. Huang, N.~Handigol, B.~Heller, N.~McKeown, and R.~Johari, ``Confused,
  timid, and unstable: picking a video streaming rate is hard,'' in
  \emph{Proceedings of the 2012 Internet Measurement Conference}.\hskip 1em
  plus 0.5em minus 0.4em\relax ACM, 2012, pp. 225--238.

\bibitem{Survey}
G.~S. Paschos, G.~Iosifidis, M.~Tao, D.~Towsley, and G.~Caire, ``The role of
  caching in future communication systems and networks,'' \emph{arXiv preprint
  arXiv:1805.11721}, 2018.

\bibitem{shukla2017hold}
S.~Shukla, O.~Bhardwaj, A.~A. Abouzeid, T.~Salonidis, and T.~He, ``Hold'em
  caching: Proactive retention-aware caching with multi-path routing for
  wireless edge networks,'' in \emph{ACM MobiHoc}, 2017, p.~24.

\bibitem{tadrous2016joint}
J.~Tadrous, A.~Eryilmaz, and H.~El~Gamal, ``Joint smart pricing and proactive
  content caching for mobile services,'' \emph{IEEE/ACM TON}, vol.~24, no.~4,
  pp. 2357--2371, 2016.

\bibitem{YuanyuanInfocom2020}
Y.~Li and S.~Ioannidis, ``Universally stable cache networks,'' in \emph{IEEE
  INFOCOM}, 2020.

\bibitem{KellyCache2019}
M.~{Mahdian}, A.~{Moharrer}, S.~{Ioannidis}, and E.~{Yeh}, ``Kelly cache
  networks,'' in \emph{IEEE INFOCOM}, 2019, pp. 217--225.

\bibitem{garetto2015efficient}
M.~Garetto, E.~Leonardi, and S.~Traverso, ``Efficient analysis of caching
  strategies under dynamic content popularity,'' in \emph{IEEE INFOCOM}, 2015,
  pp. 2263--2271.

\bibitem{zhang2018coded}
J.~Zhang, X.~Lin, and X.~Wang, ``Coded caching under arbitrary popularity
  distributions,'' \emph{IEEE TIT}, vol.~64, no.~1, pp. 349--366, 2018.

\bibitem{DrCache2018}
J.~{Li}, T.~{Khoa Phan}, W.~{Koong Chai}, D.~{Tuncer}, G.~{Pavlou},
  D.~{Griffin}, and M.~{Rio}, ``Dr-cache: Distributed resilient caching with
  latency guarantees,'' in \emph{IEEE INFOCOM}, 2018, pp. 441--449.

\bibitem{zhao2018red}
T.~Zhao, I.-H. Hou, S.~Wang, and K.~Chan, ``Red/led: An asymptotically optimal
  and scalable online algorithm for service caching at the edge,'' \emph{IEEE
  JSAC}, vol.~36, no.~8, pp. 1857--1870, 2018.

\bibitem{li2018hierarchical}
X.~Li, X.~Wang, P.-J. Wan, Z.~Han, and V.~C. Leung, ``Hierarchical edge caching
  in device-to-device aided mobile networks: Modeling, optimization, and
  design,'' \emph{IEEE JSAC}, 2018.

\bibitem{zhao2018collaborative}
X.~Zhao, P.~Yuan, and S.~Tang, ``Collaborative edge caching in context-aware
  device-to-device networks,'' \emph{IEEE TVT}, vol.~67, no.~10, pp.
  9583--9596, 2018.

\bibitem{kwak2018hybrid}
J.~Kwak, Y.~Kim, L.~B. Le, and S.~Chong, ``Hybrid content caching in 5g
  wireless networks: Cloud versus edge caching,'' \emph{IEEE TWC}, vol.~17,
  no.~5, pp. 3030--3045, 2018.

\bibitem{cao2018optimal}
X.~Cao, J.~Zhang, and H.~V. Poor, ``An optimal auction mechanism for mobile
  edge caching,'' in \emph{IEEE ICDCS}, 2018, pp. 388--399.

\bibitem{FerragutSIGMETRICS2016}
A.~Ferragut, I.~Rodriguez, and F.~Paganini, ``Optimizing ttl caches under
  heavy-tailed demands,'' in \emph{ACM SIGMETRICS}, 2016, p. 101–112.

\bibitem{DehghanTTLton2019}
M.~{Dehghan}, L.~{Massoulié}, D.~{Towsley}, D.~S. {Menasché}, and Y.~C.
  {Tay}, ``A utility optimization approach to network cache design,''
  \emph{IEEE/ACM TON}, vol.~27, no.~3, pp. 1013--1027, 2019.

\bibitem{qin2018content}
Z.~Qin, X.~Gan, L.~Fu, X.~Di, J.~Tian, and X.~Wang, ``Content delivery in
  cache-enabled wireless evolving social networks,'' \emph{IEEE TWC}, 2018.

\bibitem{dehghan2015complexity}
M.~Dehghan, A.~Seetharam, B.~Jiang, T.~He, T.~Salonidis, J.~Kurose, D.~Towsley,
  and R.~Sitaraman, ``On the complexity of optimal routing and content caching
  in heterogeneous networks,'' in \emph{IEEE INFOCOM}, 2015, pp. 936--944.

\bibitem{amble2011content}
M.~M. Amble, P.~Parag, S.~Shakkottai, and L.~Ying, ``Content-aware caching and
  traffic management in content distribution networks,'' in \emph{IEEE
  INFOCOM}, 2011.

\bibitem{StratisJSAC2018}
S.~{Ioannidis} and E.~{Yeh}, ``Jointly optimal routing and caching for
  arbitrary network topologies,'' \emph{IEEE JSAC}, vol.~36, no.~6, pp.
  1258--1275, 2018.

\bibitem{chu2016allocating}
W.~Chu, M.~Dehghan, D.~Towsley, and Z.-L. Zhang, ``On allocating cache
  resources to content providers,'' in \emph{ACM ICN}, 2016, pp. 154--159.

\bibitem{gharaibeh2016provably}
A.~Gharaibeh, A.~Khreishah, B.~Ji, and M.~Ayyash, ``A provably efficient online
  collaborative caching algorithm for multicell-coordinated systems,''
  \emph{IEEE TMC}, vol.~15, no.~8, pp. 1863--1876, 2016.

\bibitem{shin2017t}
K.~Shin, C.~Joe-Wong, S.~Ha, Y.~Yi, I.~Rhee, and D.~S. Reeves, ``T-chain: A
  general incentive scheme for cooperative computing,'' \emph{IEEE/ACM TON},
  vol.~25, no.~4, pp. 2122--2137, 2017.

\bibitem{rahimzadeh2017svc}
P.~Rahimzadeh, C.~Joe-Wong, K.~Shin, Y.~Im, J.~Lee, and S.~Ha, ``Svc-tchain:
  Incentivizing good behavior in layered p2p video streaming,'' in \emph{IEEE
  INFOCOM}, 2017, pp. 1--9.

\bibitem{yu2016enhancing}
R.~Yu, S.~Qin, M.~Bennis, X.~Chen, G.~Feng, Z.~Han, and G.~Xue, ``Enhancing
  software-defined ran with collaborative caching and scalable video coding,''
  in \emph{IEEE ICC}, 2016, pp. 1--6.

\bibitem{Lui}
A.~T. Ip, J.~Lui, and J.~Liu, ``A revenue-rewarding scheme of providing
  incentive for cooperative proxy caching for media streaming systems,''
  \emph{ACM TOMM}, vol.~4, no.~1, p.~5, 2008.

\bibitem{maille2015impact}
P.~Maill{\'e}, G.~Simon, and B.~Tuffin, ``Impact of revenue-driven cdn on the
  competition among network operators,'' in \emph{IEEE CNSM}, 2015, pp.
  163--167.

\bibitem{SelfishCaching}
B.~G. Chun, K.~Chaudhuri, H.~Wee, M.~Barreno, C.~H. Papadimitriou, and
  J.~Kubiatowicz, ``Selfish caching in distributed systems: a game-theoretic
  analysis,'' in \emph{ACM PODC}, 2004, pp. 21--30.

\bibitem{MarketSharing}
M.~Goemans, L.~E. Li, V.~S. Mirrokni, and M.~Thottan, ``Market sharing games
  applied to content distribution in ad-hoc networks,'' in \emph{ACM MobiHoc},
  2004, pp. 55--66.

\bibitem{DSR2}
G.~G. Pollatos, O.~A. Telelis, and V.~Zissimopoulos, ``On the social cost of
  distributed selfish content replication,'' in \emph{International Conference
  on Research in Networking}.\hskip 1em plus 0.5em minus 0.4em\relax Springer,
  2008, pp. 195--206.

\bibitem{CSR}
R.~Gopalakrishnan, D.~Kanoulas, N.~N. Karuturi, C.~P. Rangan, R.~Rajaraman, and
  R.~Sundaram, ``Cache me if you can: capacitated selfish replication games,''
  in \emph{Latin American Symposium on Theoretical Informatics}.\hskip 1em plus
  0.5em minus 0.4em\relax Springer, 2012, pp. 420--432.

\bibitem{jiang2018convergence}
B.~Jiang, P.~Nain, and D.~Towsley, ``On the convergence of the ttl
  approximation for an lru cache under independent stationary request
  processes,'' \emph{ACM TOMPECS}, vol.~3, no.~4, p.~20, 2018.

\bibitem{PanigraphyPoisson2018}
N.~K. {Panigrahy}, J.~{Li}, and D.~{Towsley}, ``Network cache design under
  stationary requests: Exact analysis and poisson approximation,'' in
  \emph{IEEE MASCOTS}, 2018, pp. 251--263.

\bibitem{wang2014belief}
Y.-K. Wang, Y.~Yin, and S.~Zhong, ``Belief propagation for spatial spectrum
  access games,'' in \emph{Proceedings of the 15th ACM international symposium
  on Mobile ad hoc networking and computing}.\hskip 1em plus 0.5em minus
  0.4em\relax ACM, 2014, pp. 225--234.

\bibitem{KnapsackBook}
H.~Kellerer, U.~Pferschy, and D.~Pisinger, ``Introduction to np-completeness of
  knapsack problems,'' in \emph{Knapsack problems}.\hskip 1em plus 0.5em minus
  0.4em\relax Springer, 2004, pp. 483--493.

\bibitem{ApproximateNEadditive}
C.~Daskalakis, A.~Mehta, and C.~Papadimitriou, ``Progress in approximate nash
  equilibria,'' in \emph{ACM EC}, 2007, pp. 355--358.

\bibitem{ApproximateNE}
S.~Chien and A.~Sinclair, ``Convergence to approximate nash equilibria in
  congestion games,'' \emph{Games and Economic Behavior}, vol.~71, no.~2, pp.
  315--327, 2011.

\bibitem{AbileneTopology}
A.~Li, X.~Yang, and D.~Wetherall, ``Safeguard: safe forwarding during route
  changes,'' in \emph{PACM CoNEXT}, 2009, pp. 301--312.

\end{thebibliography}

\begin{IEEEbiography}[{\includegraphics[width=1in,height=1.25in,clip,keepaspectratio]{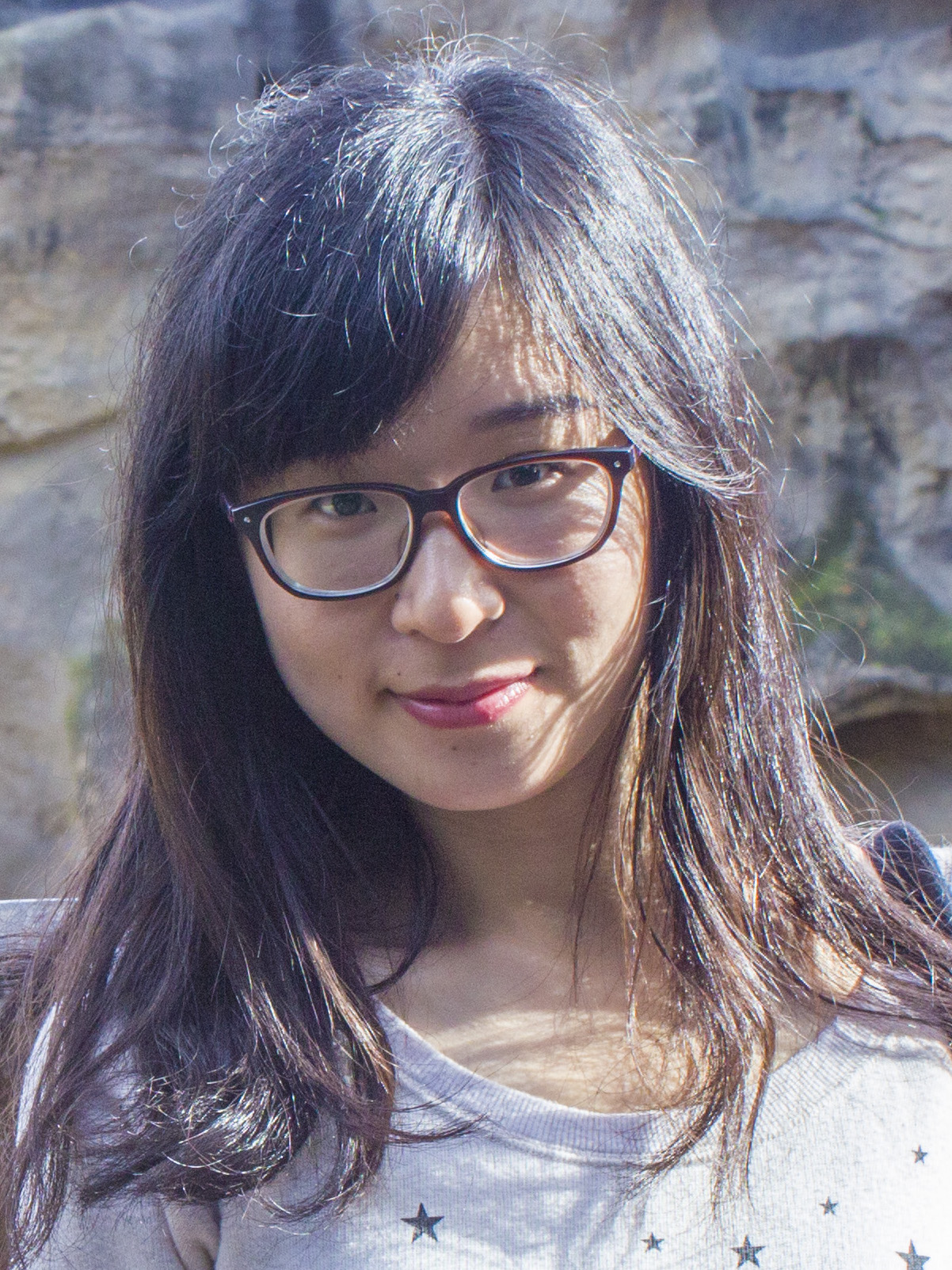}}]{Qian Ma}
is an Associate Professor of School of Intelligent Systems Engineering, Sun Yat-sen University. 
She worked as a Postdoc Research Associate at Northeastern University during 2018-2019. 
She received the Ph.D. degree in the Department of Information Engineering from the Chinese University of Hong Kong in 2017, and the B.S. degree from Beijing University of Posts and Telecommunications (China) in 2012.
Her research interests lie in the field of network optimziation and economics.
She is the recipient of the Best Student Paper Award from the IEEE International Symposium on Modeling and Optimization in Mobile, Ad Hoc and Wireless Networks (WiOpt) in 2015.
\end{IEEEbiography}

\begin{IEEEbiography}[{\includegraphics[width=1in,height=1.25in,clip,keepaspectratio]{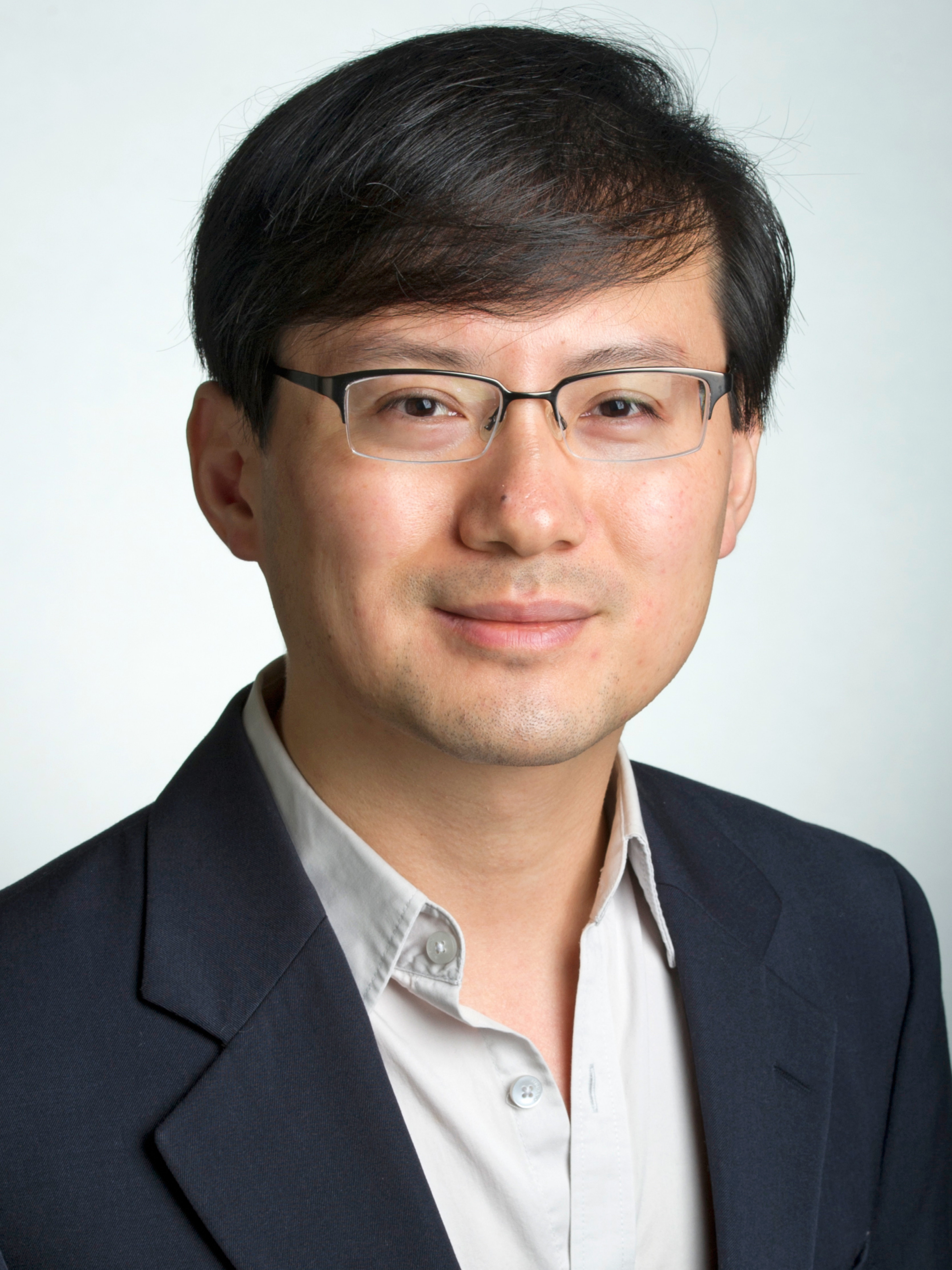}}]{Edmund Yeh}
received the B.S. degree in Electrical Engineering with Distinction and Phi Beta Kappa from Stanford University in 1994, the M. Phil degree in Engineering from Cambridge University on the Winston Churchill Scholarship in 1995, and the Ph.D. in Electrical Engineering and Computer Science from MIT under Professor Robert Gallager in 2001.  He is a Professor of Electrical and Computer Engineering at Northeastern University with a courtesy appointment in Khoury School of Computer Sciences.  He was previously Assistant and Associate Professor of Electrical Engineering, Computer Science, and Statistics at Yale University.  

Professor Yeh is an IEEE Communications Society Distinguished Lecturer.  He serves as TPC Co-Chair for ACM MobiHoc 2021 and served as General Chair for ACM SIGMETRICS 2020.  He is the recipient of the Alexander von Humboldt Research Fellowship, the Army Research Office Young Investigator Award, the Winston Churchill Scholarship, the National Science Foundation and Office of Naval Research Graduate Fellowships, the Barry M. Goldwater Scholarship, the Frederick Emmons Terman Engineering Scholastic Award, and the President's Award for Academic Excellence (Stanford University).  Professor Yeh serves as Treasurer of the Board of Governors of the IEEE Information Theory Society.  He has served as an Associate Editor for IEEE Transactions on Networking, IEEE Transactions on Mobile Computing, and IEEE Transactions on Network Science and Engineering, as Guest Editor-in-Chief of the Special Issue on Wireless Networks for Internet Mathematics, and Guest Editor for IEEE Journal on Selected Areas in Communications - Special Series on Smart Grid Communications. He has received three Best Paper Awards, including awards at the 2017 ACM Conference on Information-Centric Networking (ICN), and at the 2015 IEEE International Conference on Communications (ICC) Communication Theory Symposium.
\end{IEEEbiography}

\begin{IEEEbiography}[{\includegraphics[width=1in,height=1.25in,clip,keepaspectratio]{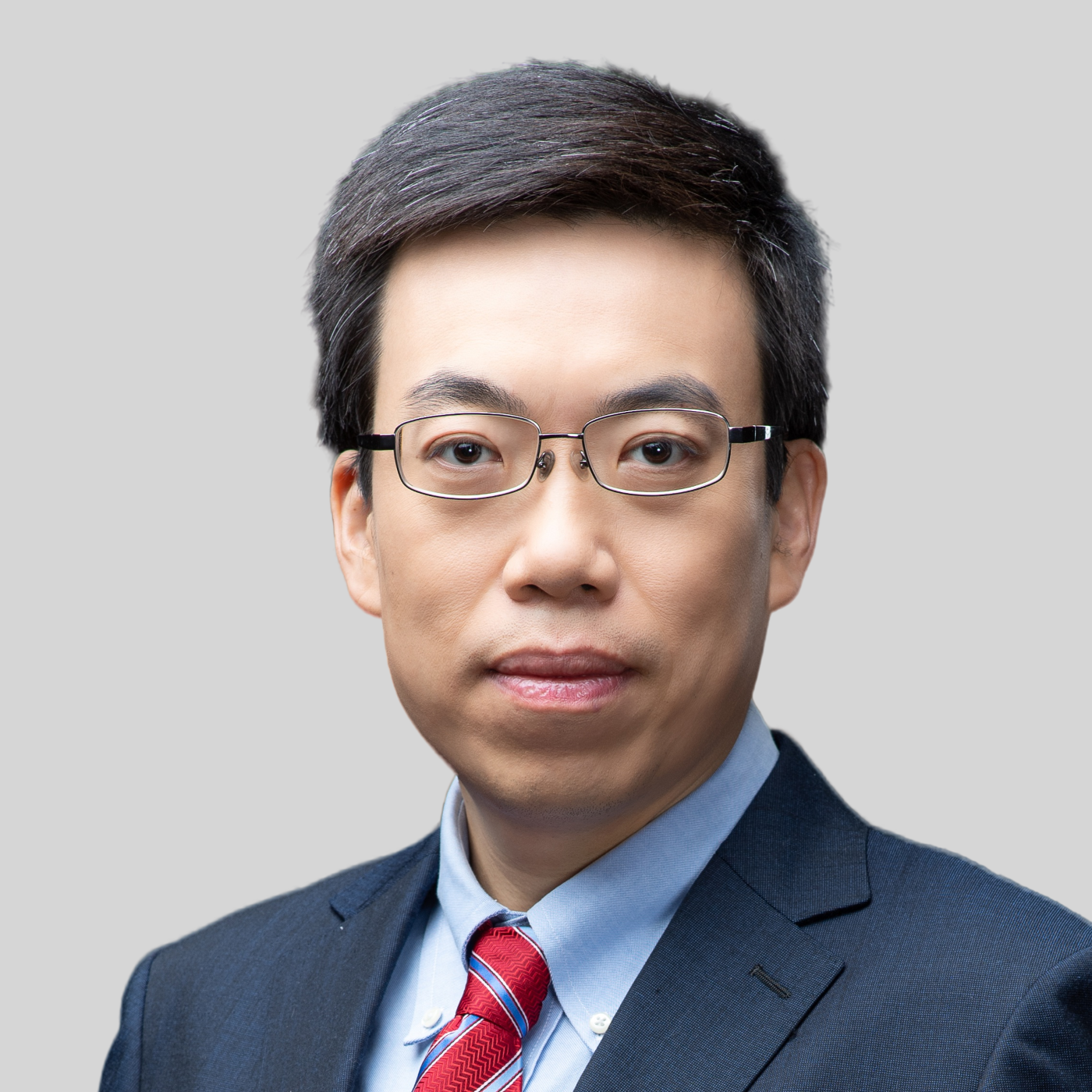}}]{Jianwei Huang}
is a Presidential Chair Professor and the Associate Dean of the School of Science and Engineering, The Chinese University of Hong Kong, Shenzhen. He is also the Vice President of Shenzhen Institute of Artificial Intelligence and Robotics for Society. He received the Ph.D. degree from Northwestern University in 2005, and worked as a Postdoc Research Associate at Princeton University during 2005-2007. He has been an IEEE Fellow, a Distinguished Lecturer of IEEE Communications Society, a Clarivate Analytics Highly Cited Researcher in Computer Science, and Associate Editor-in-Chief of IEEE Open Journal of the Communications Society. He is the incoming Editor-in-Chief of IEEE Transactions on Network Science and Engineering, effective 01/2021. Dr. Huang has published 290 papers in leading international journals and conferences in the area of network optimization and economics, with a total Google Scholar citations of 12,700+ and an H-index of 58. He is the co-author of 9 Best Paper Awards, including IEEE Marconi Prize Paper Award in Wireless Communications in 2011. He has co-authored seven books, including the textbook on "Wireless Network Pricing." He received the CUHK Young Researcher Award in 2014 and IEEE ComSoc Asia-Pacific Outstanding Young Researcher Award in 2009. He has served as an Associate Editor of IEEE Transactions on Mobile Computing, IEEE/ACM Transactions on Networking, IEEE Transactions on Network Science and Engineering, IEEE Transactions on Wireless Communications, IEEE Journal on Selected Areas in Communications - Cognitive Radio Series, and IEEE Transactions on Cognitive Communications and Networking. He has served as the Chair of IEEE ComSoc Cognitive Network Technical Committee and Multimedia Communications Technical Committee. He is the recipient of IEEE ComSoc Multimedia Communications Technical Committee Distinguished Service Award in 2015 and IEEE GLOBECOM Outstanding Service Award in 2010. More detailed information can be found at https://sse.cuhk.edu.cn/en/faculty/huangjianwei.
\end{IEEEbiography}

\appendices

\section{Proof of Theorem \ref{theo:Existence}}\label{app:theo:Existence}

A PSNE of the selfish caching game corresponds to a vertex on the state graph without any outgoing arc, i.e., a sink. 
By showing the existence of a sink on the state graph, we can show the existence of the PSNE of the selfish caching game. 
Since Algorithm \ref{algo:FindNESG} can find a sink on the corresponding state graph, the PSNE of the selfish caching game exists.

\section{Proof of Theorem \ref{theo:ExistencePolytime}}\label{app:theo:ExistencePolytime}

As one player adds a content item to its cache at the first arc of each round, the number of rounds is no greater than $|V||\I|$. 

We next show that each round ends after traversing at most $|\I|(|V|-2)^2$ arcs. 
We focus on one round where player $s$ adds content item $i$. 
The add step will only affect nodes $v\in V$ where $s\in p^{(v,i)}$. 
If $x_{vi}=0$, i.e., $i \notin Z_v$, the add step will not change node $v$'s behavior, as the add step will only decrease the value of content item $i$ to node $v$. 
If $x_{vi}=1$, i.e., $i\in Z_v$, since the add step will decrease the value of content item $i$ to node $v$, node $v$ may replace item $i$ with item $j$. 
Subsequently, the change step where node $v$ replaces item $i$ with item $j$ will lead to the change step where node $u$ (such that $v\in p^{(u,j)}$ and $x_{uj}=1$) replaces item $j$ with item $n$. 
The sequence of change steps caused by node $s$ adding item $i$ ends in at most $|V|-2$ steps, and node $s$ will keep item $i$ in its cache during the sequence of change steps, due to the assumption on no mixed request loop. 
Note that the number of such sequences of change steps is at most $|\I|(|V|-2)$. 
 Hence, each round ends in at most $|\I|(|V|-2)^2$ arcs. 

This proves our bound of $|V||\I|^2(|V|-2)^2$ on the length of the traversal to reach a PSNE.

\section{Proof of Theorem \ref{theo:scalablevalid}}\label{app:theo:scalablevalid}

The first property \eqref{eq:prop21} and \eqref{eq:prop22}, i.e., the social function $G(\cdot)$ is nondecreasing and submodular, is proved in \cite{YehSigmetrics}.

By the definition of our social function, we have 
$$G(\boldsymbol{x})=\sum_{s\in V}g_s(\boldsymbol{x}_s,\boldsymbol{x}_{-s}),$$ 
and therefore the second property \eqref{eq:prop1} is satisfied.

In the following, we prove that the third property \eqref{eq:svug} is satisfied. 
For the social function $G(\cdot)$, we have 
\begin{equation*}
\begin{aligned}
&G(\boldsymbol{x}_s,\boldsymbol{x}_{-s}) - G(\boldsymbol{0},\boldsymbol{x}_{-s}) \\
\overset{(a)}{=}&  \sum_{i\in Z_s} \lambda_{(s,i)} \sum_{k=1}^{|p^{(s,i)}|-1}w_{p_{k+1}p_k} \prod_{k'=2}^k(1-x_{p_{k'}i}) \\
+& \sum_{i\in Z_s} \sum_{\substack{v\in V\setminus \{\S^i, s\}\\s\in p^{(v,i)}}  } \lambda_{(v,i)} \sum_{k=k_{p^{(v,i)}}(s)}^{|p^{(v,i)}|-1}w_{p_{k+1}p_k} \prod_{k'=1}^k(1-x_{p_{k'}i}) 
\end{aligned}
\end{equation*}
\begin{equation*}
\begin{aligned}
\overset{(b)}{=}&  \sum_{i\in Z_s} \lambda_{i}  \sum_{k=1}^{|p^{(s,i)}|-1}w_{p_{k+1}p_k} \prod_{k'=2}^k(1-x_{p_{k'}i}) \cdot \\
& \sum_{v\in V\setminus \S^i} \mathbbm{1}_{ \left\{  v=s \mbox{ or } \left[ s\in p^{(v,i)} \mbox{ and }  \prod_{k'=1}^{k_{p^{(v,i)}}(s)-1}(1-x_{p_{k'}i})=1 \right]  \right\} } \\
\overset{(c)}{\leq} &  \sum_{i\in Z_s} \lambda_{i}  \sum_{k=1}^{|p^{(s,i)}|-1}w_{p_{k+1}p_k} \prod_{k'=2}^k(1-x_{p_{k'}i})  \cdot \left( \max_{v\in V,i\in\I} |p^{(v,i)}| -1 \right)
\end{aligned}
\end{equation*}
Step $(a)$ is from the definition of $G(\cdot)$. 
Step $(b)$ holds under the homogeneous request pattern and path overlap properties. 
Step $(c)$ is due to $\prod_{k'=1}^{k_{p^{(v,i)}}(s)-1}(1-x_{p_{k'}i}) \leq 1$ and $|\{v\in V\setminus \S^i: s\in p^{(v,i)}\}| \leq \max_{v\in V,i\in\I} |p^{(v,i)}| -1$. 
Hence for $g_s(\cdot)$, we have 
\begin{equation*}
\begin{aligned}
& g_s(\boldsymbol{x}_s,\boldsymbol{x}_{-s}) = \sum_{i\in Z_s} \lambda_{(s,i)} \sum_{k=1}^{|p^{(s,i)}|-1}w_{p_{k+1}p_k} \\
&~~~+  \sum_{i\in \I \setminus Z_s} \lambda_{(s,i)} \sum_{k=1}^{|p^{(s,i)}|-1}w_{p_{k+1}p_k} \left( 1-\prod_{k'=2}^k(1-x'_{p_{k'}i}) \right) \\
&  \overset{(d)}{\geq}  \sum_{i\in Z_s} \lambda_{(s,i)} \sum_{k=1}^{|p^{(s,i)}|-1}w_{p_{k+1}p_k} \\
& \overset{(e)}{\geq}   \sum_{i\in Z_s} \lambda_{(s,i)} \sum_{k=1}^{|p^{(s,i)}|-1}w_{p_{k+1}p_k} \prod_{k'=2}^k(1-x_{p_{k'}i}) \\
& \overset{(f)}{\geq}  \frac{1}{\alpha} \cdot \left( G(\boldsymbol{x}_s,\boldsymbol{x}_{-s}) - G(\boldsymbol{0},\boldsymbol{x}_{-s})  \right)
\end{aligned}
\end{equation*}
Step $(d)$ is due to $1-\prod_{k'=2}^k(1-x'_{p_{k'}i}) \geq 0$. 
Step $(e)$ is due to $\prod_{k'=2}^k(1-x_{p_{k'}i}) \leq 1$. 
Step $(f)$ is due to step $(c)$. 
This completes our proof.

\section{Proof of Theorem \ref{theo:PoAalpha}}\label{app:theo:PoAalpha}

To characterize the PoA, the key step is to find the relationship between the Nash equilibria and the socially optimal solution.
Let $\boldsymbol{x}^{\rm NE}$ and $\boldsymbol{x}^\ast$ denote the caching strategy under any Nash equilibria and socially optimal solution, respectively. 
For any Nash equilibria $\boldsymbol{x}^{\rm NE}$, we have
\vspace{-2mm}
\begin{equation*}
\begin{aligned}
G(\boldsymbol{x}^{\rm NE}) & \textstyle \overset{(a)}{=}\sum_{s\in V}g_s(\boldsymbol{x}_s^{\rm NE},\boldsymbol{x}_{-s}^{\rm NE}) \\
& \textstyle  \overset{(b)}{\geq} \sum_{s\in V}g_s(\boldsymbol{x}^\ast_s,\boldsymbol{x}_{-s}^{\rm NE}) \\
&  \textstyle \overset{(c)}{\geq} \sum_{s\in V} \frac{1}{\alpha} \left( G(\boldsymbol{x}^\ast_s,\boldsymbol{x}_{-s}^{\rm NE}) -G(\boldsymbol{0},\boldsymbol{x}_{-s}^{\rm NE}) \right) \\
&  \textstyle \overset{(d)}{=} \frac{1}{\alpha} \sum_{s\in V}\left( G(Z^\ast_s,Z_{-s}^{\rm NE}) -G(\emptyset,Z_{-s}^{\rm NE})  \right) \\
& \textstyle \overset{(e)}{\geq} \frac{1}{\alpha} \sum_{s\in V} ( G(Z^\ast_1,\ldots,Z^\ast_s,Z^{\rm NE}) \\
 &~~~~~~~~~~~~~~~-G(Z^\ast_1,\ldots,Z^\ast_{s-1},Z^{\rm NE}) ) \\
& \textstyle \overset{(f)}{=} \frac{1}{\alpha} \left(  G(Z^\ast_1,\ldots,Z^\ast_{|V|},Z^{\rm NE}) -G(Z^{\rm NE}) \right) \\
& \textstyle \overset{(g)}{\geq} \frac{1}{\alpha} \left(  G(Z^\ast)-G(Z^{\rm NE}) \right) \\
& \textstyle \overset{(h)}{=} \frac{1}{\alpha} \left(  G(\boldsymbol{x}^\ast)-G(\boldsymbol{x}^{\rm NE}) \right)
\end{aligned}
\end{equation*}
Step $(a)$ is from the definition of $G(\cdot)$. 
Step $(b)$ is from the definition of Nash equilibrium. 
Step $(c)$ is due to \eqref{eq:svug}. 
Step $(d)$ is due to the one-to-one correspondence between $\boldsymbol{x}_s$ and $Z_s$. 
Step $(e)$ is due to the submodularity of $G(\cdot)$. 
Step $(f)$ is derived by calculating the summation over all $s\in V$. 
Step $(g)$ is because $G(\cdot)$ is non-decreasing. 
Step $(h)$ is due to the one-to-one correspondence between $\boldsymbol{x}$ and $Z$. 
The relationship between $G(\boldsymbol{x}^{\rm NE})$ and $G(\boldsymbol{x}^\ast)$ leads to 
\begin{equation*}
\frac{G(\boldsymbol{x}^{\rm NE})}{G(\boldsymbol{x}^\ast)} \geq \frac{1}{1+\alpha} = \frac{1}{\max_{v\in V,i\in\I}|p^{(v,i)}|}.
\end{equation*}
This completes our proof.

\section{Proof of Theorem \ref{theo:deltaG}}\label{app:theo:deltaG}

We first introduce two notations. 
Given a strategy profile $Z=\{Z_1,\ldots, Z_{|V|}\}$, let $Z \oplus Z_s'$ denote the new strategy profile where player $s$ changes its strategy from $Z_s$ to $Z_s'$. 
Mathematically, $Z \oplus Z_s'=\{Z_1,\ldots,Z_{s-1},Z_s',Z_{s+1},\ldots, Z_{|V|}\}$. 
Given $Z=\{Z_1,\ldots, Z_{|V|}\}$, we define $Z^s=\{Z_1,\ldots,Z_s, \emptyset_{s+1},\ldots, \emptyset_{|V|}\}$. 

The social function $G(\cdot)$ satisfies the following two lemmas.

\begin{lemma}\label{lemma:Ginequality}\cite{ValidUtilityGame}
For any feasible strategy profile $Z$, the socially optimal solution $Z^\ast$ satisfies:
\begin{equation*}
\begin{aligned}
G(Z^\ast) \leq & G(Z) + \sum_{s:Z_s^\ast \in Z^\ast - Z} G'_{Z_s^\ast}(Z\oplus \emptyset_s) \\
& - \sum_{s:Z_s \in Z-Z^\ast}G'_{Z_s}(Z^\ast \cup Z^{s-1}).
\end{aligned}
\end{equation*}
\end{lemma}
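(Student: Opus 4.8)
The plan is to derive the inequality from the monotonicity and submodularity of $G(\cdot)$ (established in \eqref{eq:prop21}--\eqref{eq:prop22}, which I first extend from single items to arbitrary added sets in the standard one-element-at-a-time fashion) by bridging $Z$ and $Z^\ast$ through the joint profile $Z \cup Z^\ast$. The point of departure is the exact decomposition
\begin{equation*}
G(Z^\ast) = G(Z) + \big[ G(Z \cup Z^\ast) - G(Z) \big] - \big[ G(Z \cup Z^\ast) - G(Z^\ast) \big],
\end{equation*}
after which it remains to upper-bound the first bracket and to rewrite the second bracket exactly as the two sums appearing in the claim.

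First I would handle the ``adding'' bracket by interpolating from $Z$ to $Z \cup Z^\ast$, inserting the optimal components $Z_s^\ast$ one player at a time for $s = 1, \ldots, |V|$. Letting $B_{s-1}$ denote the intermediate profile in which players $1, \ldots, s-1$ play $Z_t \cup Z_t^\ast$ while players $s, \ldots, |V|$ still play $Z_t$, telescoping gives $G(Z \cup Z^\ast) - G(Z) = \sum_{s} G'_{Z_s^\ast}(B_{s-1})$. Two facts finish this bracket: any term with $Z_s^\ast \subseteq Z_s$ has $B_s = B_{s-1}$ and hence contributes zero, so the sum may be restricted to the players with $Z_s^\ast \in Z^\ast - Z$; and for each surviving term the componentwise containment $Z \oplus \emptyset_s \subseteq B_{s-1}$ together with submodularity (a fixed direction has a larger marginal over a smaller base) yields $G'_{Z_s^\ast}(B_{s-1}) \leq G'_{Z_s^\ast}(Z \oplus \emptyset_s)$. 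This reproduces the first sum.

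Next I would treat the ``removing'' bracket by interpolating in the opposite direction, from $Z^\ast$ up to $Z \cup Z^\ast$, now inserting the components $Z_s$. Here the intermediate base is precisely $Z^\ast \cup Z^{s-1}$ and adding $Z_s$ to player $s$ reaches $Z^\ast \cup Z^s$, so $G'_{Z_s}(Z^\ast \cup Z^{s-1}) = G(Z^\ast \cup Z^s) - G(Z^\ast \cup Z^{s-1})$ telescopes \emph{exactly} to $G(Z \cup Z^\ast) - G(Z^\ast)$; no inequality is needed in this bracket. Terms with $Z_s \subseteq Z_s^\ast$ vanish because player $s$ already holds $Z_s^\ast$ in the base, so the sum collapses without loss to the players with $Z_s \in Z - Z^\ast$. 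Substituting the bounded adding sum and the exact removing sum into the opening decomposition yields the stated inequality.

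The step I expect to be most delicate is the bookkeeping of the two index restrictions. I would verify carefully that restricting the adding sum drops only identically-zero increments, so that the restricted sum still equals $G(Z \cup Z^\ast) - G(Z)$ and the submodularity bound is therefore not weakened, and likewise that the removing sum loses nothing, and that the nesting $Z \oplus \emptyset_s \subseteq B_{s-1}$ holds player-by-player. It is worth noting that monotonicity of $G(\cdot)$ is not actually invoked for this particular inequality — the opening equation is an identity, the removing bracket is exact, and the adding bracket uses submodularity alone — although monotonicity is what makes the individual marginal terms nonnegative and is used in the surrounding price-of-anarchy argument.
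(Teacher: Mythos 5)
Your proposal is correct and follows essentially the same route as the paper's proof: telescope from $Z$ up to $Z \cup Z^\ast$ by inserting the $Z_s^\ast$ one player at a time and bound each increment via submodularity by $G'_{Z_s^\ast}(Z \oplus \emptyset_s)$, then expand $G(Z^\ast \cup Z) = G(Z^\ast) + \sum_{s:Z_s \in Z-Z^\ast} G'_{Z_s}(Z^\ast \cup Z^{s-1})$ exactly and combine. The only cosmetic differences are that you compress the paper's two submodularity steps (first lowering the base to $Z$, then to $Z \oplus \emptyset_s$) into a single comparison $Z \oplus \emptyset_s \subseteq B_{s-1}$, and you make explicit the index-restriction bookkeeping that the paper leaves implicit — both of which are sound.
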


\begin{proof}
By the submodularity of $G$, we have
\begin{equation*}
\begin{aligned}
&G(Z^\ast \cup Z)-G(Z) \\
=& \left[ G(Z\cup Z^\ast_1)-G(Z) \right] + \cdots \\
& + \left[ G(Z\cup Z^\ast_1 \cup \cdots \cup Z^\ast_j)-G(Z\cup \cdots \cup Z^\ast_{j-1}) \right] \\
\leq& \left[ G(Z\cup Z^\ast_1)-G(Z) \right] + \cdots + \left[ G(Z\cup Z^\ast_j)-G(Z) \right] \\
=& \sum_{s:Z_s^\ast \in Z^\ast - Z} G'_{Z_s^\ast}(Z) \\
\leq& \sum_{s:Z_s^\ast \in Z^\ast - Z} G'_{Z_s^\ast}(Z \oplus \emptyset_s)
\end{aligned}
\end{equation*}
Furthermore, 
\begin{equation*}
G(Z^\ast \cup Z)=G(Z^\ast) + \sum_{s:Z_s \in Z-Z^\ast}G'_{Z_s}(Z^\ast \cup Z^{s-1}).
\end{equation*}
This completes our proof.
\end{proof}

\begin{lemma}\label{lemma:Galpha}
For the selfish caching game with the homogeneous request pattern and path overlap properties on caching graphs with no mixed request loop, any Nash equilibrium $Z$ and the socially optimal solution $Z^\ast$ satisfy
\begin{equation*}
\textstyle G(Z^\ast) \leq (1+\alpha) G(Z) - \sum_{s\in V} G'_{Z_s}(Z^\ast \cup Z - Z_s).
\end{equation*}
\end{lemma}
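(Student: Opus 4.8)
The plan is to start from the inequality in Lemma \ref{lemma:Ginequality}, which already decomposes the gap between $G(Z^\ast)$ and $G(Z)$ into an ``addition'' term $\sum_{s:Z_s^\ast \in Z^\ast - Z} G'_{Z_s^\ast}(Z\oplus \emptyset_s)$ and a ``removal'' term $-\sum_{s:Z_s \in Z-Z^\ast}G'_{Z_s}(Z^\ast \cup Z^{s-1})$. It then suffices to bound the addition term from above by $\alpha G(Z)$, and to show that the removal term is no larger than $-\sum_{s\in V} G'_{Z_s}(Z^\ast \cup Z - Z_s)$; combining these with the $G(Z)$ already present in Lemma \ref{lemma:Ginequality} produces the claimed coefficient $1+\alpha$.

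For the addition term, I would first observe that $G'_{Z_s^\ast}(Z\oplus \emptyset_s) = G(Z_s^\ast,Z_{-s}) - G(\boldsymbol{0},Z_{-s})$, since adding player $s$'s optimal items $Z_s^\ast$ to the profile in which $s$ caches nothing while the other players follow the equilibrium $Z_{-s}$ yields exactly $(Z_s^\ast,Z_{-s})$. The $\alpha$-scalable valid utility property \eqref{eq:svug} from Theorem \ref{theo:scalablevalid} then gives $G'_{Z_s^\ast}(Z\oplus \emptyset_s) \leq \alpha\, g_s(Z_s^\ast,Z_{-s})$, and the Nash equilibrium condition $g_s(Z_s^\ast,Z_{-s}) \leq g_s(Z_s,Z_{-s})$ lets me replace the deviation payoff by the equilibrium payoff. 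Summing over $s$, using the non-negativity of the caching gains to extend the index set from $\{s:Z_s^\ast\in Z^\ast-Z\}$ to all of $V$, and invoking the identity $G(Z) = \sum_{s\in V} g_s(Z_s,Z_{-s})$ (property \eqref{eq:prop1} holding with equality for our social function), I obtain $\sum_{s:Z_s^\ast \in Z^\ast - Z} G'_{Z_s^\ast}(Z\oplus \emptyset_s) \leq \alpha G(Z)$.

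For the removal term, I would compare the two base profiles $Z^\ast \cup Z^{s-1}$ and $Z^\ast \cup Z - Z_s$. In the former, players $1,\ldots,s-1$ hold $Z_j^\ast \cup Z_j$ and players $s,\ldots,|V|$ hold $Z_j^\ast$, whereas in the latter every player $j\neq s$ holds $Z_j^\ast \cup Z_j$ and player $s$ holds $Z_s^\ast$; hence $Z^\ast \cup Z^{s-1} \subseteq Z^\ast \cup Z - Z_s$. By the submodularity of $G(\cdot)$ in \eqref{eq:prop22}, the marginal value of $Z_s$ over the larger base is no greater, i.e. $G'_{Z_s}(Z^\ast \cup Z - Z_s) \leq G'_{Z_s}(Z^\ast \cup Z^{s-1})$ for every $s$. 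Since the telescoping construction underlying Lemma \ref{lemma:Ginequality} makes the summand vanish whenever $Z_s \subseteq Z_s^\ast$, the equilibrium-side sum over $\{s:Z_s\in Z-Z^\ast\}$ equals the sum over all of $V$, so the term-by-term inequality yields $\sum_{s\in V} G'_{Z_s}(Z^\ast \cup Z - Z_s) \leq \sum_{s:Z_s\in Z-Z^\ast} G'_{Z_s}(Z^\ast \cup Z^{s-1})$. Negating this and combining with the addition bound completes the argument.

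The main obstacle I anticipate is the bookkeeping in the addition term: correctly recognizing $G'_{Z_s^\ast}(Z\oplus \emptyset_s)$ as the single-player marginal $G(Z_s^\ast,Z_{-s}) - G(\boldsymbol{0},Z_{-s})$ so that \eqref{eq:svug} applies, and then justifying the two successive replacements (Nash deviation, followed by extension of the index set via non-negativity of $g_s$) without disturbing the equality $G(Z)=\sum_{s\in V} g_s$. The submodularity step for the removal term is conceptually routine once the inclusion $Z^\ast \cup Z^{s-1} \subseteq Z^\ast \cup Z - Z_s$ is verified, but I would take care to confirm that extending the summation index to all of $V$ introduces only zero terms.
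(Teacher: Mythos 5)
Your proof is correct and follows essentially the same route as the paper's: both start from Lemma \ref{lemma:Ginequality} and combine the $\alpha$-scalability property \eqref{eq:svug}, the Nash condition, and submodularity of $G(\cdot)$. The difference is in the middle bookkeeping. The paper keeps the addition-term sum over $\{s: Z_s \in Z - Z^\ast\}$, rewrites it as $G(Z)-\sum_{s:Z_s\in Z\cap Z^\ast} g_s(Z)$, and applies \eqref{eq:svug} a \emph{second} time to the intersection players so that, after submodularity, their marginals $G'_{Z_s}(Z^\ast\cup Z-Z_s)$ are supplied from the addition side. You instead push the addition term all the way to $\alpha G(Z)$ in one step (extending the index set to $V$ via $g_s\geq 0$) and observe that for players with $Z_s=Z^\ast_s$ the marginal $G'_{Z_s}(Z^\ast\cup Z-Z_s)$ is identically zero, so extending the removal sum to all of $V$ is free; this saves the second application of \eqref{eq:svug} and is slightly cleaner, at the trivial cost of invoking non-negativity of the caching gains. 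You also make explicit the Nash step $g_s(Z^\ast_s,Z_{-s})\le g_s(Z_s,Z_{-s})$, which the paper's step $(a)$ uses but leaves implicit. One caveat worth noting: your vanishing-marginal observation and the inclusion $Z^\ast\cup Z^{s-1}\subseteq Z^\ast\cup Z-Z_s$ both rest on reading $Z^\ast\cup Z-Z_s$ as the profile in which node $s$ holds $Z^\ast_s$ and every node $j\neq s$ holds $Z^\ast_j\cup Z_j$; under the alternative set-difference reading suggested by the paper's curvature notation $\mathcal{I}^{|V|}-Z_s$ (node $s$ holding $Z^\ast_s\setminus Z_s$), both claims can fail when $Z_s\cap Z^\ast_s\neq\emptyset$. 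Since the paper's own step $(e)$ requires the same reading you adopted, this is an ambiguity in the paper's notation rather than a gap in your argument.
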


\begin{proof}
From Lemma \ref{lemma:Ginequality}, we know that
\begin{align*}
&G(Z^\ast) \\
&\leq G(Z) + \sum_{s:Z_s^\ast \in Z^\ast - Z} G'_{Z_s^\ast}(Z\oplus \emptyset_s)\\
& \quad \quad \quad \quad - \sum_{s:Z_s \in Z-Z^\ast}G'_{Z_s}(Z^\ast \cup Z^{s-1}) \\
& \overset{(a)}{\leq} G(Z) + \alpha \cdot \sum_{s:Z_s^\ast \in Z^\ast - Z} g_s(Z)\\
& \quad \quad  \quad \quad - \sum_{s:Z_s \in Z-Z^\ast}G'_{Z_s}(Z^\ast \cup Z^{s-1}) \\
&\overset{(b)}{=} G(Z) + \alpha \cdot \sum_{s:Z_s\in Z-Z^\ast} g_s(Z)\\
& \quad \quad \quad \quad - \sum_{s:Z_s \in Z-Z^\ast}G'_{Z_s}(Z^\ast \cup Z^{s-1}) \\
&\overset{(c)}{=} \textstyle G(Z) + \alpha \cdot \left[ G(Z)- \sum_{s:Z_s\in Z\cap Z^\ast} g_s(Z) \right] \\
&\quad  \textstyle  - \sum_{s:Z_s \in Z-Z^\ast}G'_{Z_s}(Z^\ast \cup Z^{s-1}) \\
&\overset{(d)}{\leq}  \textstyle  G(Z) + \alpha \cdot \left[ G(Z)- \frac{1}{\alpha} \cdot \sum_{s:Z_s\in Z\cap Z^\ast} G'_{Z_s}(Z\oplus \emptyset_s) \right]\\
& \quad  \textstyle - \sum_{s:Z_s \in Z-Z^\ast}G'_{Z_s}(Z^\ast \cup Z^{s-1}) \\
& \overset{(e)}{\leq} \textstyle  (1+ \alpha) G(Z) - \sum_{s:Z_s\in Z\cap Z^\ast} G'_{Z_s}(Z^\ast\cup Z - Z_s)  \\
&\quad  \textstyle - \sum_{s:Z_s \in Z-Z^\ast}G'_{Z_s}(Z^\ast\cup Z - Z_s) \\
& =  \textstyle  (1+ \alpha) G(Z) -\sum_{s \in V} G'_{Z_s}(Z^\ast\cup Z - Z_s)
\end{align*}
Step $(a)$ is due to \eqref{eq:svug}. 
Step $(b)$ is due to the fact that the set of indices where $Z^\ast_s \neq Z_s$ is the same as the set of indices where $Z_s \neq Z^\ast_s$.
Step $(c)$ is due to $G(Z)=\sum_{s\in V}g_s(Z)$. 
Step $(d)$ is due to \eqref{eq:svug}. 
Step $(e)$ is due to the submodularity of $G(\cdot)$. 
This completes our proof.
\end{proof}

Now we are ready to prove Theorem \ref{theo:deltaG}. 
By the submodularity of $G$, any Nash equilibrium $Z$ and the socially optimal solution $Z^\ast$ satisfy
\begin{equation*}
\begin{aligned}
& \textstyle \sum_{s\in V} G'_{Z_s}(Z^\ast \cup Z - Z_s)  \geq \sum_{s\in V} G'_{Z_s}(\mathcal{I}^{|V|} - Z_s) \\
& \textstyle  \geq \sum_{s\in V} G'_{Z_s}(\mathcal{I}^{|V|} - Z_s)\frac{G'_{Z_s}(Z^{s-1})}{G'_{Z_s}(\emptyset)} \\
& \geq \min_{s\in V}\frac{G'_{Z_s}(\mathcal{I}^{|V|} - Z_s)}{G'_{Z_s}(\emptyset)} \sum_{s\in V} G'_{Z_s}(Z^{s-1}) = (1-\delta(G)) G(Z).
\end{aligned}
\end{equation*}
From Lemma \ref{lemma:Galpha}, we know that
\begin{equation*}
\begin{aligned}
G(Z^\ast) &\leq (1+\alpha) G(Z) - \sum_{s\in V} G'_{Z_s}(Z^\ast \cup Z - Z_s)\\
& \leq (1+\alpha) G(Z) - (1-\delta(G)) G(Z)\\
& = (\alpha+\delta(G)) G(Z)  
\end{aligned}
\end{equation*}
This completes our proof.

\begin{figure*}[t] 
\centering 
\begin{minipage}[t]{0.45 \linewidth}
\centering
\includegraphics[width=0.7\textwidth]{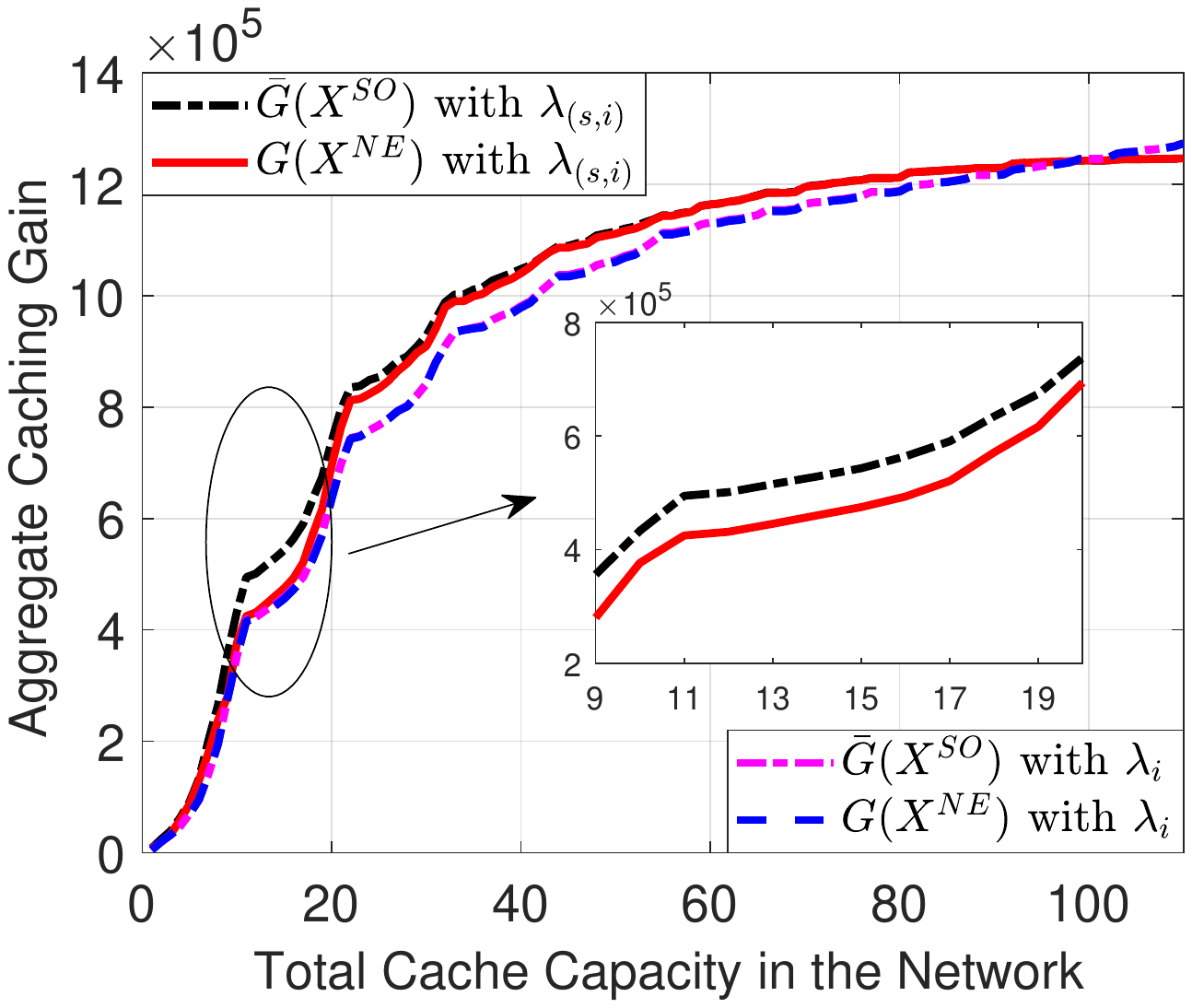}
  \caption{$G(\cdot)$ vs. $\sum_{v\in V}c_v$, under both heterogeneous and homogeneous request patterns.}\label{fig:HomoLambda_Abilene}
\end{minipage}
\begin{minipage}[t]{0.015 \linewidth}
~
\end{minipage}
\begin{minipage}[t]{0.45 \linewidth}
\centering
\includegraphics[width=0.7\textwidth]{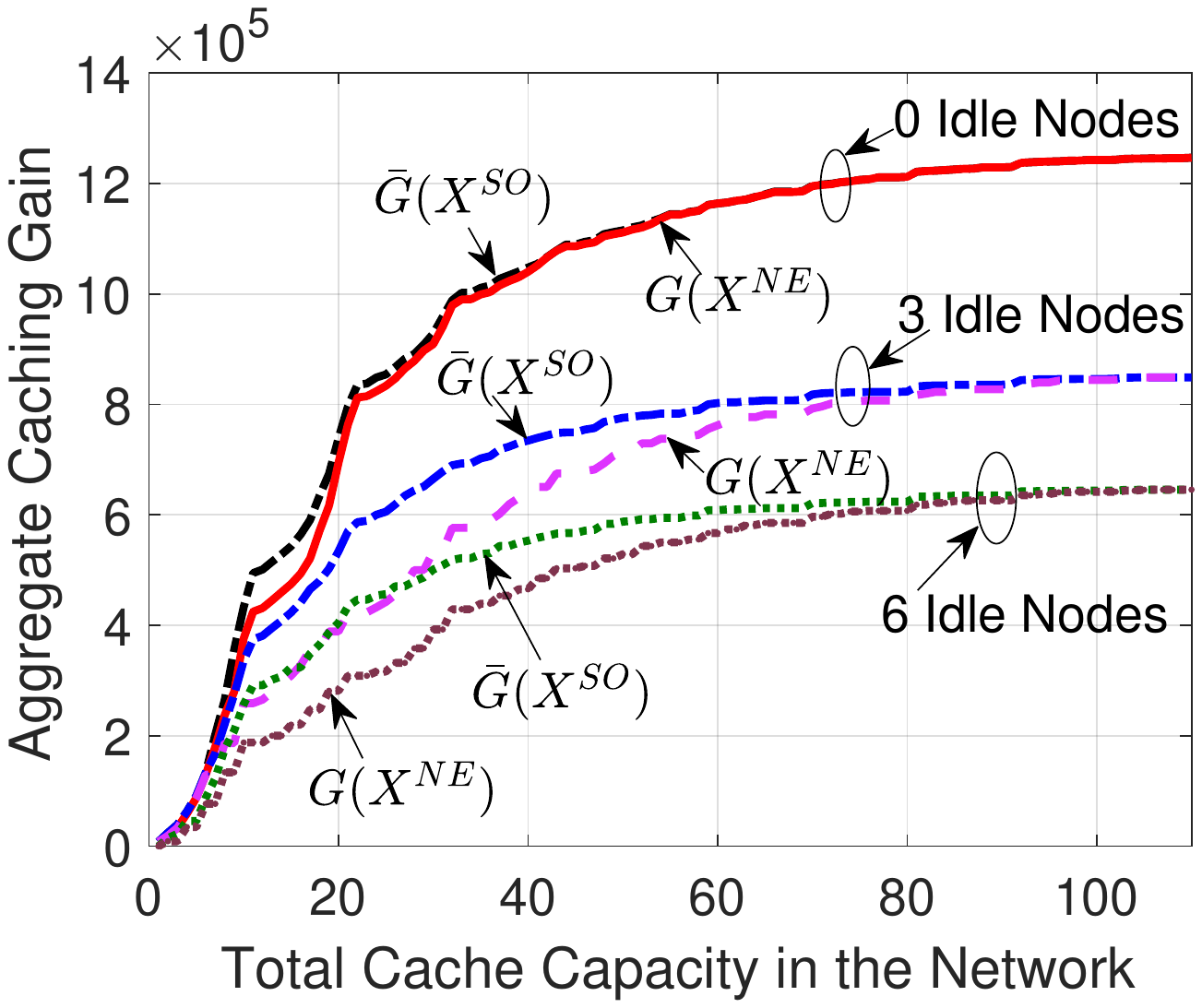}
  \caption{$G(\cdot)$ vs. $\sum_{v\in V}c_v$, under different number of idle nodes.}\label{fig:IdleNodes_Abilene}
\end{minipage}
\end{figure*}

\section{Proof of Theorem \ref{theo:ExistenceBeta}}\label{app:theo:ExistenceBeta}

We show a constructive proof. 
Since Algorithm \ref{algo:FindNECloud} can find a $\beta$-approximate Nash equilibrium, we can prove the existence of the $\beta$-approximate Nash equilibria for Game \ref{gameapp}.

\section{Proof of Theorem \ref{theo:ExistenceAPP}}\label{app:theo:ExistenceAPP}

The procedure to find a $\beta$-approximate Nash equilibrium in Algorithm \ref{algo:FindNECloud} takes at most $|V|-1$ steps. 
In each step, the equilibrium strategy of the particular player can be found by solving problem \eqref{prob:knapsack} with complexity $\mathcal{O}(|\I|)$ (see Section 9.4.2 of \cite{KnapsackBook}). 
Hence, the $\beta$-approximate Nash equilibrium can be found in $\mathcal{O}(|V||\I|)$ time.

\section{Proof of Theorem \ref{theo:PoAapp}}\label{app:theo:PoAapp}

Let $\boldsymbol{x}^{\beta-NE}$ and $\boldsymbol{x}^\ast$ denote the caching strategy under any $\beta$-approximate Nash equilibria and the socially optimal solution, respectively. 
According to the definition of the $\beta$-approximate Nash equilibrium, we have
\begin{equation*}
g_s(\boldsymbol{x}_s^{\beta-NE},\boldsymbol{x}_{-s}^{\beta-NE}) \geq \frac{1}{\beta} \cdot g_s(\boldsymbol{x}^\ast_s,\boldsymbol{x}_{-s}^{\beta-NE}), \forall s\in V.
\end{equation*}
Summing over all $s\in V$, we have
\begin{equation*} 
G(\boldsymbol{x}_s^{\beta-NE},\boldsymbol{x}_{-s}^{\beta-NE}) \geq \frac{1}{\beta} \sum_{s\in V}g_s(\boldsymbol{x}^\ast_s,\boldsymbol{x}_{-s}^{\beta-NE}), \forall s\in V.
\end{equation*}
According to~\eqref{eq:svug} and Theorem \ref{theo:PoAalpha}, we know that
\begin{equation*}
\begin{aligned}
g_s(\boldsymbol{x}^\ast_s,\boldsymbol{x}_{-s}^{\beta-NE}) &\geq \frac{1}{\alpha} \left( G(\boldsymbol{x}^\ast_s,\boldsymbol{x}_{-s}^{\beta-NE}) -G(\boldsymbol{0},\boldsymbol{x}_{-s}^{\beta-NE})  \right) \\
&\geq \frac{1}{\alpha} \left( G(\boldsymbol{x}^\ast) -G(\boldsymbol{x}^{\beta-NE})  \right).
\end{aligned}
\end{equation*}
Hence, 
\begin{equation*}
\begin{aligned}
&G(\boldsymbol{x}_s^{\beta-NE},\boldsymbol{x}_{-s}^{\beta-NE}) \\
\geq & \frac{1}{\beta}  \sum_{s\in V}g_s(\boldsymbol{x}^\ast_s,\boldsymbol{x}_{-s}^{\beta-NE}) \\
\geq & \sum_{s\in V} \frac{1}{\alpha \cdot \beta} \left[ G(\boldsymbol{x}^\ast_s,\boldsymbol{x}_{-s}^{\beta-NE}) -G(\boldsymbol{0},\boldsymbol{x}_{-s}^{\beta-NE}) \right] \\ 
\geq & \frac{1}{\alpha \cdot \beta} \left[ G(\boldsymbol{x}^\ast)-G(\boldsymbol{x}^{\beta-NE}) \right],
\end{aligned}
\end{equation*}
which leads to 
\begin{equation*}
\frac{G(\boldsymbol{x}^{\beta-NE})}{G(\boldsymbol{x}^\ast)} \geq \frac{1}{1+\alpha \cdot \beta} = \frac{1}{1+ \beta \cdot \left( \max_{v\in V,i\in\I}|p^{(v,i)}| -1\right)}.
\end{equation*}
This completes our proof.

\section{Proof of Lemma \ref{lemma:SocialWelfareUB}}\label{app:lemma:SocialWelfareUB}

Since problem \eqref{prob:relaxedG} maximizes the same function as problem \eqref{prob:maxCG} over a larger domain, we have:
$$G(\boldsymbol{\phi}^\ast) \geq G(\boldsymbol{x}^\ast).$$ 
By Goemans-Williamson inequality, we have that: for any sequence of $z_i\in[0,1],i\in\{1,\ldots,n\},$
\begin{equation*}
\left(1-\frac{1}{e}\right)\min\left\{1,\sum_{i=1}^n z_i\right\}  \leq 1-\prod_{i=1}^n (1-z_i) \leq \min\left\{1,\sum_{i=1}^n z_i\right\}
\end{equation*}
So for any feasible $\boldsymbol{\phi}$, we have:
$$ \left( 1-\frac{1}{e} \right) L(\boldsymbol{\phi}) \leq G(\boldsymbol{\phi}) \leq L(\boldsymbol{\phi}) . $$
Hence, we have
$$L(\boldsymbol{\phi}^\ast) \geq G(\boldsymbol{\phi}^\ast).$$ 
By the optimality of $\boldsymbol{\phi}^{\ast\ast}$ to problem \eqref{prob:maxL}, we have
$$L(\boldsymbol{\phi}^{\ast\ast}) \geq L(\boldsymbol{\phi}^\ast) . $$
This completes our proof.

\section{Simulations for Different Cache Capacities}\label{app:newFigs}

We perform simulations on how the aggregate caching gains $G(\boldsymbol{x}^{\rm NE})$ and $\bar{G}(\boldsymbol{x}^{\rm SO})$ change with the total cache capacity in the Abilene network. 
We start from the state where the cache capacity of each node is zero. 
In each trial of the simulation, we add one unit of cache capacity to one node (staring from node 1 to node 11). 
For example, when the total cache capacity is 3, we have $c_v=1$ for $v=1,2,3$, and $c_v=0$ for $v=4,5,\ldots, 11$.  
When the total cache capacity is 28, we have $c_v=3$ for $v=1,\ldots,6$, and $c_v=2$ for $v=7,\ldots, 11$. 

Figure \ref{fig:HomoLambda_Abilene} shows the aggregate caching gains $G(\boldsymbol{x}^{\rm NE})$ and $\bar{G}(\boldsymbol{x}^{\rm SO})$ under different total cache capacities in the network, for the case with heterogeneous request patterns $\lambda_{(s,i)}$ (the upper two curves) and for the case with homogeneous request patterns $\lambda_{(s,i)}=\lambda_i, \forall s\in V, i\in \mathcal{I}$ (the lower two curves), respectively. 
We can see that the gap between $G(\boldsymbol{x}^{\rm NE})$ and $\bar{G}(\boldsymbol{x}^{\rm SO})$ under homogeneous request patterns $\lambda_i$ is smaller than the gap under heterogeneous request patterns $\lambda_{(s,i)}$. 
This is consistent with the observation that we obtained from Figure \ref{fig:HomoLambda_Abilene10} in the paper.

In practice, some cache nodes are intermediate routers which do not request for any content items, i.e., idle nodes. 
We show the impact of the number of idle nodes in Figure \ref{fig:IdleNodes_Abilene}. 
We can see that the gap between $G(\boldsymbol{x}^{\rm NE})$ and $\bar{G}(\boldsymbol{x}^{\rm SO})$ decreases with the total cache capacity in the network, while the gap increases with the number of idles nodes. 
This is consistent with the observations that we obtained from Figure \ref{fig:IdleNodes_Abilene10} in the paper. 
This implies that the impact of the selfish behaviors is mitigated when the cache resource increases, and the selfish behaviors of idle nodes degrade the (relative) performance of Nash equilibria (since the selfish idle nodes will not cache content items at equilibrium).

\end{document}